\newsavebox{\measure@tikzpicture}
  \def\tikz@width{#1}%
  \def\tikzscale{1}\begin{lrbox}{\measure@tikzpicture}%
  \edef\tikzscale{\pgfmathresult}%
\begin{document}

\newtheorem{theorem}{Theorem}
\newtheorem{corollary}{Corollary}
\newtheorem{lemma}{Lemma}
\newtheorem{principle}{Principle}
\newtheorem{definition}{Definition}
\newtheorem{proposition}{Proposition}
\newtheorem{hypothesis}{Hypothesis}
\newtheorem{conjecture}{Conjecture}

\theoremstyle{remark}
\newtheorem{attack}{Attack}{\itshape}{\normalfont}
\newtheorem{measure}{Measure}{\bfseries}{\normalfont}
\newtheorem{assumption}{Assumption}{\itshape}{\normalfont}
\newtheorem{example}{Example}{\itshape}{\normalfont}
\newtheorem{fact}{Fact}{\itshape}{\normalfont}

\newtheorem{design}{Design}{\bfseries}{\normalfont}
\newtheorem{notation}{Notation}{\bfseries}{\normalfont}

\newcommand{\removed}[1]{}

\removed{
\spnewtheorem{term}{Terminology}{\bfseries}{\itshape}
\spnewtheorem{assumption}{Assumption}{\bfseries}{\itshape}
\spnewtheorem{principle}{Principle}{\bfseries}{\itshape}
\spnewtheorem{goal}{Goal}{\bfseries}{\itshape}
\spnewtheorem{attack}{Attack}{\itshape}{\normalfont}
\spnewtheorem{prattack}{Attack on Protocol}{\itshape}{\normalfont}
\spnewtheorem{hlattack}{Overall Attack}{\itshape}{\normalfont}
\spnewtheorem{measure}{Measure}{\itshape}{\normalfont}
\spnewtheorem{erule}{Rule}{\bfseries}{\normalfont}
}

\newcommand{\RGU}{\mathit{R_\mathit{GU}}}
\newcommand{\RAC}{\mathit{R_\mathit{AC}}}

\newcommand{\mcR}{\mathcal{R}}
\newcommand{\dom}{\mathit{dom}}
\newcommand{\nonDom}{\neg \dom}
\newcommand{\ACs}{\mathit{ACs}}

\newcommand{\LTKCcat}{LTKC-CAT}
\newcommand{\LTKCpcard}{LTKC-pcard}
\newcommand{\LTKCgcard}{LTKC-gcard}
\newcommand{\LTKCscard}{LTKC-pcard}
\newcommand{\LTKClcard}{LTKC-gcard}

\newcommand{\Op}{\mathit{Op}}
\newcommand{\statesG}{\mathit{AStates}}
\newcommand{\statesA}{\mathit{AStates}}
\newcommand{\cs}{\mathit{cs}}

\newcommand{\cid}{\mathit{cid}}
\newcommand{\DP}{\mathit{DP}}
\newcommand{\mcP}{\mathcal{P}}
\newcommand{\mcU}{\mathcal{U}}

\newcommand{\att}{\mathit{att}}
\newcommand{\okay}{\mathit{okay}}
\newcommand{\none}{\mathit{none}}
\newcommand{\undef}{\mathit{undef}}
\newcommand{\mmax}{\mathit{max}}
\newcommand{\mmin}{\mathit{min}}
\newcommand{\mstop}{\mathit{stop}}
\newcommand{\mwait}{\mathit{wait}}

\newcommand{\mmX}{\mathit{mism\_x}}
\newcommand{\mmLoc}{\mathit{mism\_l}}
\newcommand{\mmSer}{\mathit{mism\_s}}
\newcommand{\mmLocSer}{\mathit{mism\_ls}}

\newcommand{\OP}{\mathit{OP}}
\newcommand{\NOP}{N_\mathit{OP}}
\newcommand{\NGU}{N_\mathit{GU}}

\newcommand{\actor}{\mathit{actor}}
\newcommand{\Rx}{\mathit{\delta_x}}
\newcommand{\Cy}{\mathit{\gamma_y}}
\newcommand{\cst}{\mathit{st}}
\newcommand{\GC}{\mathit{GC}}
\newcommand{\IA}{\mathit{IA}}
\newcommand{\GI}{\mathit{GI}}
\newcommand{\GpA}{\mathit{G'A}}
\newcommand{\card}{\mathit{card}}
\newcommand{\phone}{\mathit{phone}}
\newcommand{\keyReveal}{\mathit{key\_reveal}}
\newcommand{\unaware}{\mathit{unaware}}
\newcommand{\fail}{\mathit{fail}}
\newcommand{\dataMismatch}{\mathit{data\_mismatch}}
\newcommand{\mismatch}{\mathit{mismatch}}
\newcommand{\mv}{\mathit{mv}}

\newcommand{\nfcl}[2]{#1{\leftrightarrow}#2}
\newcommand{\tapl}[2]{#1\,{\approx}\,#2}
\newcommand{\reader}{\mathit{reader}}
\newcommand{\Wx}{\tau_\mathit{x}}
\newcommand{\Gx}{\mathit{Gx}}
\newcommand{\Cx}{\mathit{Cx}}
\newcommand{\Wy}{\mathit{\tau_y}}

\newcommand{\partner}{\mathit{partner}}
\newcommand{\Gini}{\gamma_\mathit{ini}}
\newcommand{\Gmid}{\gamma_\mathit{mid}}
\newcommand{\Gfin}{\gamma_\mathit{fin}}
\newcommand{\Amid}{\alpha_\mathit{mid}}
\newcommand{\Aac}{\alpha_\mathit{ac}}
\newcommand{\Cac}{\sigma_\mathit{ac}}
\newcommand{\Wac}{\tau_\mathit{ac}}
\newcommand{\Gac}{\mathit{G}_\mathit{ac}}
\newcommand{\Winiac}{\tau^\mathit{ini}_\mathit{ac}}
\newcommand{\Giniac}{\gamma^\mathit{ini}_\mathit{ac}}
\newcommand{\Ciniac}{\sigma^\mathit{ini}_\mathit{ac}}
\newcommand{\Wini}{\tau_\mathit{ini}}
\newcommand{\Wfin}{\tau_\mathit{fin}}
\newcommand{\Wmid}{\tau_\mathit{mid}}
\newcommand{\Cini}{\sigma_\mathit{ini}}
\newcommand{\Cfin}{\sigma_\mathit{fin}}
\newcommand{\Cmid}{\sigma_\mathit{mid}}
\newcommand{\msgr}{\mathit{msg\mbox{-}r}}
\newcommand{\msgs}{\mathit{msg\mbox{-}s}}

\newcommand{\pos}{\mathit{pos}}
\newcommand{\CLS}{$_\mathit{\!CLS}$}
\newcommand{\bDH}{\mathit{DH}}
\newcommand{\GU}{\mathit{GU}}
\newcommand{\GO}{\mathit{GM}}
\newcommand{\GM}{\mathit{GM}}
\newcommand{\AO}{\mathit{AO}}
\newcommand{\SO}{\mathit{SO}}
\newcommand{\ST}{\mathit{ST}}
\newcommand{\SE}{\mathit{SE}}
\newcommand{\cert}{\mathit{cert}}
\newcommand{\wifi}{\mathit{wifi}}

\newcommand{\LTK}{\mathit{LTK}}
\newcommand{\ltk}{\mathit{ltk}}
\newcommand{\TKE}{{\bfseries TKE}}
\newcommand{\CRA}{{\bfseries CRA}}
\newcommand{\RCD}{{\bfseries RCD}}

\newcommand{\nread}{\mathit{read}}
\newcommand{\IMac}{\mathit{\scriptsize IMac}}
\newcommand{\ICrypto}{\mathit{\scriptsize ICrypt}}

\newcommand{\mcT}{\mathcal{T}}
\newcommand{\mcF}{\mathcal{F}}
\newcommand{\mcH}{\mathcal{H}}
\newcommand{\mcA}{\mathcal{A}}
\newcommand{\mcG}{\mathcal{G}}
\newcommand{\transcript}{\mathit{transcript}}
\newcommand{\KDE}{\mathit{KDE}}
\newcommand{\KDF}{\mathit{KDF}}
\newcommand{\type}{\mathit{type}}
\newcommand{\ID}{\mathit{ID}}
\newcommand{\SM}{\mathit{SM}}
\newcommand{\devID}{\mathit{devID}}
\newcommand{\devType}{\mathit{devType}}
\newcommand{\ssid}{\mathit{ssid}}
\newcommand{\freq}{\mathit{freq}}
\newcommand{\ccar}{\longrightarrow_C}
\newcommand{\cnfc}{\longrightarrow_N}
\newcommand{\cwlan}{\longrightarrow_W}
\newcommand{\AP}{\mathit{AP}}
\newcommand{\MU}{\mathit{MU}}

\newcommand{\key}{\mathit{key}}
\newcommand{\msg}{\mathit{msg}}
\newcommand{\params}{\mathit{par}}
\newcommand{\sign}{\mathit{sign}}
\newcommand{\tsa}{\mathit{tsa}}
\newcommand{\pcA}{\mathit{pcA}}
\newcommand{\pcG}{\mathit{pcG}}
\newcommand{\pcGS}{\mathit{pcGS}}
\newcommand{\loc}{\mathit{loc}}
\newcommand{\mac}{\mathit{mac}}
\newcommand{\sig}{\mathit{sig}}
\newcommand{\LTKC}{\mathit{LTKC}}

\newcommand{\AC}{\mathit{AC}}
\newcommand{\GUs}{\mathit{GUs}}
\newcommand{\GUdp}{\mathit{GU\mbox{-}disp}}
\newcommand{\GUph}{\mathit{GU\mbox{-}phyc}}
\newcommand{\schannel}{\mathit{sconn}}
\newcommand{\GUlo}{\mathit{GU\mbox{-}logc}}
\newcommand{\AClo}{\mathit{AC\mbox{-}logc}}

%
\title{Security and Resilience for TAGA: 
       a Touch and Go Assistant in the Aerospace Domain}

\author{\IEEEauthorblockN{Sibylle Fr{\"o}schle\IEEEauthorrefmark{1},
Martin Kubisch\IEEEauthorrefmark{3}, 
Marlon Gr{\"a}fing\IEEEauthorrefmark{5} 
\IEEEauthorblockA{\IEEEauthorrefmark{1}OFFIS Institute 
of Informatics
\& University of Oldenburg, Oldenburg,
Germany, 
Email: sibylle.froeschle@offis.de}
\IEEEauthorblockA{\IEEEauthorrefmark{3}Airbus, Munich, Germany,
Email: martin.kubisch@airbus.com}
\IEEEauthorblockA{\IEEEauthorrefmark{1}OFFIS Institute of Informatics,
Oldenburg, Germany, 
Email: marlon.graefing@offis.de}
}}

\maketitle

\begin{abstract}
There is currently a drive in the aerospace domain to introduce 
machine to machine communication over wireless networks to 
improve ground processes at airports such as refuelling and
air conditiong. To this end a session key has to be established 
between the aircraft and the respective ground unit such as a fuel 
truck or a pre-conditiong unit. This is to be provided by
a `touch and go assistant in the aerospace domain' (TAGA), which 
allows an operator to pair up a ground unit and an aircraft 
present at a parking slot with the help of a NFC system.    
In this paper, we present the results of our security analysis
and co-development of requirements, security concepts, and
modular verification thereof. We show that by, and only by,
a combination of advanced security protocols and local
process measures we obtain secure and resilient designs
for TAGA. 
In particular, the design of choice is fully resilient against
long-term key compromises and parallel escalation
of attacks.



\end{abstract}


%
\IEEEpeerreviewmaketitle



\section{Introduction}

Machine to machine (M2M) communication over wireless networks 
is increasingly adopted to improve speed, efficiency and accuracy
of service and maintenance processes at airports, ports, and manufacturing
plants. This does not come without security challenges: often these 
processes are safety-critical, and often, multi-instance attacks against
them would disrupt critical infrastructures. 
One example are the ground processes at an airport.
When an airplane has landed and reached its parking slot at the
apron many processes such as refuelling and air conditioning are performed. 
These processes determine the turnaround time, i.e.\ the time an airplane
needs to remain parked at the apron. Most ground operations that require a 
control loop are currently based on human-to-human communication, 
i.e.\ an operator at the ground machine and another in the aircraft.
M2M communication between the ground unit and the aircraft 
will simplify such ground operations. Moreover, the use of wireless 
connections is a must-have
due to the harsh conditions of the apron and increased flexibility.
 
Hence, there is ongoing activity on how to integrate secure M2M communication
between ground units and an aircraft during turnaround. It is planned
to use IEEE 802.11 WLAN as wireless channel, and to secure the data
exchanged by AES-CCMP analogously to WPA2. However, the challenge 
remains of how to set up a session key securely in this setting
where one party is mobile, the other is at different
locations, and there is no notion of global trust.
   
A use case within the project XXX is currently underway that
advocates a \emph{touch and go assistant in the aerospace domain (TAGA)}
to solve this challenge. 
The idea behind TAGA is that a key can be established
by pairing a ground unit and an aircraft both present at the same parking
slot based on their proximity --- a bit similar to pairing a bluetooth
device to a vehicle infotainment system but over a greater distance and
truly secure. To this end, each aircraft and ground unit is to be 
equipped with a TAGA controller that contains a secure element for
cryptographic operations and an NFC reader (accessible from an outside 
control panel). Moreover, the operator of each ground unit is to 
be provided with a passive NFC card. Altogether, this allows them 
to transport messages for key establishment from the ground unit, 
to the aircraft, and back
by means of taps with the NFC card against the respective NFC reader.
The `TAGA walk' can conveniently be integrated into the operator's
usual path to the aircraft and back while connecting up 
the respective supply hose. 

\removed{
There are two key differences between TAGA and related scenarios:
In contrast to local pairing of infotainment devices, TAGA is  
safety-critical and the design must be amenable to formal
verification to meet current safety and security norms of the aerospace
domain.
In contrast to security-critical NFC applications such as card
payment TAGA takes place in a secure zone that is established
around the airplane during takeover, to which only authorized
personnel have access. 
}

The advantage of such \emph{local} key establishment is that there is
no need to exchange the identities of the airplane and ground units
beforehand, and that there is no need for a central trusted party that provides
for key management. Thereby the flexibility
and trust assumptions of the current paper- and human-based ground processes
can be maintained. However, in contrast to local pairing of infotainment 
devices, TAGA is  
safety-critical and the design must be amenable to formal
verification to meet current safety and security norms in the aerospace
domain.

\removed{
The local paradigm is to av flexibilty in that there is no need
to exchange identities of the plane and the GU to be paired,
and also security in that the process is not dependent
on central key management. and fast availablity of revocation.}

\paragraph*{Our Contribution}
We have conducted a security analysis and co-developed 
security and resilience requirements, security concepts, and 
a modular verification
pattern of how to establish that the requirements 
are met by a security concept. We have also provided informal
proofs of all our results. To our knowledge this is the first systematic 
analysis, design space exploration, and verification method for 
local key establishment. We hope the results 
can also serve as a blueprint for other settings of M2M communication.
In more detail, our contributions are: 

(1) We concisely define the TAGA use case, and show how it
can be integrated into current processes for air conditioning
and fuelling. We make precise which overall guarantee the 
setup of a ground service with M2M communication must 
meet. We also provide an interface to safety analysis by
identifying five categories of attacks, what potential safety
impact they can cause, and which preconditions concerning key
establishment the attacker has to reach to mount them.
Throughout it becomes clear: While, as usual, it is crucial 
that the established key remains secret from the attacker 
a new challenge critical to key establishment for M2M communication 
is that the key is indeed
shared between the parties that are (or will be) physically connected.

(2) We explore whether we can obtain secure \emph{plain} TAGA,
where the underlying key establishment (KE) protocol is unauthenticated,
and hence, does not require any PKI.
It is well-known that such protocols cannot be secure against
active adversaries but that, like the Diffie-Hellman key exchange,
they can guarantee security in the presence of passive adversaries.
Hence, the challenge lies in whether we can implement the TAGA transport
in a way so that it provides an authentic message exchange. We show that
this is indeed possible but it comes at a cost, which involves
key management local to the airport, distance
bounding the communications of the readers, and a high dependence on
ground staff. 
On the positive side, our designs for plain TAGA come with the advantage 
that they ensure resilience against parallel escalation of attacks. 

(3) We investigate \emph{authenticated} TAGA (a-TAGA), which is based
on an authenticated KE (AKE) protocol and PKIs governed locally by 
airports, and airlines respectively.  
It is well-known that AKE protocols can securely establish a key in 
the presence of active adversaries. We show that it is straightforward
to implement secure a-TAGA when we relax the definition of security
in a way that only guarantees \emph{location agreement} rather than
alignment with the operator's TAGA walk. The new definition still
enables the correct setup of ground processes.
However, we are faced with another challenge now:
Our setting requires that a-TAGA must be highly resilient against 
long-term key compromises (LTKC).
We identify three notions of LTKC resilience and 
show how each can be met by a combination of advanced properties of AKE
protocols and local process measures. 

Altogher, we obtain several designs that are secure and 
resilient. The design of choice combines 
a-TAGA with an authentic
local channel, and is fully resilient against LTKCs and parallel escalation
of attacks. The remainder of the paper is structured according 
to the three parts. Related work can be found in App.~\ref{a:related},
and the notation we use for KE protocols is explained in 
App.~\ref{app:prot:props}. The proofs are provided in the remainder
of the appendix.

\section{Ground Services with M2M Communication}

\subsection{The TAGA Use Case}
\subsubsection{Actors and Setting}

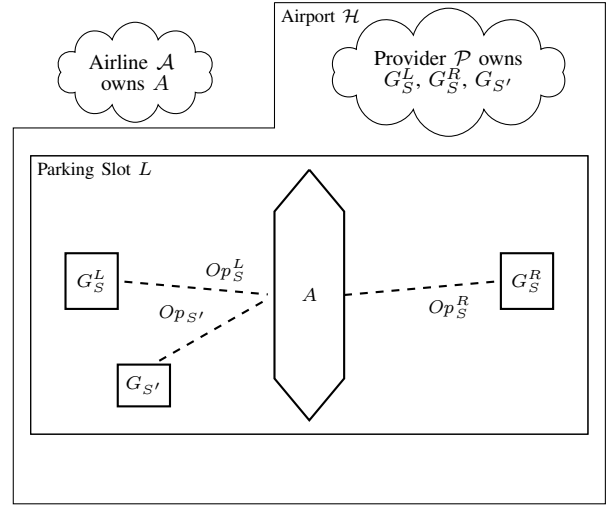
\begin{figure}
\centering{
\begin{scaletikzpicturetowidth}{\columnwidth}
\begin{tikzpicture}[scale=\tikzscale,
yscale=0.8] 

{\scriptsize

\draw (-4.7,0) node {};
\draw (4.7,0) node {};

\draw (-0.5,5.25) -- (-0.5,3) -- (-4.25,3) -- (-4.25,-3.75) 
  -- (4.25,-3.75) -- (4.25,5.25) -- (-0.5,5.25);

\node [below right] at (-0.5,5.25) {Airport $\mcH$};

\node [cloud, draw,cloud puffs=10,cloud ignores aspect,
align=center] 
at (2,4)
{{\footnotesize Provider $\mcP$ owns} \\
  {\footnotesize $G^L_S$, $G^R_S$, $G_{S'}$}};

\node [cloud, draw,cloud puffs=10,cloud ignores aspect, align=center] 
at (-2.5,4)
{{\footnotesize Airline $\mcA$ } \\ 
  {\footnotesize owns $A$ }};

\draw[thick] (0,2.25) -- (0.5,1.5) -- (0.5,-1.5) -- (0,-2.25)
   -- (-0.5,-1.5) -- (-0.5,1.5) -- (0,2.25);
\node at (0,0) {$A$};

\draw[thick] (-3.5,-0.25) rectangle (-2.75,0.75);
\node at (-3.125,0.25) {$G^L_S$};
\draw[thick] (3.5,-0.25) rectangle (2.75,0.75);
\node at (3.125,0.25) {$G^R_S$};

\draw[thick] (-2.75,-2) rectangle (-2,-1.25);
\node at (-2.375,-1.625) {$G_{S'}$};

\node at (-0.5,0) (accontr) {};

\node at (-2.75,0.25) {} edge[thick,dashed] node[auto] 
    {$\Op^L_S$}  (accontr);
\node at (2.75,0.25) {} edge[thick,dashed] node[auto] {$\Op^R_S$}  (0.5,0);

\node at (-2.25,-1.25) {} edge[thick,dashed] node[auto] {$\Op_{S'}$} 
(accontr);

\draw (-4,-2.5) rectangle (4,2.5);
\draw (-4,-2.5) rectangle (4,2.5);

\node[below right] at (-4,2.5) {Parking Slot $L$};

}
\end{tikzpicture}
\end{scaletikzpicturetowidth}

}
\caption{\label{fig:taga:actors} TAGA Actors} 
\end{figure}

%

TAGA pairing takes place when an \emph{aircraft (AC)} $A$ has its 
turnaround at an \emph{airport $\mcH$}. 
Then $A$ is parked at a \emph{parking slot} $L$ of $\mcH$, 
and a secure zone is established around $A$, to which only authorized
personnel have access. When a \emph{ground unit (GU)} is to
provide a \emph{ground service} $S$ to $A$
then the \emph{operator} $\Op$ of $G$ carries out TAGA pairing as 
part of setting up the GU for the
service. To this end, each AC and GU is equipped with a NFC reader,
and the operator holds a passive NFC card. By
means of NFC taps with the card the operator can transfer messages for 
key establishment from the GU to the AC and back.
Once TAGA pairing is successfully concluded $A$ and $G$ share a session 
key for secure M2M communication on service $S$ over WLAN.

One AC can perform the TAGA process with several GUs 
in parallel, and one service might require several GUs.
However, each card, operator, and GU are involved 
in at most one TAGA process at a time. 
Each AC is owned by an \emph{airline $\mcA$}, and each GU is
governed by an \emph{airport $\mcH$}.  It is usual that an airport 
$\mcH$ does not operate the ground services itself, but has 
entrusted an \emph{airport service provider} $\mcP$ with 
the handling thereof.  
Each AC and GU is equipped with a controller that computes
the TAGA functionality. The controller is equipped with the NFC reader
and a secure element, which supports secure key storage and cryptographic 
operations for key management. 

\subsubsection{TAGA Pairing Process}

\begin{figure}
\centering{
\begin{scaletikzpicturetowidth}{\columnwidth}
\begin{tikzpicture}[scale=\tikzscale,
yscale=0.75,
nfc/.style={->,>=stealth,shorten >=0.025cm},
wlan/.style={->,>=stealth,shorten >=0.025cm,dashed},
point/.style={radius=0.05}, 
dot/.style={circle,draw=black,fill=black, radius=0.025}]
{\scriptsize


\draw (-4,1) node[above] {GU $G$};
\draw (0,1) node[above] {$\Op$ with card $C$};
\draw (4,1) node[above] {AC $A$};

\draw (5,0) node[above] {};
\draw (-5,0) node[above] {};

\draw (0,1) -- (0,-8);
\draw (-4,1) -- (-4,-10.5);
\draw (4,1) -- (4,-10.5);

\coordinate (G1) at (-4,0.25);
\coordinate (G2) at (-4,-1);
\coordinate (G3) at (-4,-7);
\coordinate (G4) at (-4,-8);
\coordinate (G5) at (-4,-9);
\coordinate (G6) at (-4,-10);

\coordinate (O1) at (0,-1);
\coordinate (O2) at (0,-2);
\coordinate (O3) at (0,-3);
\coordinate (O4) at (0,-5);
\coordinate (O5) at (0,-6);
\coordinate (O6) at (0,-7);

\coordinate (A1) at (4,-3);
\coordinate (A2) at (4,-3.75);
\coordinate (A3) at (4,-5);
\coordinate (A4) at (4,-6);
\coordinate (A5) at (4,-9);
\coordinate (A6) at (4,-10);

\draw[nfc] (G2) -- (O1);
  \coordinate (G2O1) at (-2,-1);
\draw[nfc] (O3) -- (A1);
  \coordinate (O3A1) at (2,-3);
\draw[nfc] (A3) -- (O4);
  \coordinate (O4A3) at (2,-5);
\draw[nfc] (O6) -- (G3);
  \coordinate (G3O6) at (-2,-7);
\draw[wlan] (G5) -- (A5);
  \coordinate (G5A5) at (0,-9);
\draw[wlan] (G6) -- (A6);
  \coordinate (G6A6) at (0,-10);

\draw[dotted] (-4.25,-0.4) rectangle (0.25,-1.25);
\draw (-2,-1.25) node[below] {\emph{initial NFC tap}};
\draw[dotted] (-0.25,-2.4) rectangle (4.25,-5.25);
\draw (2,-2.4) node[above] {\emph{mid NFC tap}};
\draw[dotted] (-4.25,-6.4) rectangle (0.25,-7.25);
\draw (-2,-6.4) node[above] {\emph{final NFC tap}};

\draw (G1) node[right, align=left] {Generate $(R_G, r_G)$};
\draw[fill] (G1) circle [point];
\draw (G2O1) node[above] {$S, G, R_G$};

\draw (O2) node[left] {Walk};
\draw (O3A1) node[above] {$S, G, R_G$};
\draw (A2) node[left,align=left] {Generate $(R_A, r_A)$};
\draw[fill] (A2) circle [point];
\draw (O4A3) node[above] {$S, A, R_A, \ssid_A$};
\draw (O5) node[left] {Walk};
\draw (G3O6) node [above] {$S, A, R_A, \ssid_A$};
\draw (G4) node[right] {$K := r_G R_A$};
\draw[fill] (G4) circle [point];
\draw (A4) node[left] {$K := r_A R_G$};
\draw[fill] (A4) circle [point];
\draw (G5A5) node[above] {Establish Secure WLAN Channel};
\draw (G6A6) node[above] {Finish (key confirmation step)};

}
\end{tikzpicture}
\end{scaletikzpicturetowidth}}
\caption{\label{fig:taga:process:pn} TAGA pairing with
Diffie-Hellman key exchange}
\end{figure}
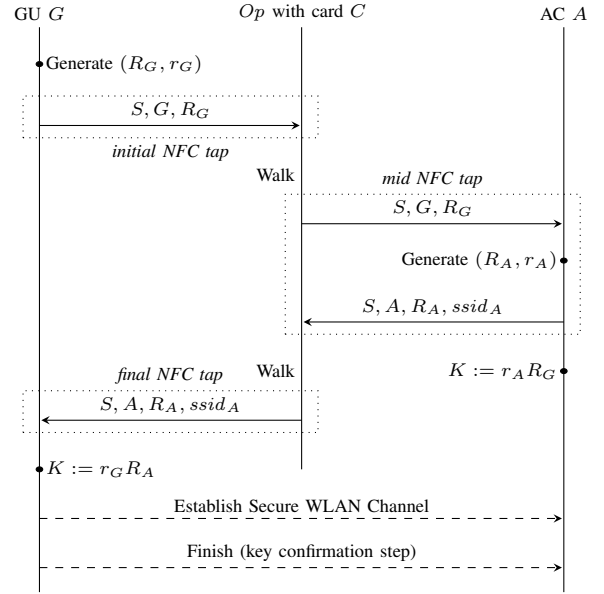

TAGA pairing is based on a three-pass key establishment 
(KE) protocol, where the third pass is a key confirmation step. 
It is illustrated in Fig.~\ref{fig:taga:process:pn} for
the case when the Diffie-Hellman (DH) key exchange is used
as the underlying protocol. Let $A$ be the AC, $G$ be the GU,
and $\Op$ be the operator of $G$.

\emph{First message pass.} $\Op$ initiates TAGA by an 
initial NFC tap with her NFC card $C$ at $G$'s controller. 
Thereby the first message $M_1$ is written to the card. $M_1$ contains 
information necessary for establishing the key together with the ID of
$G$ and the service $S$ that $G$ wishes to provide. $\Op$ carries $M_1$ 
stored on $C$ to the AC. There she carries out a mid NFC tap, during
which message $M_1$ is transferred to $A$'s controller.

\emph{Second message pass.} The AC checks whether the connection should
be granted. In the positive case, $A$ generates message $M_2$, which is 
written onto the NFC card, also during the mid NFC tap. 
$M_2$ contains information necessary for establishing the key 
together with the ID of $A$ and
access data to $A$'s WLAN such as the SSID. It also contains a ciphertext
to grant key confirmation to $G$.
The operator then carries the card with $M_2$ back to $G$. There she
transfers $M_2$ from the card to $G$'s controller by a final NFC tap.

\emph{WLAN and third message pass.} $G$ is now 
able to connect to $A$'s WLAN. A third message
is passed over the WLAN connection to achieve key confirmation
to $A$.

\subsubsection{Ground Services with TAGA}
\label{s:taga:services}

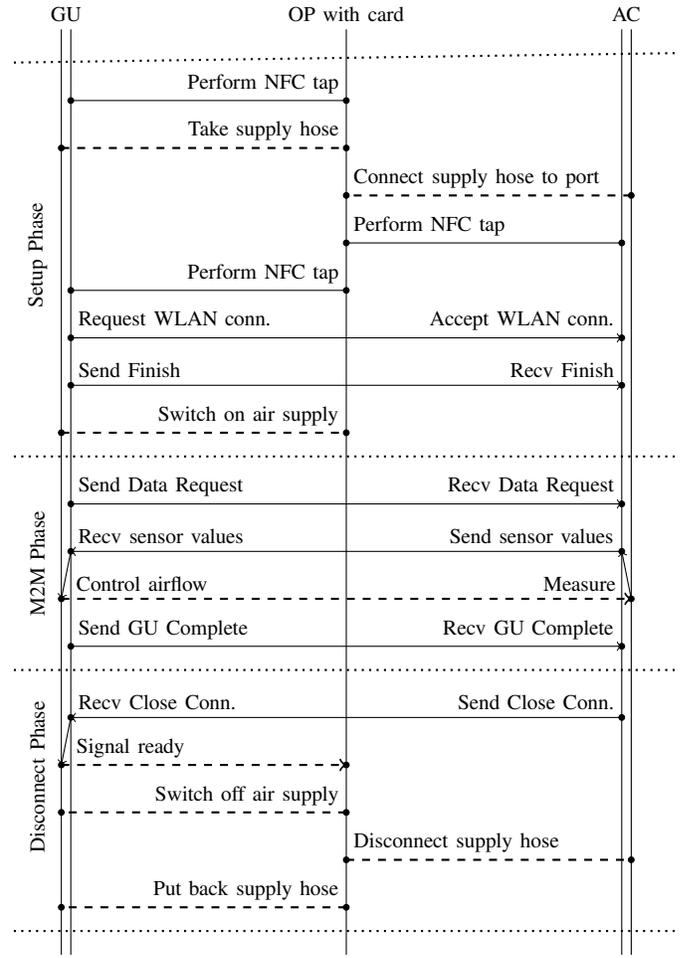
\begin{figure}
\centering{
\begin{scaletikzpicturetowidth}{\columnwidth}
\begin{tikzpicture}[scale=\tikzscale,
dot/.style={circle,draw=black,fill=black,inner sep=0pt,minimum size=1mm}]
{\footnotesize

\draw (-0.5,-4.25) node [rotate=90] {Setup Phase};
\draw (-0.5,-10.75) node [rotate=90] {M2M Phase};
\draw (-0.5,-15.25) node [rotate=90] {Disconnect Phase};

\node[above] at (0.1,0.5) {GU}; 
\node[above] at (6,0.5) {OP with card};
\node[above] at (11.9,0.5) {AC}; 

\coordinate (Gs) at (0,0.5);
\coordinate (Ge) at (0,-19);
\draw (Gs) -- (Ge);
\coordinate (Gsp) at (0.2,0.5);
\coordinate (Gep) at (0.2,-19);
\draw (Gsp) -- (Gep);

\coordinate (As) at (11.8,0.5);
\coordinate (Ae) at (11.8,-19);
\draw (As) -- (Ae);
\coordinate (Asp) at (12,0.5);
\coordinate (Aep) at (12,-19);
\draw (Asp) -- (Aep);

\coordinate (Os) at (6,0.5);
\coordinate (Oe) at (6,-19);
\draw (Os) -- (Oe);

\draw[dotted, thick] (-1,-0.2) -- (13,-0);

\coordinate (O2) at (6,-1);
\node[above left] at (O2) {Perform NFC tap};
\draw[fill] (O2) circle [radius=0.06];
\coordinate (G1) at (0.2,-1);
\node[above right] at (G1) {}; 
\draw[fill] (G1) circle [radius=0.06];
\draw (G1) -- (O2);

\coordinate (O4) at (6,-2);
\node[above left] at (O4) {Take supply hose};
\draw[fill] (O4) circle [radius=0.06];
\coordinate (G2) at (0,-2);
\draw[fill] (G2) circle [radius=0.06];
\draw [dashed, thick] (G2) -- (O4);

\coordinate (O6) at (6,-3);
\node[above right] at (O6) {Connect supply hose to port};
\draw[fill] (O6) circle [radius=0.06];
\coordinate (A1) at (12,-3);
\draw[fill] (A1) circle [radius=0.06];
\draw [dashed, thick] (A1) -- (O6);

\coordinate (O8) at (6,-8); \coordinate (O8) at (6,-4);
\node[above right] at (O8) {Perform NFC tap};
\draw[fill] (O8) circle [radius=0.06];
\coordinate (A2) at (11.8,-8); \coordinate (A2) at (11.8,-4);
\node[right] at (A2) {}; 
\draw[fill] (A2) circle [radius=0.06];
\draw  (O8) -- (A2);

\coordinate (O10) at (6,-5);
\node[above left] at (O10) {Perform NFC tap};
\draw[fill] (O10) circle [radius=0.06];
\coordinate (G3) at (0.2,-5);
\draw[fill] (G3) circle [radius=0.06];
\node[above right] at (G3) {}; 
\draw (G3) -- (O10);

\coordinate (G5) at (0.2,-6);
\draw[fill] (G5) circle [radius=0.06];
\node[above right] at (G5) {Request WLAN conn.};  
\coordinate (A5) at (11.8,-6);
\node[above left] at (A5) {Accept WLAN conn.};
\draw[fill] (A5) circle [radius=0.06];
\draw[->] (G5) -- (A5);

\coordinate (G5b) at (0.2,-7);
\draw[fill] (G5b) circle [radius=0.06];
\node[above right] at (G5b) {Send Finish};
\coordinate (A5b) at (11.8,-7);
\draw[fill] (A5b) circle [radius=0.06];
\node[above left] at (A5b) {Recv Finish};
\draw[->] (G5b) -- (A5b);

\coordinate (O12) at (6,-8);
\node[above left] at (O12) {Switch on air supply};
\draw[fill] (O12) circle [radius=0.06];
\coordinate (G4) at (0,-8);
\draw[fill] (G4) circle [radius=0.06];
\draw [dashed, thick] (G4) -- (O12);

\draw[dotted, thick] (-1,-8.5) -- (13,-8.5);


\coordinate (G6) at (0.2,-9.5);
\draw[fill] (G6) circle [radius=0.06];
\node[above right] at (G6) {Send Data Request};
\coordinate (A6) at (11.8,-9.5);
\draw[fill] (A6) circle [radius=0.06];
\node[above left] at (A6) {Recv Data Request};
\draw[->] (G6) -- (A6);

\coordinate (G7) at (0.2,-10.5);
\draw[fill] (G7) circle [radius=0.06];
\node[above right] at (G7) {Recv sensor values};
\coordinate (A7) at (11.8,-10.5);
\draw[fill] (A7) circle [radius=0.06];
\node[above left] at (A7) {Send sensor values};
\draw[->] (A7) -- (G7);

\coordinate (G8) at (0,-11.5);
\draw[fill] (G8) circle [radius=0.06];
\node[above right] at (G8) {\ Control airflow};
\coordinate (A8) at (12,-11.5);
\draw[fill] (A8) circle [radius=0.06];
\node[above left] at (A8) {Measure\ \ };
\draw[dashed, thick, ->] (G8) -- (A8);

\draw[->] (G7) -- (G8);
\draw[->] (A8) -- (A7);

\coordinate (G9) at (0.2,-12.5);
\draw[fill] (G9) circle [radius=0.06];
\node[above right] at (G9) {Send GU Complete};
\coordinate (A9) at (11.8,-12.5);
\draw[fill] (A9) circle [radius=0.06];
\node[above left] at (A9) {Recv GU Complete};
\draw[->] (G9) -- (A9);

\draw[dotted, thick] (-1,-13) -- (13,-13);


\coordinate (G10) at (0.2,-14);
\draw[fill] (G10) circle [radius=0.06];
\node[above right] at (G10) {Recv Close Conn.};
\coordinate (A10) at (11.8,-14);
\draw[fill] (A10) circle [radius=0.06];
\node[above left] at (A10) {Send Close Conn.} ;
\draw[<-] (G10) -- (A10);

\coordinate (G12) at (0,-15);
\draw[fill] (G12) circle [radius=0.06];
\node[above right] at (G12) {\ Signal ready};
\coordinate (O13a) at (6,-15);
\draw[fill] (O13a) circle [radius=0.06];
\node[above left] at (O13a) {}; 
\draw[->,dashed,thick] (G12) -- (O13a);

\draw[->] (G10) -- (G12);

\coordinate (O13) at (6,-16); 
\draw[fill] (O13) circle [radius=0.06];
\node[above left] at (O13) {\ Switch off air supply};

\draw[dashed, thick] (O13) -- (0,-16);
\draw[fill] (0,-16)  circle [radius=0.06];

\coordinate (O14) at (6,-17); 
\draw[fill] (O14) circle [radius=0.06];
\node[above right] at (O14) {Disconnect supply hose};

\draw[dashed, thick] (O14) -- (12,-17);
\draw[fill] (12,-17)  circle [radius=0.06];

\coordinate (O15) at (6,-18); 
\draw[fill] (O15) circle [radius=0.06];
\node[above left] at (O15) {\ Put back supply hose};

\draw[dashed, thick] (O15) -- (0,-18);
\draw[fill] (0,-18)  circle [radius=0.06];

\draw[dotted,thick] (-1,-18.5) -- (13,-18.5);

}
\end{tikzpicture}
\end{scaletikzpicturetowidth}}
\caption{\label{fig:process:aircond} Process flow for air conditioning}
\end{figure}

We explain the process flow for air conditioning with
secure M2M communication, and TAGA pairing integrated into the
setup phase. Examples for fuelling with one and two trucks are 
provided in App.~\ref{app:fuel}. 
\removed{
We now give examples of the process flows for 
air conditioning and fuelling with secure M2M communication, and TAGA 
pairing integrated into the setup phase.
}

The process flow for air conditioning is shown in 
Fig.~\ref{fig:process:aircond}.
During the setup phase TAGA pairing is naturally integrated 
into the path that the operator needs to trace for the physical setup. 
First, the operator performs the first NFC tap at the pre-conditioning unit, 
and then walks to the AC carrying the air supply hose. After he has 
connected the hose to the AC's supply port the operator performs the 
second NFC tap at the AC, and then walks back to the pre-conditioning unit. 
There he performs the third NFC tap, and switches on the air supply. 

The pre-conditioning unit and AC are now ready to carry out the actual 
air conditioning during the M2M phase. First, the unit sends a 
request to receive data from the AC. This e.g.\ includes the desired 
temperature, the airflow parameters suitable for the AC, and 
readings of temperature sensors. The physical process is 
then controlled by the pre-conditioning unit based on temperature 
readings continuously communicated from the AC. When the desired 
temperature is reached the unit notifies the AC, and the service 
enters the disconnect phase. 
During this final phase the communication session is closed,
and the operator disconnects the supply hose.

\removed{
\begin{figure}
\begin{scaletikzpicturetowidth}{\columnwidth}
\begin{tikzpicture}[scale=\tikzscale,
phys/.style={dashed, thick}, nfc/.style={}, wlan/.style={},
wland/.style={->}, phase/.style={dotted, thick},
point/.style={radius=0.05}] 
{\footnotesize

\node[above] at (-6.125,0) {GU}; 
\node[above] at (0,0) {OP with MU};
\node[above] at (6.125,0) {AC}; 

\draw (0,0) -- (0,-20.5);
  \draw (0,-22) -- (0,-23);
\draw (-6,0) -- (-6,-23);
\draw (-6.25,0) -- (-6.25,-23);
\draw (6,0) -- (6,-23);
\draw (6.25,0) -- (6.25,-23);

\coordinate (F1) at (0,-1);
\coordinate (F2) at (0,-2); 
\coordinate (F3a) at (0,-3);
  \coordinate (GF3a) at (-6.25,-3);
\coordinate (F3) at (0,-4);
  \coordinate (AF3) at (6.25,-4);
\coordinate (F4) at (0,-5); 
\coordinate (F6) at (0,-6);
  \coordinate (AF6) at (6.25,-6);
\coordinate (F7) at (0,-7);
  \coordinate (GF7) at (-6.25,-7);
\coordinate (F8) at (0,-8); 
\coordinate (F9) at (0,-11.25);  

\coordinate (G1) at (-6,-2); 
\coordinate (G2) at (-6,-8); 
\coordinate (G3) at (-6,-9); 
\coordinate (G4) at (-6,-10); 

\coordinate (A1l) at (6,-5);  
\coordinate (A2l) at (6,-9);  
\coordinate (A3l) at (6,-10); 

\coordinate (A4) at (6.25,-11.25);  

\coordinate (Osynch) at (0,-10.75);
\coordinate (Asynch) at (6.25,-10.75);
\draw[phys] (Osynch) -- (Asynch);

\draw[->] (A3l) -- (A4);

\draw[nfc] (G1) -- (F2);
\draw[nfc] (G2) -- (F8);
\draw[phys] (F3a) -- (GF3a);
\draw[phys] (F3) -- (AF3); 
\draw[nfc] (F4) -- (A1l);
\draw[phys] (F6) -- (AF6);
\draw[phys] (F7) -- (GF7);
\draw[wland] (G3) -- (A2l);
\draw[wland] (G4) -- (A3l);

\draw[phase] (-6.75,-0.5) -- (6.75,-0.5);
\draw[phase] (-6.75,-11.6) -- (6.75,-11.6);
\draw[phase] (-6.75,-17) -- (6.75,-17);
\draw[phase] (-6.75,-22.5) -- (6.75,-22.5);

\draw (-6.75,-5.75) node [rotate=90] {Setup Phase};
\draw (-6.75,-14.25) node [rotate=90] {M2M Phase};
\draw (-6.75,-19.75) node [rotate=90] {Disconnect Phase};

\removed{
\draw [decorate,decoration={brace,amplitude=10pt},xshift=-4pt,yshift=0pt]
(-6.5,-11.5) -- (-6.5,-0.5)  
node [black,midway,xshift=-0.6cm,rotate=90] 
{\footnotesize Setup Phase};

\draw [decorate,decoration={brace,amplitude=10pt},xshift=-4pt,yshift=0pt]
(-6.5,-17) -- (-6.5,-11.5)  
node [black,midway,xshift=-0.6cm,rotate=90] 
{\footnotesize M2M Phase};

\draw [decorate,decoration={brace,amplitude=10pt},xshift=-4pt,yshift=0pt]
(-6.5,-22.5) -- (-6.5,-17)  
node [black,midway,xshift=-0.6cm,rotate=90] 
{\footnotesize Disconnect Phase};
}


\node[right] at (F1) {Connect input hose to hydrant};
\node[above left] at (F2) {Perform NFC tap};
\node[above left] at (F3a) {Take ground wire from GU};
\node[above right] at (F3) {Connect ground wire to AC};
\node[above right] at (F4) {Perform NFC tap};
\node[above right] at (F6) {Connect output hose to AC port};
\node[above left] at (F7) {Check fuel for contamination};
\node[above left] at (F8) {Perform NFC tap};
\node[left] at (F9) {Activate fuel pump};

\node[above right] at (G1) {}; 
\node[above right] at (G2) {}; 
\node[above right] at (G3) {Request WLAN Conn.};
\node[above right] at (G4) {Send Finish};

\node[above left] at (A1l) {}; 
\node[above left] at (A2l) {Accept WLAN Conn.};
\node[above left] at (A3l) {Recv Finish};


\node[above right] at (Osynch) {Signal hose connected};
\node[left] at (A4) {Open tank valve\ \ };


\coordinate (Gm1) at (-6,-12.5); 
\coordinate (Gm2) at (-6,-13.5); 
\coordinate (Gm3) at (-6.25,-14.5); 
\coordinate (Gm4) at (-6,-15.5); 
\coordinate (Gm5) at (-6,-16.5); 

\coordinate (Am1) at (6,-12.5); 
\coordinate (Am2) at (6,-13.5); 
\coordinate (Am3) at (6.25,-14.5); 
\coordinate (Am4) at (6,-15.5); 
\coordinate (Am5) at (6,-16.5); 

\node[above right] at (Gm1) {Send Data Request};
\node[above right] at (Gm2) {Recv fuel order};
\node[above right] at (Gm3) {\ \ Pump fuel};
\node[above right] at (Gm4) {Recv AC Complete};
\node[above right] at (Gm5) {Send GU Complete};

\node[above left] at (Am1) {Recv Data Request};
\node[above left] at (Am2) {Send fuel order};
\node[above left] at (Am3) {Measure\ \ };
\node[above left] at (Am4) {Send AC Complete};
\node[above left] at (Am5) {Recv GU Complete};

\draw[->] (Gm2) -- (Gm3);
\draw[->] (Am3) -- (Am2);


\draw[wland] (Gm1) -- (Am1);
\draw[wland] (Am2) -- (Gm2);
\draw[phys] (Gm3) -- (Am3);
\draw[wland] (Am4) -- (Gm4);
\draw[wland] (Gm5) -- (Am5);


\coordinate (Af1) at (6.25,-18); 
\draw[fill] (Af1) circle [point];
\node[left] at (Af1) {Close tank valve\ \ };
\coordinate (Af2) at (6,-19); 
\coordinate (Gf1) at (-6,-19); 
\coordinate (Gf2) at (-6.25,-20); 
\coordinate (Ff1) at (0,-20); 

\node[above left] at (Af2) {Send Close Conn.};

\node[above right] at (Gf1) {Recv Close Conn.};
\node[above right] at (Gf2) {\ \ Signal ready};

\draw[wland] (Af2) -- (Gf1);
\draw[phys] (Gf2) -- (Ff1);

\draw[->] (Am5) -- (Af1);
\draw[->] (Gf1) -- (Gf2);  

\draw (-1.5,-20.5) rectangle (1.5,-22);
\node[align=center] at (0,-21.25) {Reverse \\ physical setup}; 
\draw[phys] (-6.25, -21.25) -- (-1.5,-21.25);
\draw[phys] (6.25, -21.25) -- (1.5,-21.25);


\draw[fill] (F1) circle [point];
\draw[fill] (F2) circle [point];
\draw[fill] (F3) circle [point];
\draw[fill] (F3a) circle [point];
\draw[fill] (AF3) circle [point];
\draw[fill] (F4) circle [point];
\draw[fill] (F6) circle [point];
\draw[fill] (AF6) circle [point];
\draw[fill] (F7) circle [point];
\draw[fill] (F8) circle [point];
\draw[fill] (F9) circle [point];

\draw[fill] (G1) circle [point];
\draw[fill] (G2) circle [point];
\draw[fill] (G3) circle [point];
\draw[fill] (G4) circle [point];
\draw[fill] (GF3a) circle [point];
\draw[fill] (GF7) circle [point];

\draw[fill] (A1l) circle [point];
\draw[fill] (A2l) circle [point];
\draw[fill] (A3l) circle [point];
\draw[fill] (A4) circle [point];

\draw[fill] (Am1) circle [point];
\draw[fill] (Am2) circle [point];
\draw[fill] (Am3) circle [point];
\draw[fill] (Am4) circle [point];
\draw[fill] (Am5) circle [point];

\draw[fill] (Gm1) circle [point];
\draw[fill] (Gm2) circle [point];
\draw[fill] (Gm3) circle [point];
\draw[fill] (Gm4) circle [point];
\draw[fill] (Gm5) circle [point];

\draw[fill] (Af2) circle [point];
\draw[fill] (Gf1) circle [point];
\draw[fill] (Gf2) circle [point];
\draw[fill] (Ff1) circle [point];

}
\end{tikzpicture}
\end{scaletikzpicturetowidth}
\caption{\label{fig:process:fuelling} Process flow for fuelling}
\end{figure}

\paragraph*{Fuelling}
Fig.~\ref{fig:process:fuelling} shows a process for fuelling in a
setting where the fuel is obtained from an underground pipeline 
via a fuel truck. The fueller's first step is to connect the input 
fuel hose of the truck to a hydrant in the ground. Moreover, 
the fueller needs to connect the output fuel hose of the truck 
to the fuelling port of the AC. The fuelling ports are typically
positioned under the wings of the plane, and this step often involves 
that the fueller uses a lift integrated in the truck to take him up 
with the hose to one of the wings. In addition, safety measures are
carried out: before the AC is hooked up to the fuel hose 
a ground wire is laid from the 
fuel truck to the AC, and before the fuelling starts the fuel is
checked for contamination. 
TAGA pairing is integrated as follows: The first and second 
NFC taps are aligned with connecting up the ground wire, and the third 
tap is carried out after the fuel hose is connected up. 
Finally, the fueller activates the fuel
pump. At the AC, the pilot (or automatic control) waits until the 
secure channel is established and he has the okay from the fueller
(or the fuel truck via the secure channel) that the fuel hose is connected up. 

At the AC, the first action of the M2M phase is that the pilot 
(or automatic control) opens up the valves of the tanks, and activates the 
automatic fuel system. 
The M2M process makes use of the fact that most ACs already have an 
automated fuelling system: given a specified amount of fuel, the 
fuelling system distributes incoming fuel automatically into the various 
sections of the tank, monitors the amount of fuel already
received, and automatically shuts the valves of the tank when the 
specified amount has been reached. As usual backflow will stop 
the fuel pump of the truck. During the M2M phase the AC can 
communicate several fuel orders to the fuel truck so that the fuel
can automatically and precisely be topped up according to an increase in
the weight of the plane. When the final weight is known and the final 
fuel amount reached the AC notifies the fuel truck that the
service is complete. After an analogous reply by the truck the service
enters the disconnect phase. In the latter the communication session
is closed and the physical setup is reversed.

\paragraph*{Fuelling with Two Trucks}
Large ACs such as the A380 usually employ two fuel trucks to fuel
from the left and right wing in parallel. Then two parallel sessions
of the above process must be run. The 
following synchronization point between the two fuelling sessions
is required: the pilot (or automatic control) only opens the valves of the 
tank system after two secure channels are established and
both fuellers (or fuel trucks) have confirmed that the physical 
connection of the fuel hose on their side is ready. 

}

\subsection{Correct Setup of Ground Services}

The setup of a ground service must guarantee
that the AC and each of the participating GUs are 
both physically connected as well as linked by a secure cyber channel.
We now make precise this  guarantee, 
based on definitions of secure channel and process flow
for a ground service.


\subsubsection{Cyber Connection and Secure Channel}
A \emph{(cyber) connection} owned by a GU or AC for (M2M 
communication on) service $S$ is defined by a tuple 
$(\cid, \DP, K, S)$ where $\cid$ is the identifier of the
connection, $\DP$ is a data confidentiality and integrity 
protocol (such as AES-CCMP), and
$K$ is a session key (such as established by TAGA).
Hence, two parties that own matching connections can exchange data 
on $S$ over the connection $\cid$ transformed by $\DP$ 
using $K$ as the key.
We say a GU $G$ and an AC $A$ share a \emph{secure channel}
for (M2M communication on) service $S$ if $G$ and $A$ both own
a connection $(\cid, \DP, K, S)$ such that
(1)~$\DP$ is secure, and
(2)~$K$ is freshly generated and only known to
$G$ and $A$. When $\DP$ and $\cid$ are clear from the context we
also write $(K, S)$ for a secure channel.

\removed{
We say a GU $G$ and an AC $A$ share a \emph{secure channel (established
by TAGA)} iff $G$ has a session $(S, \Gini, \Gfin, K)$ and
$A$ has a session $(\Amid, S, K, c)$ for some $\Gini$, $\Gfin$,
$\Amid$ and $K$ is known to no party other than $G$ and $A$,
and $G$ and $A$ have no other sessions where $K$ is the key.}


\subsubsection{Process Flow for a Ground Service}
A \emph{process flow} $F$ for a ground service $S$ (with M2M
communication) is a tuple $(n_F, A_F, <_F, P_F)$, where 
$n_F$ specifies the number of required GUs, $(A_F, <_F)$ 
is a partial order of actions to be carried out by the participating
actors (GUs, AC, operators, perhaps crew), and $P_F$ is a
multiset of parking positions relative to the AC such that $P_F$ is of 
size $n_F$. 
$P_F$ specifies where the $n$ GUs are parked during the process 
(e.g.\ [right wing, left wing] or [mid, mid]).
We require $P_F$ to be a multiset rather
than a set since a service can involve several GUs at one relative position,
and in practice it is important to avoid unnecessary 
precision (such as defining sub-positions). Given GUs $G_1, \ldots, G_n$ 
such that each $G_i$ is located at the same
parking slot, we denote the multiset of their relative positions
by $\pos(\{G_1, \dots, G_n\})$. 

We assume the actions of each participant are divided into
a setup phase, a M2M phase, during which the actual service is
carried out, and a disconnect phase.
We require that $(A_F, <_F)$ guarantees: 
(1)~A ground coordinator ensures that exactly $n_F$ GUs for service $S$,
    say $G_1, \ldots, G_n$, are at the parking slot such that 
      $\pos(\{G_1, \ldots, G_n\}) = P_F$.
(2)~When, in the view of an operator $\Op$, the setup phase is 
    completed then the GU $\Op$ operates is physically connected up 
  for service $S$ to $A$.
(3)~When, in the view of a GU $G$, the setup phase is completed 
    then $G$ has a cyber connection for service $S$.
(4)~When, in the view of an AC $A$, the setup phase is completed 
    then $A$ has exactly $n_F$ cyber connections for service $S$. 

\subsubsection{Correct Setup of Ground Service}
\label{s:taga:goals}
Let $L$ be a parking slot, on which a turnaround process takes place.
We denote the one AC that is present at $L$ by $\AC(L)$.
Let $F$ be a process flow for service $S$.
We say $F$ guarantees \emph{Correctness of Setup to a GU $G$} if, whenever 
$G$ engages into the M2M phase of $F$ at a parking slot $L$ then
\begin{enumerate}
\item $\pos(\{G\}) \in P_F$ and
    $G$ is physically connected for service $S$ to $\AC(L)$, and
\item $G$ shares a secure channel for service $S$ with $\AC(L)$. 
\end{enumerate}
We say $F$ guarantees \emph{Correctness of Setup to an AC $A$} if, whenever 
$A$ engages into the M2M phase of $F$ at a parking slot 
$L$  then
there are $n_F$ GUs $G_1, \ldots, G_n$ at $L$ with 
$\pos(\{G_1, \ldots, G_n\}) = P_F$ such that 
\begin{enumerate}
\item $A$ is physically connected for service $S$ to and only
      to the GUs $G_1, \ldots, G_n$, and
\item $A$ shares a secure channel for service $S$ with and only with
      each of the GUs $G_1, \ldots, G_n$. 
\end{enumerate}
We say $F$ guarantees \emph{Correctness of Setup} if it guarantees this
to both, GUs and ACs.

\removed{
\paragraph*{no}
We expect that well-defined process flows guarantee certain
requirements regarding the physical setup and cyber connections.
It is straightforward to check that
the process flows for the services air conditioning, fuelling, and 
fuelling with two trucks 
described in Section~\ref{s:taga:services} are well-defined 
(by inspection of the process flow).

\begin{definition}
We say a process flow $F$ for ground service $S$ is well-defined 
iff for all GUs $G$ and ACs $A$ 
governed by $F$ the following holds:
\begin{enumerate}
\item Ground staff physically connect up exactly $n_F$
      GUs $G_1, \ldots, G_n$ of service $S$ to $A$ such that 
      $\pos(\{G_1, \ldots, G_n) = P_F$.
\item When the setup phase is completed in the view of $G$
  then $G$ has a cyber connection for service $S$.
\item When the setup phase is completed in the view of $A$
   then $A$ has $n_F$ cyber connections for service $S$. 
\item $A$ does not accept more than $n_F$ cyber connection 
      requests for $S$.
\end{enumerate}
\end{definition}
}
\removed{
\begin{lemma}
\label{lem:setup}
The process flows for the services air conditioning, fuelling, and 
fuelling with two trucks 
described in Section~\ref{s:taga:services} are well-defined. 
\end{lemma}

\begin{proof}[Proof of Lemma~\ref{lem:setup}]
\end{proof}}

\subsection{M2M Attack Categories and Safety Impact}

\label{s:goals:safety}

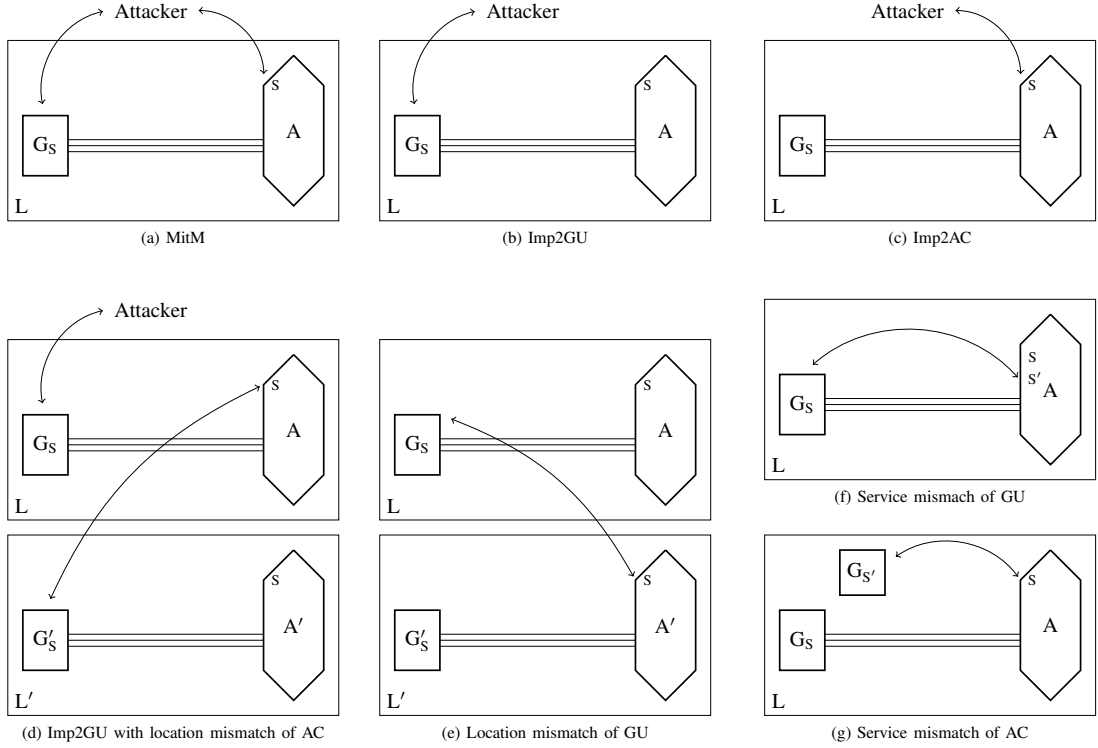
\begin{figure*}
\centering{
\scalebox{0.8}{
\begin{tabular}{ccc}
\subfloat[MitM]{
\begin{tikzpicture}[
pil/.style={
           <->,
           shorten <=2pt,
           shorten >=2pt,}
]


\draw (-0.25,0.25) rectangle (5.25,3.25);
\draw (-0.25,0.25) node[above right] {L};

\draw[thick] (4.5,3) -- (4,2.5) node[right] { 
  \scriptsize S} 
  -- (4,1) -- (4.5,0.5)
        -- (5,1) -- (5,2.5) -- (4.5,3);
\draw (4.5,1.75) node {A};

\draw[thick] (0,1) rectangle (0.75,2);
\draw (0.375,1.5) node {G$_{\mbox{\scriptsize S}}$}; 


\draw (0.75,1.4) -- (4,1.4);
\draw (0.75,1.5) -- (4,1.5);
\draw (0.75,1.6) -- (4,1.6);

\draw[above] (2.125,3.5) node (att) {Attacker};


\node at (0.375,2) {}
  edge[pil,bend left=45] (att.west); 

\node at (4,2.5) {}  
  edge[pil,bend right=45] (att.east); 

\end{tikzpicture}}
&
\subfloat[Imp2GU]{
\begin{tikzpicture}[
pil/.style={
           <->,
           shorten <=2pt,
           shorten >=2pt,}
]


\draw (-0.25,0.25) rectangle (5.25,3.25);
\draw (-0.25,0.25) node[above right] {L};

\draw[thick] (4.5,3) -- (4,2.5) node[right] { 
  \scriptsize S} 
  -- (4,1) -- (4.5,0.5)
        -- (5,1) -- (5,2.5) -- (4.5,3);
\draw (4.5,1.75) node {A};

\draw[thick] (0,1) rectangle (0.75,2);
\draw (0.375,1.5) node {G$_{\mbox{\scriptsize S}}$}; 


\draw (0.75,1.4) -- (4,1.4);
\draw (0.75,1.5) -- (4,1.5);
\draw (0.75,1.6) -- (4,1.6);

\draw[above] (2.125,3.5) node (att) {Attacker};


\node at (0.375,2) {}
  edge[pil,bend left=45] (att.west); 


\end{tikzpicture}}
&
\subfloat[Imp2AC]{
\begin{tikzpicture}[
pil/.style={
           <->,
           shorten <=2pt,
           shorten >=2pt,}
]


\draw (-0.25,0.25) rectangle (5.25,3.25);
\draw (-0.25,0.25) node[above right] {L};

\draw[thick] (4.5,3) -- (4,2.5) node[right] { 
  \scriptsize S} 
  -- (4,1) -- (4.5,0.5)
        -- (5,1) -- (5,2.5) -- (4.5,3);
\draw (4.5,1.75) node {A};

\draw[thick] (0,1) rectangle (0.75,2);
\draw (0.375,1.5) node {G$_{\mbox{\scriptsize S}}$}; 


\draw (0.75,1.4) -- (4,1.4);
\draw (0.75,1.5) -- (4,1.5);
\draw (0.75,1.6) -- (4,1.6);

\draw[above] (2.125,3.5) node (att) {Attacker};


\node at (4,2.5) {}  
  edge[pil,bend right=45] (att.east); 

\end{tikzpicture}}
\\
\subfloat[Imp2GU with location mismatch of AC]{
\begin{tikzpicture}[
pil/.style={
           <->,
           shorten <=2pt,
           shorten >=2pt,}
]


\draw (-0.25,3.5) rectangle (5.25,6.5);
\draw (-0.25,3.5) node[above right] {L};

\draw[thick] (4.5,6.25) -- (4,5.75) node[right] { 
  \scriptsize S} 
  -- (4,4.25) -- (4.5,3.75)
        -- (5,4.25) -- (5,5.75) -- (4.5,6.25);
\draw (4.5,5) node {A};

\draw[thick] (0,4.25) rectangle (0.75,5.25);
\draw (0.375,4.75) node {G$_{\mbox{\scriptsize S}}$}; 


\draw (0.75,4.65) -- (4,4.65);
\draw (0.75,4.75) -- (4,4.75);
\draw (0.75,4.85) -- (4,4.85);

\draw[above] (2.125,6.75) node (att) {Attacker};


\node at (0.375,5.25) {}
  edge[pil,bend left=45] (att.west); 



\draw (-0.25,0.25) rectangle (5.25,3.25);
\draw (-0.25,0.25) node[above right] {L$'$};

\draw[thick] (4.5,3) -- (4,2.5) node[right] { 
  \scriptsize S} 
  -- (4,1) -- (4.5,0.5)
        -- (5,1) -- (5,2.5) -- (4.5,3);
\draw (4.5,1.75) node {A$'$};

\draw[thick] (0,1) rectangle (0.75,2);
\draw (0.375,1.5) node {G$'_{\mbox{\scriptsize S}}$}; 


\draw (0.75,1.4) -- (4,1.4);
\draw (0.75,1.5) -- (4,1.5);
\draw (0.75,1.6) -- (4,1.6);





\node at (0.375,2) {} 
  edge[pil,bend left=20] (4,5.75) ;

\end{tikzpicture}}
&
\subfloat[Location mismatch of GU]{
\begin{tikzpicture}[
pil/.style={
           <->,
           shorten <=2pt,
           shorten >=2pt,}
]


\draw (-0.25,3.5) rectangle (5.25,6.5);
\draw (-0.25,3.5) node[above right] {L};

\draw[thick] (4.5,6.25) -- (4,5.75) node[right] { 
  \scriptsize S} 
  -- (4,4.25) -- (4.5,3.75)
        -- (5,4.25) -- (5,5.75) -- (4.5,6.25);
\draw (4.5,5) node {A};

\draw[thick] (0,4.25) rectangle (0.75,5.25);
\draw (0.375,4.75) node {G$_{\mbox{\scriptsize S}}$}; 


\draw (0.75,4.65) -- (4,4.65);
\draw (0.75,4.75) -- (4,4.75);
\draw (0.75,4.85) -- (4,4.85);

\draw[above] (2.125,6.75) node (att) {}; 





\draw (-0.25,0.25) rectangle (5.25,3.25);
\draw (-0.25,0.25) node[above right] {L$'$};

\draw[thick] (4.5,3) -- (4,2.5) node[right] { 
  \scriptsize S} 
  -- (4,1) -- (4.5,0.5)
        -- (5,1) -- (5,2.5) -- (4.5,3);
\draw (4.5,1.75) node {A$'$};

\draw[thick] (0,1) rectangle (0.75,2);
\draw (0.375,1.5) node {G$'_{\mbox{\scriptsize S}}$}; 


\draw (0.75,1.4) -- (4,1.4);
\draw (0.75,1.5) -- (4,1.5);
\draw (0.75,1.6) -- (4,1.6);






\node at (0.75,5.25) {} 
  edge[pil,bend left=20] (4,2.5) ;

\end{tikzpicture}}
& 
\begin{tabular}[b]{c}
\subfloat[Service mismach of GU]{
\begin{tikzpicture}[
pil/.style={
           <->,
           shorten <=2pt,
           shorten >=2pt,}
]


\draw (-0.25,0.25) rectangle (5.25,3.25);
\draw (-0.25,0.25) node[above right] {L};

\draw[thick] (4.5,3) -- (4,2.5) node[align=center,below right] { 
   {\scriptsize S } \\[1ex] {\scriptsize S$'$}} 
  -- (4,1) -- (4.5,0.5)
        -- (5,1) -- (5,2.5) -- (4.5,3);
\draw (4.5,1.75) node {A};

\draw[thick] (0,1) rectangle (0.75,2);
\draw (0.375,1.5) node {G$_{\mbox{\scriptsize S}}$}; 


\draw (0.75,1.4) -- (4,1.4);
\draw (0.75,1.5) -- (4,1.5);
\draw (0.75,1.6) -- (4,1.6);

\draw[above] (2.125,3.5) node (att) {}; 




\node at (0.375,2) {}  
  edge[pil,bend left=45] node[auto] {} (4,1.9) {};

\end{tikzpicture}}
\\
\subfloat[Service mismatch of AC]{
\begin{tikzpicture}[
pil/.style={
           <->,
           shorten <=2pt,
           shorten >=2pt,}
]


\draw (-0.25,0.25) rectangle (5.25,3.25);
\draw (-0.25,0.25) node[above right] {L};

\draw[thick] (4.5,3) -- (4,2.5) node[right] { 
  \scriptsize S} 
  -- (4,1) -- (4.5,0.5)
        -- (5,1) -- (5,2.5) -- (4.5,3);
\draw (4.5,1.75) node {A};

\draw[thick] (0,1) rectangle (0.75,2);
\draw (0.375,1.5) node {G$_{\mbox{\scriptsize S}}$}; 

\draw[thick] (1,2.25) rectangle (1.75,3);
\draw (1.375,2.625) node {G$_{\mbox{\scriptsize S$'$}}$}; 


\draw (0.75,1.4) -- (4,1.4);
\draw (0.75,1.5) -- (4,1.5);
\draw (0.75,1.6) -- (4,1.6);




\node at (1.75,2.75) {}  
  edge[pil,bend left=45] node[auto] {} (4,2.5) {}; 

\end{tikzpicture}}
\end{tabular}
\end{tabular}
}}
\caption{\label{fig:astates} M2M attack categories}
\end{figure*}

Let $L$ be a parking position, $A$ be the AC at $L$, and $G_S$ a
GU to provide service $S$ at $L$. Assume the attacker intends
to cause harm at $L$ by undermining the M2M communication between 
$A$ and $G_S$. There are five categories of attacks the attacker can 
aim for. Each category comes with different preconditions he needs 
to achieve first.  We say the attacker has reached a state from 
which he can mount a \ldots
\begin{enumerate}
\item
\emph{Man-in-the-Middle (MitM) attack} if $G_S$ has
a connection $(K, S)$ and $A$ has a connection $(K',S)$
but the attacker knows both $K$ and $K'$.
\item
\emph{Impersonation to GU (Imp2GU) attack} if $G_S$ has a
  connection $(K, S)$ but the attacker knows $K$.
\item
\emph{Impersonation to AC (Imp2AC) attack} if
  $A$ has a connection $(K, S)$ but the attacker knows $K$.
\item \emph{Mismatch of GU attack} if $G_S$ 
 has a connection $(K, S)$, and some $A'$ at parking slot
 $L'$ has a connection $(K, S')$ but $L \not= L'$
  or $S \not = S'$ while the attacker does not know $K$. 
  In the first case we speak of a \emph{location mismatch},
  in the second of a \emph{service mismatch}. There can also be
  location \emph{and} service mismatch attacks.
\item 
\emph{Mismatch of AC attack} is analogously defined. 
\end{enumerate}

The attack categories are illustrated in Fig.~\ref{fig:astates}.
We now discuss their potential safety impact by means of our
examples air conditioning and fuelling.


\paragraph{Man-in-the-Middle (MitM)} The most serious attack 
is a MitM: it means the attacker can insert, delete or modify 
any messages as he pleases. 
As our examples show the safety impact is high for air conditioning 
but not so for fuelling due to inbuilt safety measures.

\begin{example}[Air Conditioning]
\label{ex:MitM:aircond}
The attacker can forge airflow parameters and sensor values
that will induce $G_S$ to apply air pressure and temperature
unsuitable to $A$. This can be highly damaging: if the cooling process 
is too fast, water in the pipes can quickly become frozen and 
clog up the pipes. This can happen very quickly: e.g. with the lowest 
inlet temperature within 30 seconds, with safety considerations 
still within 100 seconds. The resulting backflow will be detected by the GU.
However, in the worst case pipes might already have burst. In any case 
the pipes have to be checked for damage afterwards, which is a costly procedure.
In the worst case, the attacker could try to optimize the attack based
on the sensor values sent by $A$: he could try to control the airflow in
a way that maximizes the strain on the pipes without this being
detected during service time but with a high risk that pipes burst during 
flight.
\end{example}

\begin{example}[Fuelling]
\label{ex:MitM:fuelling}
The attacker can forge fuel orders, and induce the fuel truck to
load an insufficient or surplus amount of fuel. While this can be highly 
disruptive there is no safety impact. Since the AC measures the fuel 
itself $A$ will notice if the loaded fuel is not sufficient. Moreover, 
if the attacker tries to cause spillage (and hence a fire
hazard) by too large a fuel order this will not succeed either since
backflow will stop the pump of the fuel truck. 
\end{example}

\paragraph{Impersonation to GU (Imp2GU)} 
Another serious attack state is that for (pure) Imp2GU.  
This allows the attacker to feed any information he likes to $G_S$ while 
$G_S$ thinks this information stems from $A$ and adjusts the 
service correspondingly. The safety impact is potentially high.

\begin{example}[Air Conditioning]
The attacker feeds in airflow parameters and sensor values,
and $G_S$ will control the airflow based on this information.
Since the air supply leads directly into the mixer unit of the plane 
this will take immediate effect without $A$ itself having to open a valve or
the like first. Crew or ground stuff might notice that something is
wrong and switch off the air supply manually. However, as discussed in
Example~\ref{ex:MitM:aircond} damage 
can occur quickly and this might be too late. In contrast to the MitM 
attack the attacker is not able to optimize the attack based on 
sensor values from $A$.
\end{example}

\paragraph{Impersonation to AC (Imp2AC)} 
When the physical control of the service lies entirely with the GU
then an Imp2AC attack is usually harmless. However, this might 
change when several GUs are employed. 

\removed{
Although the attacker can feed any information he likes to
$A$ nothing will happen until a GU engages into the service. And
any GU will only do so after it has completed the physical setup 
and established a channel for the service itself. 
When the physical control lies at least partly with
the AC then this attack state can have safety impact or not, e.g.\ depending 
on whether there are inbuilt safety measures.
}

\begin{example}[Fuelling with two trucks]
Assume fuelling with two trucks is carried out. Assume the truck
under the left wing is already connected (both physically and in cyber) 
while the truck under the right wing is still being set up and the 
fueller currently opens the tank. Now assume that by an Imp2AC attack
the attacker pretends the right truck is also fully set up. Then (in
the case of automatic control) the AC will open the fuel valves, 
order fuel, and the left truck will start
pumping fuel. However, the right operator is at the open tank, and
a highly unsafe situation has been reached.

\removed{
The attacker can impersonate the fuel truck and in its name report to 
$A$ that a problem has occurred. This could induce $A$ to close the
valves of the tank while in reality nothing is wrong with the
fuel truck and it keeps pumping fuel. However, this will not result 
in any safety issue since the truck will simply stop pumping due to back flow.
}
\end{example}

\removed{
Analogously to Imp2GU, the attacker could try to reach the attack
state $(\mmX, \att)$ to combine  Imp2AC with a mismatch 
diversion of the GU. Alternatively, in this state the Imp2AC attack
could itself be the diversion to a safety-critical location mismatch attack 
as we will see next.}

\paragraph{Imp2GU with Mismatch Diversion}
When an attacker mounts a Imp2GU attack it might be beneficial for him
to divert the AC by matching it up with another GU. 
One reason is to thereby ensure that the impersonation attack
remains undetected for longer;
another reason is that without a successful session the
AC might not be affected at all, perhaps because it controls a valve
that it will not open otherwise.

\begin{example}[Fuelling]
When fuelling the pilot has to open the valves of the tanks before
the fuel flow can start. Hence, as long as $A$ does not have a
successful session for fuelling the Imp2GU attack will be detected
early on: the fuel pump of $G_S$ will stop almost immediately due to backflow.
However, if the attacker conducts a `location mismatch of AC attack' 
in parallel then $A$  will open the fuel valves, and the attacker can
make $G_S$ load fuel based on a forged order. Fortunately, in this
case there is no safety impact as explained in Example~\ref{ex:MitM:fuelling}.
\end{example}

\paragraph{Location Mismatch of GU}
We have just seen that mismatch attacks can be used as a diversion.
The next example 
illustrates that they can be safety-critical in their own right. 

\begin{example}[Air Conditioning]
Let $A'$ be an AC at another parking position $L'$ with ongoing
air conditioning service. 
Assume $A'$ is a large plane (e.g.\ an A380), and $A$ is a smaller plane
(e.g.\ an A319). Moreover, assume the attacker has obtained the 
state for location mismatch of $G_S$ where $G_S$ is speaking to $A'$.
Hence, $A'$ will communicate its parameters for airflow to $G_S$ while
$G_S$'s air supply hose is connected to the smaller $A$. 
Since the airpacks of $A$ are smaller than those of $A'$ 
airflow parameters suitable for $A'$ can be damaging for $A$: 
the air pressure is likely to be too high and the cooling process too 
fast. As a consequence pipes of $A$ can be damaged as explained
in Example~\ref{ex:MitM:aircond}.
\end{example}


\paragraph{and g) Service Mismatch}
A service mismatch 
will likely 
be noticed as soon as M2M communication starts. 
The messages will not follow the protocol the AC, and respectively 
GU expects, and hence errors will be raised: e.g.\ the AC 
provides temperature readings while the GU expects a fuel order. 
However, a service mismatch could still be employed as a diversion
to support another attack: when the only goal is that the party to divert
has established a channel for the service (because only then it will
open a valve), or when the service is such that there usually is
a delay between the point when the channel is established and the start
of the M2M communication. 

\removed{
\subsubsection{Identity Mismatch}
TODO to synchronize with something in TAGA auth.

billing, (fuel) unknown-key share for location-mismatch,
talk with the one for L and S, have certificate for low S'/A 
}

\section{Plain TAGA}
\label{s:plain}



\subsubsection{Preliminaries}
\paragraph*{NFC System, Sessions, and Taps}
An \emph{NFC system (for TAGA)} is a triple $N = (\RGU, C, \RAC)$, where 
$\RGU$ is the reader at the GU, $C$ is the NFC card,  
and $\RAC$ is the reader at the AC. We assume readers are equipped with
a status display. 

An \emph{NFC data exchange session (NFC ssn)} is an event $\delta$
of a reader or card $D$, during which $D$ establishes an NFC link with
another party, say $D'$, exchanges data with $D'$ (in the view of
$D$), and then closes the link. If $D$ encounters an error during
any of these steps then $\delta$ is unsuccessful otherwise
it is successful. In the latter case, we denote $D'$ by $\partner(\delta)$,
and the trace of received and sent messages by $\msg(\delta)$. 
Often the data exchange 
will consist of a pair of messages $m_r$ and $m_s$ such that
$m_r$ is received by $D$ and $m_s$ is  sent by $D$. Then we denote
$m_r$ by $\msgr(\delta)$ and $m_s$ by $\msgs(\delta)$.
If $D$ is a reader it will signal on its display whether an 
NFC ssn has been completed successfully or not (e.g.\ green or red light). 
 
An \emph{NFC tap} is an event $\tau$ of an actor $O$, during which
$O$ brings a card $C$ close to a reader $R$ and verifies whether
$R$ displays that it has completed an NFC ssn successfully.  
We say $\tau$ is successful if $O$ observes a positive confirmation,
and unsuccessful otherwise. 
We denote $C$ by $\card(\tau)$ and $R$ by $\reader(\tau)$.

Given an NFC tap $\tau$ of an actor $O$ and an NFC ssn $\gamma$
of a reader $R$, we say $\tau$ and $\delta$ \emph{coincide}
when $\reader(\tau) = R$ and $O$ perceives the feedback that 
$R$ displays upon completion of $\delta$ as the reply to 
his tap $\tau$. We then write $\tapl{\tau}{\delta}$. 
Note that when $\tau$ and $\delta$
coincide it is a priori not given that $R$ has indeed communicated 
with $\card(\tau)$, even when $R$ confirms a successful
exchange.
%
Given NFC tap or ssn events $\epsilon_1$, $\epsilon_2$, we
write $\epsilon_1 < \epsilon_2$ when $\epsilon_1$ occurs
earlier in time than $\epsilon_2$. 

\paragraph*{TAGA Process and Sessions}
A concrete \emph{TAGA process} is defined by a triple
$T = (P, N, M)$, where $P$ is a KE protocol, $N$ is a NFC system,
and $M$ is a set of process measures.

A \emph{(TAGA) session} of a GU $G$ is a tuple 
$(S, \Gini, \Gfin, K)$, where $S$ is the service of $G$,
$\Gini$ and $\Gfin$ are successful NFC ssns of $G$'s reader such that 
$\Gini < \Gfin$, and $K$ is a key. 

A \emph{(TAGA) session} of
an AC $A$ is a tuple $(\Amid, S, K, \cst)$, where $\Amid$ is a
successful NFC ssn of $A$'s reader, $S$ is a service, $K$ is a key, and 
$\cst \in \{c,u\}$. $\cst$ indicates
the status of whether $A$ has already received the finish message
(sent via WLAN for key confirmation) or not: $c$ stands for confirmed, 
and $u$ stands for unconfirmed respectively.

A \emph{(TAGA) session} of an OP $O$ is a tuple $(\Wini, \Wmid, \Wfin)$,
where $\Wini$, $\Wmid$, and $\Wfin$ are successful NFC taps of $O$ 
such that (1)~$\Wini < \Wmid < \Wfin$, 
(2)~$\card(\Wini) = \card(\Wmid) = \card(\Wfin)$,
(3)~$\reader(\Wini) = \reader(\Wfin)$ is a GU reader,
(4)~$\reader(\Wmid)$ is an AC reader, and
(5)~there is no other NFC tap $\Wx$ of $O$ such that
$\Wini < \Wx < \Wmid$. 
Moreover, we assume that the OP's
actions are well-defined as spelled out in Def.~\ref{def:op:welld}.

\removed{
TODO We also allow partial sessions, 
allow partial sessions. write with $\ldots$ any 
A complete GU or AC session induces a connection in the obvious way.}

\subsubsection{Concept and Security}

The original idea behind TAGA is this: Since TAGA takes place in a 
secure zone it seems plausible that by a good choice of 
NFC system and, perhaps, additional process measures we can ensure 
that the TAGA transport guarantees an authentic message exchange 
(regarding both origin and message authenticity). On the one hand, 
this means that we can employ
an unauthenticated KE protocol such as the basic DH exchange to
securely establish a key, and TAGA will not depend on any PKI. 
On the other hand, this allows for a precise alignment
of setting up the secure channels and the physical connections.
More precisely, we hope to achieve a notion of security 
for TAGA that not only asserts the existence of secure keys but also 
that the NFC ssns and OP taps coincide in the expected way.

\begin{definition}
Let $T$ be a TAGA  process.
\item
We say $T$ guarantees \emph{secure TAGA I to a GU $G$} if, 
whenever $G$ has a session 
$(S, \Gini, \Gfin, K)$ then 
\begin{enumerate}
\item there is an OP with a session $(\Wini, \Wmid, \Wfin)$ such that 
      $\tapl{\Wini}{\Gini}$, and $\tapl{\Wfin}{\Gfin}$, 
\item there is an AC $A$ with a session $(\Amid, S, K, \ldots)$ 
such that $\tapl{\Wmid}{\Amid}$, and
\item $K$ is a fresh key known only to $G$ and $A$.
\end{enumerate}
\item
Similarly, we say $T$ guarantees \emph{secure TAGA I to an AC $A$} if, 
whenever $A$ has a session $(\Amid, K, S, c)$ 
then
\begin{enumerate}
\item there is an OP with a session $(\Wini, \Wmid, \Wfin)$ such that 
     $\tapl{\Wmid}{\Amid}$, 
\item there is a GU $G$ with a session
      $(S, \Gini, \Gfin, K)$ such that $\tapl{\Wini}{\Gini}$ and
      $\tapl{\Wfin}{\Gfin}$, and      
\item $K$ is a fresh key known only to $G$ and $A$.
\end{enumerate}
\item
We say $T$ guarantees \emph{secure TAGA I} iff $T$ guarantees
secure TAGA I to both parties, GUs and ACs.
\end{definition}

We first confirm that secure TAGA I indeed entails correct setup
when the physical setup is aligned with the TAGA walk.

\begin{definition}
A process flow $F$ for ground service $S$ is \emph{suitable for
secure TAGA I} if $F$ 
guarantees: When a TAGA walk has been completed between a GU $G$ of 
type $S$ and an AC $A$ then $G$ is physically connected to $A$ for $S$.
\end{definition}

\begin{theorem}
\label{thm:plain:secureSetup}
Let $F$ be a process flow for some ground service, 
and $T$ be a TAGA process such that $F$ includes $T$. 
If $F$ is suitable for secure TAGA~I and 
$T$ satisfies secure TAGA~I then $F$ guarantees correctness of
setup.  
\end{theorem}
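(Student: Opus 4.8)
The plan is to prove the two halves of correctness of setup---to GUs and to ACs---separately, in each case chaining three facts: the structural guarantees built into the process flow $F$, the authentication guarantees of secure TAGA~I, and the physical alignment promised by suitability. The common mechanism is uniform: whenever a party enters the M2M phase, $F$'s ``completion of setup'' guarantees hand me a cyber connection, i.e.\ a TAGA session; secure TAGA~I then unfolds that session into a matching operator walk and a matching session on the partner side, together with a fresh, exclusively shared key; and suitability turns the operator walk into a physical connection. The secure-channel clauses then fall out of the key-secrecy part of secure TAGA~I, while the physical-connection and position clauses come from suitability and from the ground-coordinator guarantee.

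For correctness to a GU $G$, suppose $G$ enters the M2M phase at slot $L$. By the process-flow guarantee that completion of setup yields a cyber connection, $G$ holds a session $(S,\Gini,\Gfin,K)$. Applying secure TAGA~I to $G$ produces an operator with a session $(\Wini,\Wmid,\Wfin)$ satisfying $\tapl{\Wini}{\Gini}$ and $\tapl{\Wfin}{\Gfin}$, an AC $A$ with a session $(\Amid,S,K,\ldots)$ with $\tapl{\Wmid}{\Amid}$, and the fact that $K$ is fresh and known only to $G$ and $A$. Since this operator session is exactly a TAGA walk whose outer taps land on $G$'s reader and whose mid tap lands on $A$'s reader, it is a completed walk between $G$ and $A$; suitability then gives that $G$ is physically connected to $A$ for $S$. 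As physical connection forces co-location at $L$ and $L$ hosts a unique aircraft, $A=\AC(L)$, yielding the physical part of clause~(1); the ground-coordinator guarantee places $G$ among the $n_F$ GUs for $S$ at $L$, so $\pos(\{G\})\in P_F$. Finally $G$ and $\AC(L)$ own connections for $S$ sharing the same fresh, exclusive key $K$, and $\DP$ is secure by assumption, so they share a secure channel---clause~(2).

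For correctness to an AC $A$ at slot $L$ (so $A=\AC(L)$), the process-flow guarantee gives that $A$ holds exactly $n_F$ connections for $S$, each underlying a confirmed session $(\Amid^i,S,K_i,c)$ with keys $K_1,\dots,K_{n_F}$; these keys are pairwise distinct because each is freshly generated. Applying secure TAGA~I to $A$ at each session yields a GU $H_i$ holding a session carrying $K_i$ whose walk's outer taps coincide with $H_i$'s reader, with $K_i$ known only to $H_i$ and $A$; suitability makes each $H_i$ physically connected to $A$ for $S$, hence co-located at $L$. By the ground-coordinator guarantee the GUs for $S$ at $L$ are exactly $G_1,\dots,G_n$ with $\pos(\{G_1,\dots,G_n\})=P_F$, so every $H_i$ lies in this set. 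The ``only to / only with'' directions are then immediate: any GU physically connected to $A$ for $S$ must be at $L$, hence among the $G_j$; and any GU sharing a secure channel with $A$ for $S$ must share one of the keys $K_i$, which by key-exclusivity pins it to $H_i$.

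The step I expect to be the main obstacle is showing that the partner map $i\mapsto H_i$ is injective, so that the $n_F$ connections match $n_F$ \emph{distinct} GUs and the counting closes: only then can I identify $\{H_1,\dots,H_{n_F}\}$ with the full set $\{G_1,\dots,G_n\}$ and conclude that $A$ is physically connected to, and shares a secure channel with, \emph{each} $G_j$. Injectivity should follow by combining two ingredients already in play---the pairwise distinctness of the keys $K_i$ (so distinct connections cannot collapse on the key side), and the physical fact that a single GU connects to a given AC for a given service through a single supply hose, hence corresponds to at most one completed TAGA walk and thus at most one session/key with $A$. Making this physical uniqueness precise, rather than leaving it implicit in the informal ``one hose per service'' assumption, is the delicate point; once it is in place the two halves combine to give correctness of setup.
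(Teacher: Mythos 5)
Your proof is correct and follows essentially the same route as the paper, whose own proof is just ``straightforward from the definitions'' along exactly the chain you describe (the flow guarantees give each party a session on entering the M2M phase, secure TAGA~I unfolds it into an operator walk and a partner session with a fresh exclusive key, and suitability converts the completed walk into the physical connection, with the ground-coordinator clause supplying the position claims). The injectivity of $i\mapsto H_i$ that you flag as the delicate point does not require a new physical ``one hose'' axiom: it follows from the paper's standing assumption that each GU is involved in at most one TAGA process at a time, so a single GU cannot simultaneously carry two sessions with the distinct fresh keys $K_i\neq K_j$.
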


\removed{
origin of origin of nfc sessn indeed from a operator TAGA session and
matching taps between the sessions and the walk: 
channel authenticity means that 
have matching conversations with respect to a corresponding
taga walk that ties the taga sessions together.
matching taps and messages.
matching taps between the sessions and the walk: 
matching conversations between the sessions:}

To show that secure TAGA I can indeed be obtained as sketched above
we first need to make precise what it means for the TAGA transport
to guarantee an authentic message exchange. Our notion of 
\emph{channel authenticity} asserts that if there is a GU session
then there must be an AC session with matching messages as well
as an operator session whose taps coincide with the 
NFC ssns of the GU session, and that of the AC session respectively.
And we require the analogue for the other direction.

\begin{definition}
Let $N$ be an NFC system, and $M$ be a set of process measures.
\item
We say $N$ under $M$ guarantees \emph{channel authenticity to a GU $G$} 
if, whenever $G$ has a session 
$(S, \Gini, \Gfin, \ldots)$ then there is some OP with a session
$(\Wini, \Wmid, \Wfin)$, and some AC with a session 
$(\Amid, ..)$ such that 
\begin{enumerate}
\item $\tapl{\Wini}{\Gini}$, $\tapl{\Wmid}{\Amid}$, and $\tapl{\Wfin}{\Gfin}$, 
      and
\item $\msgs(\Gini) = \msgr(\Amid)$, and $\msgs(\Amid) = \msgr(\Gfin)$.
\end{enumerate}
\item
Similarly, we say $N$ under $M$ guarantees \emph{channel auhenticity to 
an AC $A$} if, whenever $A$ has a session
$(\Amid, \ldots)$ then there is some OP with a
session $(\Wini, \Wmid, \ldots)$, and some GU with a
session $(S, \Gini,\ldots)$ such that 
\begin{enumerate}
\item $\tapl{\Wini}{\Gini}$, and $\tapl{\Wmid}{\Amid}$, and
\item $\msgs(\Gini) = \msgr(\Amid)$. 
\end{enumerate}
\item
We say $N$ under $M$ guarantees \emph{channel authenticity}
iff $N$ under $M$ guarantees this to both parties, GUs and ACs.
\end{definition}

\begin{definition}
We say a KE protocol $P$ is \emph{secure for plain TAGA} when $P$ 
guarantees secrecy and key freshness in the presence of passive adversaries.
\end{definition}

\begin{theorem}
\label{thm:plain:secureTAGA}
A TAGA process $T = (P, N, M)$ guarantees secure TAGA I if
$P$ is secure for plain TAGA, and $N$ under $M$ guarantees 
channel authenticity.
\end{theorem}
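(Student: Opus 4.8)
The plan is to split secure TAGA~I into its two faces --- the structural part (coinciding taps and matching message flows) and the cryptographic part (freshness and secrecy of $K$) --- obtaining the former directly from channel authenticity and the latter from $P$. The guiding observation, which is exactly the idea behind plain TAGA, is that matching message flows turn the active NFC adversary into a \emph{passive} one with respect to the protocol run carried over the transport: if what one party sent is literally what the other received, and vice versa, then the messages of that run were relayed faithfully, so $P$'s passive guarantee applies to it.

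First I would prove secure TAGA~I to a GU, which is essentially immediate. Assume $G$ has a session $(S, \Gini, \Gfin, K)$. Channel authenticity to $G$ hands us an OP session $(\Wini, \Wmid, \Wfin)$ and an AC session $(\Amid, \ldots)$ with $\tapl{\Wini}{\Gini}$, $\tapl{\Wmid}{\Amid}$, $\tapl{\Wfin}{\Gfin}$ and, crucially, $\msgs(\Gini) = \msgr(\Amid)$ and $\msgs(\Amid) = \msgr(\Gfin)$. The first three coincidences already discharge clauses (1) and (2). For the rest, the two message equalities say that $G$ and the AC ran $P$ on matching transcripts: the AC received exactly $G$'s first message (so it reads off the same service $S$) and $G$ received exactly the AC's reply, so by correctness of $P$ both compute the same key, i.e.\ the AC session is $(\Amid, S, K, \ldots)$. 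Since the transcript matched, the adversary was passive on this run, and as $P$ is secure for plain TAGA, $K$ is fresh and known only to $G$ and the AC, giving clause (3).

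Next I would prove secure TAGA~I to an AC, the harder direction, because channel authenticity to an AC supplies only the \emph{first} message flow. Assume $A$ has a \emph{confirmed} session $(\Amid, K, S, c)$. Channel authenticity to $A$ gives an OP session $(\Wini, \Wmid, \ldots)$ and a GU $G$ with a session whose first components are $(S, \Gini, \ldots)$, with $\tapl{\Wini}{\Gini}$, $\tapl{\Wmid}{\Amid}$ and $\msgs(\Gini) = \msgr(\Amid)$. The idea is to bootstrap from the GU case: $G$'s session is complete, say $(S, \Gini, \Gfin, K_G)$, so applying the already-proven GU direction to it yields an OP session $(\Wini^\star, \Wmid^\star, \Wfin^\star)$ and an AC $A^\star$ with session $(\Amid^\star, S, K_G, \ldots)$, coinciding taps, and $K_G$ fresh and known only to $G$ and $A^\star$. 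It remains to identify $A^\star$ with $A$ and $K_G$ with $K$. For this I would use uniqueness of the tap/ssn correspondence: $\tapl{\Wini}{\Gini}$ and $\tapl{\Wini^\star}{\Gini}$ force $\Wini = \Wini^\star$ (each GU ssn coincides with a single operator tap), an operator's initial tap determines its session, so $\Wmid = \Wmid^\star$; then $\tapl{\Wmid}{\Amid}$ and $\tapl{\Wmid}{\Amid^\star}$ force $\Amid = \Amid^\star$ (each tap coincides with a unique ssn), whence $A^\star = A$ and $K = K_G$. The confirmation status $c$ supplies the complementary guarantee that $G$ truly completed and agreed on the key: being confirmed, $A$ received a valid finish under $K$, which by the key-confirmation pass of $P$ can only originate from a holder of $K$, consistent with the secret $K_G$ held by $G$. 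Assembling, we obtain the full OP session $(\Wini, \Wmid, \Wfin)$, the full GU session $(S, \Gini, \Gfin, K)$ with $\tapl{\Wini}{\Gini}$ and $\tapl{\Wfin}{\Gfin}$, and $K$ fresh and secret, discharging all three clauses.

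The main obstacle is precisely this last identification in the AC direction: channel authenticity to an AC is genuinely weaker than to a GU (one flow rather than two), so the confirmed status $c$ must do real work, and pinning $A^\star = A$ rests on uniqueness of the correspondence between an operator's tap and the NFC ssn it coincides with, and between a GU ssn and the operator tapping it. I would therefore make these uniqueness facts explicit from the model --- in particular from each card, operator, and GU being in at most one TAGA process at a time --- before invoking them.
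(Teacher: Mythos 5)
Your GU direction matches the paper's proof: channel authenticity supplies the coinciding taps and the matching message flows, the matching flows render the adversary passive on that protocol run, and passive security of $P$ then gives freshness and secrecy of $K$. Your AC direction is also structurally close to the paper's --- both routes pass through a \emph{complete} session of $G$ and re-apply the GU-side guarantee to it in order to recover $\Wfin$ and $\Gfin$ --- and your tap/ssn-uniqueness chain for identifying $A^\star$ with $A$ is a workable alternative to the paper's use of key freshness to pin down $G$'s session (though, as you note, those uniqueness facts would have to be extracted explicitly from the model).

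The genuine gap is at the pivot of the AC direction: you assert that ``$G$'s session is complete, say $(S,\Gini,\Gfin,K_G)$,'' whereas channel authenticity to an AC only delivers a \emph{partial} GU session $(S,\Gini,\ldots)$ and the first message flow. Your later appeal to the confirmation status does not close this: you argue that the Finish ``can only originate from a holder of $K$,'' but you never rule out that this holder is the attacker, in which case $G$ need not have completed at all and the bootstrap has nothing to stand on. The missing step is exactly where the cryptographic hypothesis must be invoked in this direction: by channel authenticity to $A$, the key $A$ computes is derived from $G$'s genuine first message, so by secrecy under passive adversaries only $A$ and $G$ can know $K$; since $A$'s session is confirmed, someone other than $A$ demonstrated knowledge of $K$, and that someone must therefore be $G$; hence $G$ has completed a session with key $K$, unique by key freshness. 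This is how the paper argues, and it yields $K_G=K$ directly, leaving your uniqueness chain needed only to recover the coincidences $\tapl{\Wfin}{\Gfin}$ and $\Wmid=\Wmid^\star$.
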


\subsubsection{The Challenge of Channel Authenticity}
\label{s:plain:attacks}
We now illlustrate that it is rather challenging to obtain
channel authenticity. Even though TAGA takes place in a secure
zone, where only authorized personnel have access, 
the attacker has various indirect ways of compromising the 
TAGA transport, and combining them to MitM or other attacks. 

\paragraph*{Threats against the TAGA Channel}
\hspace{\fill}

\noindent
\emph{Swapping the Card.}
During a break a GU operator might reside in a less restricted area. 
An attacker with pickpocketing capability could  
use this as a window of opportunity to swap a counterfeit card
for the TAGA card (carried by the operator throughout the day).
Since pickpocketing can include social engineering such a card swap 
cannot be ruled out entirely even when the card must be worn physically 
attached to the operator,
e.g.\ by a lanyard. (A lanyard is a cord or strap worn around 
the neck or wrist for security.)

\emph{Eavesdropping.}
Another threat is that the attacker eavesdrops on the NFC exchange 
between the card and the GU or AC reader. While the nominal range 
of NFC communication is only 5-10 cm the range of eavesdropping can 
be considerably larger. In \cite{FK:05} Finke and Kelter first 
demonstrate that the communciation between a ISO-14443 reader 
and tag can be eavesdropped from 1-2 meters using a large loop
antenna. 
If a device is sending in active mode (like the GU and
AC reader) then it can even be eavesdropped on from a distance of up to
10 m \cite{HasBr:RFID06}.  
Moreover, based on simulation Kfir and Wool predict an eavesdropping 
distance of tens of meters when an active antenna is used \cite{Kfir:2005}.
At a parking slot, a large antenna could be hidden in large luggage, sports 
equipment, prams, wheelchairs, cages for pets, or instruments. 
A large antenna could even be hidden in clothing of an operator
such as a prepped safety vest swapped for the real one in a cafeteria.
Hence, it seems impossible that eavesdropping can be ruled out 
by practically enforceable measures in our settting.

\emph{Planting a Leech and Skimming.}
An important component in relay attacks against electronic
payment systems is a \emph{leech}, which is a device that fakes
a reader to a card. The leech is smuggled into a victim's pocket 
or bag so that it is close to her payment card. The leech can then 
power up the tag of the card, and communicate with it while remaining
undetected. The communication can be relayed by WLAN or 
cellular connection via an attacker's laptop to and from
a real payment terminal. A leech can also be employed as a skimmer 
to read out a tag (without writing to it). 

The success of a leech depends on whether it is
indeed able to activate the victim's card. This can be improved
by employing a `tuned leech' with a range beyond the nominal 5-10 cm,
e.g.\ by increasing the power of transmission and/or using larger
antenna. This has been shown to be possible: 
In \cite{Kfir:2005} Kfir and Wool predict a distance of
40-50 cm by simulation, and in \cite{KirWool:UsenixSec06} Kirschenbaum and Wool
have experimentally confirmed a skimming range of about 25 cm. 
In our setting, we cannot rule out that an attacker is able to
plant a leech device to skim or masquerade to the TAGA card:  
Similarly to card swapping an attacker could smuggle 
a leech device into an operators pocket while he is on a break. 

\emph{Malware Leech.}
A particular threat in our setting is that a standard device 
with NFC capability carried by authorized personnel 
is converted into a leech by malware. For example,
during the TAGA transport the operator might carry a smartphone
with NFC capability with her. An attacker with standard hacking 
capability could compromise the smartphone, e.g.\ by planting 
malware on the device via a Trojan app or a spear phishing email 
to the operator. While this allows for purely remote attacks, in contrast  
to a hardware leech, the success of a malware leech is constrained 
by the standard device's nominal range, and the operator
being in the habit of storing it close enough to the card.

\emph{Ghost Attacks.}  
The other important component in relay attacks against
electronic payment systems is a \emph{ghost}, which is
a device that fakes a card to a reader. TAGA seems protected
from ghost attacks by two factors: First, only authorized 
personnel can get close to the TAGA readers.
Second, TAGA employs passive cards, and one would expect that
for a ghost to communicate with a GU or AC reader it would have
to be within their nominal activation range.
However, in \cite{Kfir:2005} Kfir and Wool show how to build an 
\emph{active} ghost, which considerably increases the possible
distance to the reader. The idea is that the ghost transmits 
in a non-standard way using its 
own power but so that to the reader the transmission is indistinguishable from
that of a passive tag's. Based on simulation they  
predict a distance of up to 50 m for bi-directional communication.
To our knowledge this has not yet been experimentally confirmed
but it shows that a priori we cannot exclude ghost attacks. 
A ghost could either be remote or it could be infiltrated 
similarly to a leech device, in which case it can act from a 
closer distance.

\removed{
can be clearly formulated and realistically enforced is that he 
can eavesdrop on the NFC exchange between the card and the
TAGA controller (on either side). formula passive NFC. 
While it seems not easy for a practical attacker to do this,
there are many options: camouflage the antenna in bulky luggage 
(e.g. as a movable trampolin, surfboard), sth attached to the
operator herself (by social eningeering or swap of work jacket).
And not sufficient data available to rule out by a clear argument, 
perhaps because usually in secure applicaiton this is not necessary.
difficult to rule it out .}

\paragraph*{Attack Examples}
We now give examples of how these attack vectors can be
combined to carry out MitM attacks against a simple NFC system.
%
The first attack shows that the combination of
eavesdropping and card swapping is all the attacker needs to
obtain full MitM capability. 

\begin{figure}
\centering{
\begin{scaletikzpicturetowidth}{\columnwidth}
\begin{tikzpicture}[scale=\tikzscale,
yscale=0.75,
nfc/.style={->,>=stealth,shorten >=0.025cm},
wlan/.style={->,>=stealth,shorten >=0.025cm,dashed},
point/.style={radius=0.05}, 
dot/.style={circle,draw=black,fill=black, radius=0.025}]
{\scriptsize


\draw (-4,1) node[above] {GU $G$};
\draw (0,1) node[above] {$\Op$ with card $C_I$};
\draw (4,1) node[above] {AC $A$};
\draw (0,-8.25) node[left] {$\AP_I$};

\draw (5,0) node[above] {};
\draw (-5,0) node[above] {};

\draw (0,1) -- (0,-7.5);
\draw (0,-8.25) -- (0,-12.5); 
\draw (-4,1) -- (-4,-12.5);
\draw (4,1) -- (4,-12.5);

\coordinate (G1) at (-4,0.25);
\coordinate (G2) at (-4,-1);
\coordinate (G3) at (-4,-7);
\coordinate (G4) at (-4,-8);
\coordinate (G5) at (-4,-9.5);
\coordinate (G6) at (-4,-10);

\coordinate (O1) at (0,-1);
\coordinate (O2) at (0,-2);
\coordinate (O3) at (0,-3);
\coordinate (O4) at (0,-5);
\coordinate (O5) at (0,-6);
\coordinate (O6) at (0,-7);

\coordinate (A1) at (4,-3);
\coordinate (A2) at (4,-3.75);
\coordinate (A3) at (4,-5);
\coordinate (A4) at (4,-6);
\coordinate (A5) at (4,-9);
\coordinate (A6) at (4,-10);

\draw[nfc] (G2) -- (O1);
  \coordinate (G2O1) at (-2,-1);
\draw[nfc] (O3) -- (A1);
  \coordinate (O3A1) at (2,-3);
\draw[nfc] (A3) -- (O4);
  \coordinate (O4A3) at (2,-5);
\draw[nfc] (O6) -- (G3);
  \coordinate (G3O6) at (-2,-7);

  \coordinate (G5A5) at (-2,-9.5);
\draw[wlan] (G5) -- (0,-9.5);
\draw[wlan] (G6) -- (0,-10);
  \coordinate (G6A6) at (-2,-10);

\draw[dotted] (-4.25,-0.4) rectangle (0.25,-1.25);
\draw (-2,-1.25) node[below] {\emph{initial NFC tap}};
\draw[dotted] (-0.25,-2.4) rectangle (4.25,-5.25);
\draw (2,-2.4) node[above] {\emph{mid NFC tap}};
\draw[dotted] (-4.25,-6.4) rectangle (0.25,-7.25);
\draw (-2,-6.4) node[above] {\emph{final NFC tap}};

\draw (-2,-2.7) node[above, rotate=45] {Eavesdrop};
\draw (2.5,-1.3) node[above, rotate=45] {Eavesdrop};

\draw (G1) node[right, align=left] {Generate $(R_G, r_G)$};
\draw[fill] (G1) circle [point];
\draw (G2O1) node[above] {$S, G, R_G$};

\draw (O2) node[left] {Walk};
\draw (O3A1) node[above] {$S, G, R_I$};
\draw (A2) node[left,align=left] {Generate $(R_A, r_A)$};
\draw[fill] (A2) circle [point];
\draw (O4A3) node[above] {$S, A, R_A, \ssid_A$};
\draw (O5) node[left] {Walk};
\draw (G3O6) node [above] {$S, A, R_I, \ssid_I$};
\draw (G4) node[right] {$K_\GI := r_G R_I$};
\draw[fill] (G4) circle [point];
\draw (A4) node[left] {$K_\IA := r_A R_I$};
\draw[fill] (A4) circle [point];
\draw (G5A5) node[above] {Connect to $\ssid_I$};
\draw (G6A6) node[above] {Finish};

\draw[wlan] (0,-11.5) -- (4,-11.5);
\draw[wlan] (0,-12) -- (4,-12);
\draw (2,-11.5) node[above] {Connect to $\ssid_A$ };
\draw (2,-12) node[above] {Finish};

\draw (0,-9) node[right,align=left] {Receive $R_G$ (from Eavesdr.)\\ 
        $K_\GI := r_I R_G$};
\draw[fill] (0,-9) circle [point];

\draw (0,-10.5) node[right,align=left]
        {Receive $R_A$ (from Eavesdr.)\\
        $K_\IA := r_I R_A$};
\draw[fill] (0,-10.5) circle [point];
}
\end{tikzpicture}
\end{scaletikzpicturetowidth}
}
\caption{\label{fig:att:swap} Attack~\ref{att:swap}: 
MitM by Swap \& Eavesdrop}
\end{figure}
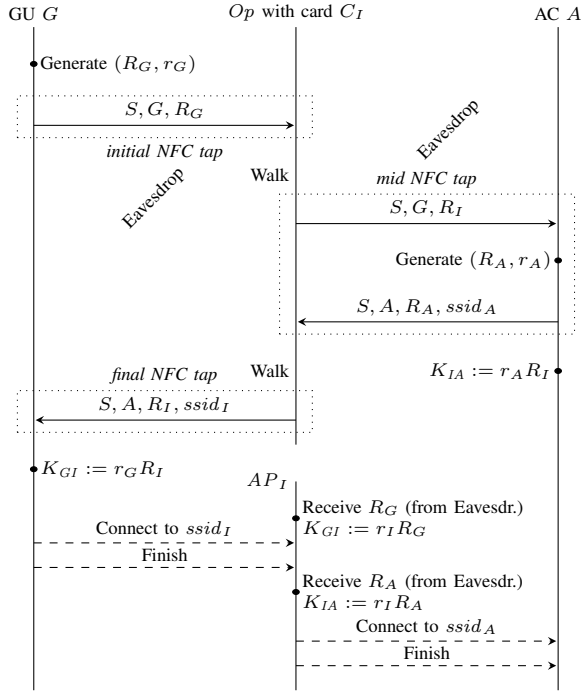

\begin{attack}[MitM by Swap \& Eavesdrop]
\label{att:swap}
Let $A$ be an AC and $G$ be a GU at parking slot $L$ so that
$G$ is to service $A$. In preparation, the attacker swaps
his own prepped card $C_I$ for the operator's card as described above. 
 Moreover, the attacker sets up NFC
eavesdropping capability, and his own WLAN access point $\AP_I$
in the range of $L$. Both $C_I$ and $\AP_I$ are prepped with
a fixed ephemeral key $(r_I, R_I)$, and an ssid $\ssid_I$ 
for the attacker's WLAN. 

The attack then proceeds as depicted in Fig.~\ref{fig:att:swap}. 
The card $C_I$ carries out the first tap as usual. However, with
the second tap the counterfeit card writes the attacker's public key 
$R_I$ to $A$ rather than $G$'s public key $R_G$. Similarly, with
the third tap the card writes $R_I$ and $\ssid_I$ to $G$ rather 
than $A$'s public key $R_A$ and ssid $\ssid_A$. Hence, $G$ 
computes session key $K_\GI$ based on $r_G$ and $R_I$, and
$A$ computes session key $K_\IA$ based on $r_A$ and $R_I$.
To be able to compute the same keys the attacker needs to
get $R_G$ and $R_A$ onto his access point $\AP_I$. Since the 
card only has a passive NFC interface he relies on eavesdropping to do so.
Once he has computed $K_\GI$ and $K_\IA$ he can establish the 
corresponding channels and mount a MitM attack.
\end{attack}

If the attacker has difficulties in setting up eavesdropping 
capability he can skim the keys $R_A$ and $R_G$ from the card
by a leech device instead. The leech could be planted on the 
operator at the same
time when the card is swapped, e.g. in a classical `pour drink over
shirt' attack.

The next attack demonstrates that a MitM attack might be
possible purely from remote, without having to break integrity or 
authenticity of the card. We assume that the operator is
in the habit of carrying her smartphone with NFC capability
in her pocket, and storing the TAGA card in the same pocket
next to the smartphone while she walks from the GU to the AC,
and similarly for the way back. This is a natural assumption: the 
operator typically needs both hands to carry the supply hose 
and might not like the dangling of a lanyard.

\removed{
\begin{figure}
\vspace{3cm}
\caption{\label{fig:att:phone} Attack~\ref{att:phone}}
\end{figure}}

\begin{attack}[MitM with Hacked Smartphone]
\label{att:phone}
Let $A$ be an AC and $G$ be a GU at parking position $L$ so
that $G$ is to service $A$. In preparation, the attacker 
hacks the smartphone of the operator of $G$, say $P_I$, and sets up
his WiFi point $\AP_I$ in the range of $L$. We assume that the
TAGA card is a simple store NFC card with a register Reg1 assigned
for the first message, and a register Reg2 for the second. 

The first NFC tap proceeds as usual, and writes $G$'s message 
into register Reg1 of the card. After the tap the
operator puts the card into her pocket next to her hacked
phone $P_I$. This triggers the NFC module of $P_I$, upon which it
reads the contents of Reg1 (with $G$'s public key) and then 
overwrites them with the attacker's own key $R_I$. 
Hence, when the operator arrives
at the AC and performs the (supposedly) second NFC tap
the AC reads the attacker's key $R_I$. The second message
pass is manipulated in the analogous fashion. The attacker
can easily relay the public keys $R_G$ and $R_A$
from the hacked smartphone to his access point $\AP_I$.
Now he can establish the corresponding channels, and act as MitM.
\end{attack}

We have implemented this attack against the current prototypical
implementation of TAGA. It works reliably as long as the 
operator observes her habit of
storing the TAGA card in the same pocket next to her
phone. If a more complex card than a simple store card is employed
this attack can be prevented by the card keeping 
state of the taps, and implementing input/output checks. However, such
measures can be overcome by a multi-session attack.

\begin{attack}[Multi-Session Variation]
When the phone $P_I$ first interacts with the card it will now 
perform three read/writes instead of one: the first impersonates
the AC and reads $G$'s public key; the second impersonates the GU 
during the final tap and concludes the current session of the card; 
the third starts a fresh session, where $P_I$ impersonates 
the GU and writes the attacker's key $R_I$.
The next tap will be carried out with the real AC in the appropriate
state. The AC has no means to notice that the attacker has skipped to
a new session and will accept $R_I$ as the GU's key.
On the way back the attacker can proceed in the analogous fashion,
and plant successfully his own key $R_I$ for the AC's key.
\end{attack}


Finally, we illustrate that if the attacker manages to install
a ghost then he can very easily stage attacks. In particular,
Imp2AC attacks only require one `fake tap'.

\begin{attack}[Imp2AC by Ghost]
Let $D_I$ be a remote or infiltrated ghost device that can 
reach the AC reader. Then $D_I$ can establish an NFC link with 
the AC, and masquerade as a TAGA card: $D_I$
writes the attacker's key $R_I$ to the AC, and obtains 
the AC's key $R_A$ in return. The resulting DH session key 
allows him to mount an Imp2AC attack against the AC.
\end{attack}

\removed{
\subsubsection{Full Dependency on GU Operator}
One assume card cannot be swapped and bound to the GU.
While the GU Operator is a trust anchor in ..
The following shows that by using two cards he can always
undermine the process.
he can do this fully covertly in his pocket. even when
under camera surveillance.
}

\subsubsection{Designs for the NFC System}
\label{s:plain:sols}

\paragraph*{Distribution Model}
The swap attack highlights that authenticity of the card is an important 
requirement. Hence, it is advisable that there is not just a pool of 
TAGA cards that are distributed daily to the operators but that a
distribution model is in place that allows for better accountability.
The following two models offer themselves.

In the \emph{OP-bound model} each GU operator is equipped with their 
personal card, which she can use on any GU that she is authorized
to operate. This model is particularly suitable when airport staff
are already provided with multi-functional access cards, into which
TAGA can be integrated as one of several applications. The card will
usually contain a public/private key pair that is bound to the
identity of the operator and a certificate for the public key
signed by the airport. 

In the \emph{GU-bound model} each GU is equipped with its own TAGA card. 
The card can be cryptographically bound to the GU reader so that
the GU can only be operated with its own card, and, conversely, 
the card will only work with the GU that owns it. To this end
a shared symmetric key can be pre-installed on both the card and 
the GU reader in a secure environment. 
During operative times the 
respective operator is responsible for 
handling the card while during non-operative times it is stored within 
the GU or within a secure compartment next to the GU reader.

\paragraph*{Securing the NFC Links between GU and Card}
Both of the models have the advantage that they come with 
long-term cryptographic keys that allow us to add mutual 
authentication and to secure the NFC links between the GU 
and the card. More precisely, when a new TAGA session 
is initiated with the first tap then the GU and card will run an 
AKE protocol to mutually authenticate each other and establish 
a session key. The session key will be used to secure the data 
exchange during the first tap as well as the final tap. 
In the GU-bound model authentication can be anchored in the 
symmetric key pre-installed on the card and GU reader. In
the OP-bound model this can be based on the public/private 
key pairs on the card and likewise on the GU reader and 
a PKI local to the airport.

\paragraph*{Detection of AC Masquerading}
Ensuring that the NFC exchange between the GU and the card
is authenticated has the advantage that AC masquerading (i.e.\
attacks where a leech device masquerades as the AC) can be detected.

A TAGA session of a card $C$ now consists of three steps:
First, $C$ performs an authenticated NFC exchange with a GU, say $G$. 
Second, $C$ carries out an unauthenticated NFC exchange with,
presumably, the AC. Third, $C$ carries out an authenticated NFC
exchange with, and only with, $G$ again. The implementation will
ensure that the card does not accept any other sequence of 
exchanges (apart from being reset by an authenticated GU).  
Hence, if the operator carries out 
the tap with the AC successfully then the second exchange cannot 
have resulted from a leech device because there can only be one
successful unauthenticated exchange. If a leech device masquerades
as the AC while the operator walks to the AC then her tap with
the AC will not be successful and she will abort the TAGA session
at the GU.

\paragraph*{Securing the NFC Link between Card and AC} 
Since the card and AC reader do not share any long-term keys 
(unless we break the `plain' paradigm) the NFC exchange between
the card and the AC cannot be secured with regards to origin 
authenticity. However, we can still protect the exchange from 
data modification as we now explain. 
Due to its channel characteristics
it is generally assumed that NFC is not susceptible to 
man-in-the-middle attacks, in particular when run in the 
active-passive mode. For example, compare \cite{HasBr:RFID06} 
for a precise argument.

\begin{assumption}
\label{ass:nfc}
It is pratically impossible to carry out a man-in-the-middle
attack over NFC in the active-passive mode.   
\end{assumption}

When the card and AC engage
in the mid tap then they can first run an unauthenticated KE protocol
such as the basic DH exchange and establish a session key to 
secure their data exchange.
\removed{
\footnote{
NFC security standards (NFC-SEC-01) specifies the following
mechanisms for use cases without common secret: ECDH for key agreement and
AES-CTR for data encryption and authentication.}}
While this will a priori not establish
that the AC securely communicates with the card, based on
Ass.~\ref{ass:nfc}, this will hold if, in addition, we can ascertain that 
the card is engaged in the same NFC link as the AC during the time. 
The other direction is analogous.


\paragraph*{Distance Bounding}
The features discussed so far still leave open that an attacker 
carries out remote attacks such as relay attack to undermine the 
NFC link between GU and card, or a ghost attack to masquerade
as a card to the AC. This can be prevented by distance bounding 
protocols \cite{mh13:db}, which guarantee an upper bound on the distance of
the card to the reader. To this end the reader measures the
round-trip time of cryptographic challenge/response pairs it
poses to the card. In our setting, the GU reader must bind this to the
authentication of the card to prevent distance hijacking \cite{Cremers:2012}. 
Simiarly, the AC reader must bind this to the ephemeral DH key 
of the card for the analogous reason.

\paragraph*{Secure Proximate Zone}
While distance bounding makes sure that the AC and GU reader
only communicate with a device close by process measures
we can ensure that only a TAGA card carried by an operator
comes closed to the reader. Together, this will ensure that
we obtain the coincidence between operator taps and NFC ssns
of card, AC, and GU as required for channel authenticity.

\begin{definition}
Let $d$ be the precision of the distance bounding protocol. 
$M$ guarantees \emph{secure proximate zone}, if the following holds:
no NFC device other than a card hold by an operator who
presumes this to be a TAGA card is brought within distance
$d$ of the GU reader, and AC reader respectively.
\end{definition}


\removed{
\begin{table}

\begin{tabular}{lll}
 & GU-bound card & personal card \\
\hline
Distr & GU-bound & personal \\
Auth  & \parbox{3.5cm}{key $K$ shared between $G$ and $C(G)$ }
      & \parbox{3.5cm}{public/private key pair stored on $C$ and
                     PKI of airport domain}\\
GU link & \parbox{3.5cm}{$K'$ for each taga session symmetric $KE(K)$}
        & \parbox{3.5cm}{$K$ for each taga session KE based on $Kp/Kpriv$}\\
DB  & \parbox{3.5cm}{dB bound to $K$}
     & \parbox{3.5cm}{bound to $Kp$} \\
AC link & AES DH & AES DH\\
DB  & bound to $K$ & bound to $Kp$ \\
\end{tabular}

\caption{ \label{t:cards}}
\end{table}}

Altogether, we arrive at a specification of NFC systems
for the GU-bound model, and the OP-bound model respectively.
We denote the first by $\NGU$, and the second by $\NOP$. 

\begin{theorem}
\label{thm:plain:N}
The NFC systems $\NGU$ and $\NOP$ can be implemented so that
under process measures that include `secure proximate zone' 
they guarantee channel authenticity.
\end{theorem}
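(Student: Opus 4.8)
The plan is to instantiate both $\NGU$ and $\NOP$ with the four mechanisms developed above and then to read the two directions of channel authenticity off a single ``three-phase'' card session. Concretely, I would implement every GU--card exchange as a mutually authenticated one (anchored in the shared symmetric key in the GU-bound case, and in the local PKI together with the card's key pair in the OP-bound case); equip the card with a state machine that accepts a TAGA session only as the sequence \emph{authenticated GU exchange}, then \emph{one unauthenticated exchange}, then \emph{authenticated GU exchange}, resettable only by an authenticated GU; secure the card--AC link by an unauthenticated but integrity-protected KE; and add distance bounding bound to the card's authentication on the GU reader and to its ephemeral DH key on the AC reader. The whole argument then turns on the fact that, under \emph{secure proximate zone}, exactly one such card session is forced to exist behind every GU and every AC session.

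For the direction \emph{to a GU}, I would start from a GU session $(S, \Gini, \Gfin, \ldots)$. Since $\Gini$ and $\Gfin$ are authenticated, $G$'s session state ties them to one and the same card $C$ (the unique card holding the matching session key), and distance bounding bound to that authentication places $C$ within the precision $d$ of $G$'s reader during both exchanges. \emph{Secure proximate zone} then supplies an operator $\Op$ carrying $C$, whose taps give $\tapl{\Wini}{\Gini}$ and $\tapl{\Wfin}{\Gfin}$. The card's state machine yields exactly one unauthenticated exchange of $C$ between $\Gini$ and $\Gfin$; the operator performs the matching mid tap $\Wmid$ at some AC reader $A$. Distance bounding on $A$'s reader (bound to the DH key) together with \emph{secure proximate zone} forces the partner of $A$'s exchange $\Amid$ to be exactly $C$, and Assumption~\ref{ass:nfc} excludes an in-band man-in-the-middle, so $\Amid$ and the card's unauthenticated exchange are the two ends of one NFC link and $\tapl{\Wmid}{\Amid}$. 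Integrity of the GU--card links plus the fact that $C$ only relays its payload give $\msgs(\Gini) = \msgr(\Amid)$ and $\msgs(\Amid) = \msgr(\Gfin)$, which is exactly clause~(2) of the definition.

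The direction \emph{to an AC} is the mirror image. From an AC session $(\Amid, \ldots)$, distance bounding on $A$'s reader and \emph{secure proximate zone} again pin the partner of $\Amid$ to a card $C$ carried by an operator, giving $\tapl{\Wmid}{\Amid}$. Because $C$ is genuine, its state machine guarantees that the unauthenticated exchange realising $\Amid$ was preceded by an authenticated GU exchange $\Gini$ of some GU $G$; this yields the tap $\tapl{\Wini}{\Gini}$, the operator session $(\Wini, \Wmid, \ldots)$, and the partial GU session $(S, \Gini, \ldots)$, while faithful relaying by $C$ gives $\msgs(\Gini) = \msgr(\Amid)$. Both directions together establish that $N$ under $M$ guarantees channel authenticity.

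I expect the main obstacle to be the unauthenticated card--AC link, where origin authentication is missing and the remote attacks of Sec.~\ref{s:plain:attacks} (relay, ghost, malware leech) must be excluded simultaneously. The argument has to lean on three ingredients at once: distance bounding \emph{bound to the ephemeral DH key} (to block distance hijacking and to certify that the responding device is the one that will hold the key), \emph{secure proximate zone} (to certify that the only NFC device within $d$ can be the operator's card, never a planted ghost or a relay endpoint), and Assumption~\ref{ass:nfc} (to rule out an in-band man-in-the-middle on the link itself). Equally delicate, and what I would treat by an explicit case analysis over the card's admissible transition sequences, is showing that the state machine --- resettable only by an authenticated GU --- forces \emph{exactly one} unauthenticated exchange between the two authenticated GU exchanges; this is the step that closes the multi-session variation of Attack~\ref{att:phone}.
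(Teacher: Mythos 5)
Your proposal is correct and follows essentially the same route as the paper: the paper merely factors the argument into an abstract lemma (the named ``guarantees'' to GU, card and AC, plus distance bounding and secure proximate zone, imply channel authenticity) and a concrete lemma that $\NGU$ and $\NOP$ meet those guarantees, but the chain of inferences --- authenticated GU--card exchanges pin down a single authentic card, distance bounding plus the secure proximate zone convert reader NFC sessions into operator taps and back, the card's single admissible unauthenticated exchange is identified with the AC's session, and Assumption~\ref{ass:nfc} secures that link --- is exactly the one you give. The only ingredient you leave implicit is the operator-discipline assumption (Def.~\ref{def:op:welld}), which the paper invokes to force the existence of a complete operator session with a successful mid tap at an AC between the two GU taps.
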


\subsubsection{Multi-Instance Resilience and Discussion}
\label{s:plain:resil}

To sum up, we have obtained secure plain TAGA based on the following
insight: While the large secure zone established around an AC during 
turnaround leaves the TAGA channel open to many indirect attacks we 
can bootstrap channel authenticity from binding the card to the GU
(either one-to-one or via a PKI local to the airport), 
ensuring that the AC and GU 
readers accept requests only from a small \emph{proximate zone} and 
guaranteeing that this proximate zone is secure in a strong sense.

\begin{corollary}
The TAGA processes $T_\GU = (\bDH, \NGU, M)$ and $T_\OP = 
(\bDH, \NOP, M)$ both guarantee secure TAGA I.
\end{corollary}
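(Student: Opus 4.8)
The plan is to read this corollary as a direct instantiation of the two main results of this section, Theorem~\ref{thm:plain:secureTAGA} and Theorem~\ref{thm:plain:N}, specialised to the concrete protocol $\bDH$ and to the two NFC systems $\NGU$ and $\NOP$. Theorem~\ref{thm:plain:secureTAGA} tells us that any process $T = (P, N, M)$ guarantees secure TAGA~I once two independent hypotheses hold: (i)~$P$ is secure for plain TAGA, and (ii)~$N$ under $M$ guarantees channel authenticity. So the entire argument reduces to discharging these two hypotheses for $P = \bDH$ and for $N \in \{\NGU, \NOP\}$, with $M$ the set of process measures developed above that includes `secure proximate zone'.

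For hypothesis~(i) I would simply recall the classical security of the Diffie--Hellman exchange: against a passive adversary the shared key $K$ remains secret (under the standard Diffie--Hellman hardness assumption in the chosen group), and because each of $G$ and $A$ draws a fresh ephemeral exponent $r_G$, $r_A$ per session, the resulting $K = r_G R_A = r_A R_G$ is fresh. This is exactly the content of the definition of `secure for plain TAGA', so $\bDH$ satisfies it. No new argument is needed here beyond the well-known passive-adversary guarantee of DH that the paper has already appealed to in motivating the `plain' approach.

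Hypothesis~(ii) is handed to us verbatim by Theorem~\ref{thm:plain:N}: both $\NGU$ and $\NOP$, implemented as specified and operated under process measures $M$ that include `secure proximate zone', guarantee channel authenticity. Applying Theorem~\ref{thm:plain:secureTAGA} to $T_\GU = (\bDH, \NGU, M)$ and to $T_\OP = (\bDH, \NOP, M)$ then yields secure TAGA~I for each, which is the claim.

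I do not expect a substantial obstacle, since this is a corollary that merely composes earlier theorems; the only point requiring care is the book-keeping of hypotheses. Specifically, one must ensure that the $M$ named in the corollary is precisely the measure set assumed by Theorem~\ref{thm:plain:N} (in particular that it includes `secure proximate zone' together with the distance-bounding bindings discussed above), and that the `can be implemented' clause of Theorem~\ref{thm:plain:N} is taken in its instantiated form for $T_\GU$ and $T_\OP$. The conceptually interesting work --- that channel authenticity effectively downgrades an active attacker on the TAGA transport to a passive one, so that the passive security of DH suffices --- is already absorbed into Theorem~\ref{thm:plain:secureTAGA} and needs no re-derivation here.
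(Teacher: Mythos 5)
Your proposal is correct and matches the paper's intended (implicit) argument exactly: the corollary follows by instantiating Theorem~\ref{thm:plain:secureTAGA} with $P=\bDH$ and $N\in\{\NGU,\NOP\}$, discharging hypothesis~(i) via the classical passive-adversary secrecy and freshness of Diffie--Hellman and hypothesis~(ii) via Theorem~\ref{thm:plain:N}. The paper offers no separate proof for this corollary precisely because it is this immediate composition, and your caveat about $M$ containing `secure proximate zone' is the right book-keeping point.
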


Moreover, these TAGA processes come with a strong notion of
\emph{multi-instance resilience}. Let $T$ be a TAGA process, and
let $T_1$, $T_2$, ..., $T_n$ be $n$ instances of $T$.
Let for each $i \in [1,n]$, $G_i$ be the
GU, $A_i$ the AC, and $O_i$ the GU operator respectively.
We say the instances $T_1, \ldots, T_n$ are \emph{in parallel} if
the respective GUs, ACs, and GU operators are distinct:
i.e.\ for all $i, j \in [1,n]$, $G_i = G_j$ implies $i = j$,
and analogously for the ACs, and GU operators.

\begin{definition}
We say a secure TAGA process $T$ guarantees
\emph{resilience against parallel scaling of attacks} if
the effort for the attacker to successfully attack $n$ parallel 
instances of $T$ grows linearly with $n$ wrt a factor of physical
effort the attacker has to spend (e.g.\ work force, breach of a
physical security measure, compromise of personnel).
\end{definition}

\begin{theorem}
\label{thm:plain:resil}
$T_\GU$ and $T_\AC$ are resilient against
parallel scaling of attacks (assuming that system integrity of the GU, 
AC, and card is resilient in the same way). 
\end{theorem}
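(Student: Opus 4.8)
The plan is to reduce a successful attack on any single instance to a physical intervention that is \emph{local} to the resources of that instance, and then to argue that this locality forces disjointness across the $n$ parallel instances, so that the attacker's physical effort is necessarily additive.

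First I would fix an arbitrary instance $T_i$ and recall why it resists attack in isolation. By the preceding Corollary both $T_\GU$ and $T_\OP$ guarantee secure TAGA~I, and this was obtained through Theorem~\ref{thm:plain:secureTAGA} from two hypotheses: passive security of $\bDH$ and channel authenticity of the NFC system under the process measures $M$ (Theorem~\ref{thm:plain:N}). Hence a successful attack on $T_i$ must defeat one of these hypotheses for instance $i$: either it breaks secrecy or freshness of the $\bDH$ key, which is a cryptographic break excluded by assumption (and in any case not a \emph{physical} effort factor), or it violates channel authenticity for instance $i$.

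Second I would enumerate, for both designs $\NGU$ and $\NOP$, the localized pillars on which channel authenticity rests in Theorem~\ref{thm:plain:N}: (i)~the binding and authentication between the card and the GU (anchored in the symmetric key shared by $G_i$ and its card in the GU-bound model, or in $O_i$'s private key and the airport PKI in the OP-bound model); (ii)~the distance bounding of the two readers of $G_i$ and $A_i$; and (iii)~the \emph{secure proximate zone} around each of those two readers. I would show that defeating any one of these for instance $i$ forces at least one physical action tied to a resource of instance $i$ alone --- compromising the long-term key on $O_i$'s card or in $G_i$'s secure element, smuggling an NFC device into the proximate zone of the reader of $G_i$ or of $A_i$, or physically defeating one of those two distance-bounded links. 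Compromise of the devices $G_i$, $A_i$, or the card themselves is handled directly by the hypothesis that system integrity is resilient in the same parallel sense. Then, invoking the definition of parallel instances, the $G_i$, the $A_i$, and the $O_i$ are pairwise distinct, and in either distribution model so are their cards and reader locations; hence the local resources underlying these actions are disjoint across $i$, no single breach or planted device serves two instances, and summing the per-instance effort factors bounds the total below by $n$ times a constant --- exactly linear growth.

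The main obstacle is step two together with this disjointness claim: I must argue that the design leaves \emph{no} shared, reusable attack vector. In plain TAGA the dangerous vectors were precisely the remote ones --- a single large eavesdropping antenna or one hacked backend covering several slots --- so the crux is to verify that distance bounding, the authenticated card--GU link, and the secure proximate zone \emph{jointly} eliminate every non-local vector, reducing the attacker to per-slot physical presence or per-device compromise; the integrity-resilience hypothesis is what closes the last conceivable shared route (a batched device compromise). Making the notion of ``physical effort factor'' precise enough to be summed additively is the remaining delicate point.
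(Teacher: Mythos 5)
Your proposal is correct and follows essentially the same route as the paper's own (much terser) argument: reduce any successful attack to a violation of channel authenticity, observe that even a globally reusable credential compromise (a counterfeit key pair in the OP-bound model, a master key in the GU-bound model) still forces a per-instance physical act such as swapping the card or infiltrating the proximate zone, and conclude linearity from the disjointness of the GUs, ACs, and operators across parallel instances. Your version merely makes explicit the case analysis and the disjointness step that the paper leaves implicit.
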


On the downside, plain TAGA comes with the caveat that its security 
heavily depends on the GU operator. Any security proof will rest on
the premise that the operator observes procedures
such as verifying that the AC has successfully received the tap,
and aborting TAGA at the GU when an error occurs. Moreover,
if a GU operator is compromised she can always mount a MitM
attack in the following way:  
While she carries out the GU taps with the authentic card as usual
she performs the AC tap with her own counterfeit card and
uses a leech device to masquerade to the authentic card as
AC reader. The latter can be carried out secretly in her pocket
sometime during the TAGA walk. Of course,
the GU operator is already a trusted party wrt safety of the turnaround.
Nevertheless, from a security viewpoint it seems unorthodox to make 
her a trusted third party in the key establishment. For example,
this has disadvantages
when it comes to IT forensics and liability claims.
A similar problem arises when ground units of different providers
with different security levels are to participate in TAGA: Plain TAGA 
assumes that all ground units are equally trustworthy. 


\removed{
\begin{enumerate}
\item \emph{Channel Authenticity} to a GU $G$ if, whenever
$G$ has sent a message $M_1$, and accepted a message $M_2$ 
as first and second message of the same TAGA session then 
there is TAGA unit $U$ such that $U$ has received $M_1$ 
as first message, and transmitted $M_2$ has third tap of the
TAGA same session.

\item to an AC $A$ if, whenever $A$ has read a message $M_1$ and
written a message $M_2$ as TAGA message then there is a TAGA unit
$U$ such that $U$ has writen $M_1$ and read $M_2$ as mid tap
of a TAGA session. 

\item to $U$ if, whenever $U$ reads $M_1$ accepts as first tap then
it writes it as secodn tap, and when it reads $M_2$ as second tap 
it writes it as third tap. 
\end{enumerate}
}

\section{Authenticated TAGA}
\label{s:auth}

\subsubsection{Preliminaries}

\label{s:auth:prel}
In the setting of authenticated TAGA, every AC $A$ has a long-term 
key pair $(W_A, w_A)$, where
$W_A$ is the public key and $w_A$ is the private key. Moreover, $A$ holds a 
certificate for its public key $W_A$, which is issued by the airline $\mcA$
that owns $A$ (or an entity commissioned by $\mcA$). We denote the
certificate by $\cert_\mcA(A, W_A, T_A, V_A)$, where $T_A$ is the aircraft
type of $A$, and $V_A$ specifies the validity period of the certificate. 
 
Analogously, every GU $G$ has a long-term key pair $(W_G, w_G)$,
and a certificate for its public key $W_G$, which is issued by
the airport $\mcH$ that harbours $G$ (or an entity commissioned by
$\mcH$). We denote the certificate by $\cert_\mcH(G, W_G, S_G, V_G)$,
where $S_G$ is the service type of $G$ and $V_G$ is the validity
period of the certificate.

Moreover, we assume that every AC has installed the root certificates
of those airports it intends to land, and each GU has installed
the root certificates of those airlines it is authorized to handle.
Since a GU is specific to its airport, and an AC knows at which
airport it is currently parked (available from its electronic flight
system) we require that when an AC verifies a GU certificate
received during TAGA it will check that the airport within the
certificate agrees with its own location.

\begin{notation}
For short notation of certificates we often leave away the issueing party,
type of aircraft or service, and/or validity period 
when this is implicitly clear from the context.
\end{notation}

\removed{
\begin{assumption}
\label{ass:publicKeys}
Since certificates only contain public information we assume that
an attacker can obtain them. Usually he can also simply obtain
certificates by eavesdropping. The same applies to ssids of
WLANs.


\end{assumption}
}

\subsubsection{Concept and Security}
We say a TAGA process $(P, N, M)$ is \emph{authenticated} when $P$ 
is an authenticated KE (AKE) protocol. While authenticated TAGA 
allows us to establish secure keys without having to rely on channel 
authenticity without the latter we cannot expect to reach Secure 
TAGA I: We lose the guarantee that the OP taps and NFC ssns of the
GU and AC coincide in the expected way. However, we can reach another
notion of security for TAGA that, together with a straightforward
requirement on the setup process, allows for a match between 
cyber channels and physical
connections as required for secure setup.

\emph{Secure TAGA II} guarantees that when a GU $G$ and an AC $A$ 
successfully complete
a TAGA session then they are located at 
the same parking slot. This can be established by the underlying
AKE protocol in terms of data agreement when the GU and AC  
provide their location as input to the protocol. The requirement on the 
setup process is that the GU confirms that its physical setup is 
complete over the 
established cyber channel. The AC can then conclude that it is also
physically connected to its communication partner: $G$ has confirmed 
it is ready, and since $G$ is located at the same parking slot as
$A$, it must indeed be connected to $A$ and not to some other AC.

\begin{definition}
Let $T$ be a TAGA process.
\item
We say $T$ guarantees \emph{Secure TAGA II} to a GU $G$ if,
whenever $G$ has a session $(S, \Gini, \Gfin, A, K)$ then 
\begin{enumerate}
\item $A$ has a session $(\Amid, S, G, K, \ldots)$,
\item $K$ is a fresh key known only to $G$ and $A$,  and
\item $A$ is located at the same parking slot as $G$. 
\end{enumerate}
\item
Similarly, we say $T$ guarantees \emph{Secure TAGA II} to an AC $A$ if,
whenever $A$ has a session $(\Amid, S, G, K, c)$ then
\begin{enumerate}
\item $G$ has a session $(S, \Gini, \Gfin, A, K)$, 
\item $K$ is a fresh key known only to $A$ and $G$, and
\item $G$ is located at the same parking slot as $A$.
\end{enumerate}
\item
We say $T$ guarantees \emph{Secure TAGA II} iff $T$ guarantees
Secure TAGA II to both parties, GUs and ACs.
\end{definition}

\begin{definition}
\label{def:auth:suit}
A process flow $F$ for ground service $S$ is \emph{suitable for
secure TAGA II} if it guarantees: 
(1)~When a GU of type $S$ has completed its physical setup it 
confirms to the end point of its cyber channel for $S$
that it is ready.
(2)~Before an AC engages into service $S$ it waits until each
end point of its $n_F$ cyber channels for $S$ has confirmed that it is ready. 
\end{definition}

\begin{theorem}
\label{thm:setupAuth}
Let $F$ be a process flow for some service, and $T$ be a TAGA process such
that $F$ includes $T$. If $F$ is suitable for secure TAGA II
and $T$ satisfies secure TAGA II then $F$ guarantees correctness of setup.
\end{theorem}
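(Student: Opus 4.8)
The plan is to unfold both directions of \emph{correctness of setup} and discharge each by combining \emph{Secure TAGA II} (for the cyber guarantees: a matching partner session, a fresh key known only to the two parties, and agreement on the parking slot) with the suitability conditions of Def.~\ref{def:auth:suit} and the structural requirements built into any process flow (the coordinator placing exactly $n_F$ GUs with positions $P_F$, and the cyber-connection guarantees for GUs and ACs at the end of the setup phase). Throughout I fix a parking slot $L$ and write $\AC(L)$ for the unique AC present at $L$; the one standing physical assumption I rely on is that a GU stationed at $L$ can attach its supply hose only to the AC present at $L$, i.e.\ to $\AC(L)$.

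\emph{GU direction.} Suppose $G$ engages into the M2M phase of $F$ at $L$. Since the M2M phase follows the setup phase, $G$ has completed its setup, so the process-flow requirement giving a GU a cyber connection yields a session $(S, \Gini, \Gfin, A, K)$ for some AC $A$ and key $K$. Applying Secure TAGA II to $G$ shows that $A$ has a matching session, that $K$ is fresh and known only to $G$ and $A$, and that $A$ sits at the same parking slot as $G$, namely $L$; by uniqueness $A = \AC(L)$, which discharges the shared-secure-channel clause~(2) for GUs. For clause~(1), $\pos(\{G\}) \in P_F$ holds because the ground coordinator places exactly the GUs realizing positions $P_F$ at $L$ and $G$ is among them, while the physical connectedness of $G$ to $\AC(L)$ follows from $G$ having completed its setup together with the standing physical assumption.

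\emph{AC direction.} Suppose $A := \AC(L)$ engages into the M2M phase at $L$. By suitability condition~(2) of Def.~\ref{def:auth:suit}, $A$ does so only after each endpoint of its $n_F$ cyber channels for $S$ has confirmed readiness, and by the process-flow requirement for ACs it then holds exactly $n_F$ confirmed sessions $(\Amid^i, S, G_i, K_i, c)$. Secure TAGA II applied to each session supplies a GU $G_i$ with a matching session, a fresh $K_i$ known only to $A$ and $G_i$, and $G_i$ located at $L$. Because a GU participates in at most one TAGA process at a time, distinct sessions name distinct GUs, so the $G_i$ are pairwise distinct; combined with the coordinator placing exactly $n_F$ GUs at $L$ with $\pos(\{G_1,\dots,G_n\}) = P_F$, these $G_i$ must be precisely the coordinated GUs, which yields both the position claim and the \emph{and only with} part of clause~(2). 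For the physical claim~(1), the readiness confirmation that $A$ received over channel $K_i$ certifies, via suitability condition~(1), that $G_i$ has completed its physical setup; since $G_i$ is at $L$, the standing physical assumption forces this connection to be to $A$, and the same counting argument supplies the \emph{and only to} part.

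The main obstacle is the cyber-to-physical bridge on the AC side: Secure TAGA II by itself certifies only a shared fresh key and agreement on location, never that the supply hose is physically attached to the matching partner. The argument must therefore route the physical conclusion through the readiness confirmation, which travels over the already-authenticated cyber channel, so that $A$ learns its cyber partner $G_i$ has finished its physical setup, and then close the loop using location agreement together with the uniqueness of the AC at $L$. I expect the bookkeeping for the two \emph{only}-clauses — matching the $n_F$ cyber channels bijectively against the $n_F$ coordinated GUs at $L$ — to be the most delicate part, since it is precisely here that the multiset $P_F$ and the at-most-one-session-per-GU constraint must be invoked together.
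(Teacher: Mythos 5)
Your proof is correct and follows the only natural route here; the paper itself dismisses this theorem with ``straightforward to check,'' and your argument is precisely the intended unfolding: Secure TAGA~II supplies the matching partner session, key secrecy/freshness, and location agreement; the suitability conditions and the process-flow requirements (coordinator placement, cyber-connection guarantees, readiness confirmation) supply the physical side and the counting for the two \emph{only}-clauses. No gaps worth flagging.
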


To implement the requirement of Def.~\ref{def:auth:suit} the GU control needs  
to ``know'' when the physical setup is complete. This can be realized 
by means of a `ready button' at the GU to be pressed by the operator 
on completion of the setup. Alternatively,  a `ready signal' can be 
triggered when the GU's machinery (such as air supply or 
fuel pump) is activated by the operator, which is typically 
the last action of a GU setup. 

To reach Secure TAGA II we only need to choose an AKE protocol
that satisfies standard secrecy and authentication properties,
and extend it so that it also guarantees agreement on service and location. 
GUs and ACs can communicate their location explicitly as the 
number of the parking slot they are located at. In this case the 
parking slot needs to be provided by the GU operator, and pilot 
respectively, via a user interface. Alternatively, GUs and ACs can communicate 
their location in terms of GPS coordinates. This has the advantage that the data
can be provided automatically by their positioning system.
Due to safety distances that
need to be kept between ACs on the tarmac we expect that this will
be sufficiently precise to verify whether a GU and AC are on the
same parking slot. 

\removed{
We require the underlying protocol to satisfy standard
properties of KE protocols such as secrecy, key freshness, and
mutual authentication, which we capture as injective agreement between the
runs of the two parties \cite{lowe:auth97}. For TAGA we also
require that this includes agreement on location and service.
Moreover, the property
`Opposite Type' reflects that entities have a fixed role in our setting.
}

\begin{definition}
A KE protocol $P$ is \emph{secure for authenticated TAGA}
when $P$ satisfies secrecy, key freshness, opposite type, 
and agreement on peer, service and location in the presence of active
adversaries. (C.f.\ App.~\ref{app:prot:props})
\end{definition}

\begin{theorem}
\label{thm:tagaAuth}
Given any $N$ and $M$, $(P, N, M)$ guarantees secure TAGA II if 
$P$ is secure for authenticated TAGA.
\end{theorem}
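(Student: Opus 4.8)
The plan is to reduce Secure TAGA II entirely to the guarantees of the underlying AKE protocol $P$, treating the NFC system $N$ and the process measures $M$ as part of the (active) network that the attacker controls. The conceptual point, which explains why the statement holds for \emph{any} $N$ and $M$, is that every manipulation of the TAGA transport available to the attacker --- card swapping, eavesdropping, leech/skimmer, ghost, or hacked smartphone, as catalogued in Section~\ref{s:plain:attacks} --- only ever injects, drops, duplicates, or alters the protocol messages carried by the NFC ssns and the WLAN pass. All of this is subsumed by the active (Dolev--Yao) adversary against which $P$ is assumed secure for authenticated TAGA, so no property of $N$ or $M$ is needed.

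First I would fix the correspondence between TAGA sessions and runs of $P$. A GU session $(S, \Gini, \Gfin, A, K)$ is read as a completed initiator run of $P$: the first pass is carried in $\msgs(\Gini)$, the second pass is received in $\msgr(\Gfin)$, and on accepting this run the GU outputs peer identity $A$, agreed service $S$, its own location, and session key $K$. An AC session $(\Amid, S, G, K, \cst)$ is read as a responder run: the first pass is received and the second pass sent within $\Amid$, and the status $\cst = c$ records receipt of the third pass (the Finish / key-confirmation message over WLAN). With the location supplied as protocol input (a parking-slot number via operator and pilot, or GPS from the positioning system), the agreement-on-location clause of ``secure for authenticated TAGA'' speaks exactly about the data needed for clause~(3) of Secure TAGA II.

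Next I would discharge the two directions separately. For the GU direction, suppose $G$ has a session $(S, \Gini, \Gfin, A, K)$; then $G$ has completed its initiator run having received the second pass. Agreement on peer gives a matching AC run executed by $A$ with $G$ as its peer, so $A$ has a session $(\Amid, S, G, K, \ldots)$; \emph{opposite type} guarantees that this partner indeed runs the AC role (ruling out GU/GU or AC/AC confusions and matching the shapes of the session tuples); agreement on service and location yield the shared $S$ and the same parking slot; and secrecy together with key freshness give clause~(2), that $K$ is fresh and known only to $G$ and $A$. The AC direction is symmetric, but here the confirmed status $\cst = c$ is essential: only after receiving the Finish message does $A$ obtain (injective) agreement that $G$ completed a matching run, which is precisely what the key-confirmation third pass provides; with that in hand the same four properties deliver clauses~(1)--(3) of Secure TAGA II to $A$.

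The main obstacle I anticipate is not in the agreement argument itself --- that is a direct invocation of the assumed properties of $P$ --- but in the two bridging steps. The first is justifying rigorously that the NFC-transport layer, under arbitrary attacker behaviour, is faithfully modelled as an active adversary acting on exactly the messages of $P$, so that $P$'s guarantees transfer verbatim to the TAGA-session abstraction; one must check that neither $N$ nor $M$ can cause a party to accept a run on data outside the threat model of $P$. The second is the step from ``agreement on the location input'' to ``same parking slot'': this rests on the earlier observation that, given the safety distances kept between aircraft on the tarmac, the location data (slot number or GPS coordinates) is precise enough that agreement on it forces co-location at a single slot. I would state this precision as the standing assumption it is and fold it into the agreement-on-location property, after which clause~(3) in both directions follows immediately.
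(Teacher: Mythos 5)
Your proposal is correct and follows the same route the paper takes: the paper's own proof is the one-liner ``straightforward from the definitions,'' and your argument is exactly the spelled-out version of that, mapping each clause of Secure TAGA~II onto the corresponding property in the definition of ``secure for authenticated TAGA'' (agreement on peer for the matching session, secrecy plus key freshness for clause~(2), agreement on location for clause~(3), with opposite type fixing the roles). Your added discussion of the bridging steps (modelling $N$ and $M$ as part of the active adversary, and the precision of the location input) is a reasonable elaboration of assumptions the paper leaves implicit.
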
 

\begin{figure}
\centering{
\begin{scaletikzpicturetowidth}{\columnwidth}
\begin{tikzpicture}[scale=\tikzscale,
yscale=0.8,
nfc/.style={->,>=stealth,shorten >=0.025cm},
wlan/.style={->,>=stealth,shorten >=0.025cm,dashed},
dot/.style={circle,draw=black,fill=black, radius=0.025}]
{\scriptsize


\draw (-4,1) node[above] {GU $G$};
\draw (0,1) node[above] {OP with MU};
\draw (4,1) node[above] {AC $A$};

\draw (0,1) -- (0,-10);
\draw (4,1) -- (4,-3.5);
\draw (4,-6.75) -- (4,-15);

\draw (-4,1) -- (-4,0.25); 
\draw (-4,-0.25) -- (-4,-10);
\draw (-4,-12.8) -- (-4,-15);

\coordinate (G1) at (-5,0);
\coordinate (G2) at (-4,-1);
\coordinate (G3) at (-4,-9.5);
\coordinate (G4) at (-5,-10);
\coordinate (G5) at (-4,-14);
\coordinate (G6) at (-4,-14.5);

\coordinate (O1) at (0,-1);
\coordinate (O2) at (0,-2);
\coordinate (O3) at (0,-3);
\coordinate (O4) at (0,-7.5);
\coordinate (O5) at (0,-8.5);
\coordinate (O6) at (0,-9.5);

\coordinate (A1) at (4,-3);
\coordinate (A2) at (5,-3.5);
\coordinate (A3) at (4,-7.5);
\coordinate (A4) at (4,-6);
\coordinate (A5) at (4,-14);
\coordinate (A6) at (4,-14.5);

\draw[nfc] (G2) -- (O1);
  \coordinate (G2O1) at (-2,-1);
\draw[nfc] (O3) -- (A1);
  \coordinate (O3A1) at (2,-3);
\draw[nfc] (A3) -- (O4);
  \coordinate (O4A3) at (2,-7.5);
\draw[nfc] (O6) -- (G3);
  \coordinate (G3O6) at (-2,-9.5);
\draw[wlan] (G5) -- (A5);
  \coordinate (G5A5) at (0,-14);
\draw[wlan] (G6) -- (A6);
  \coordinate (G6A6) at (0,-14.5);


\draw (G1) node[draw,right] {Generate $(R_G, r_G)$};
\draw (G2O1) node[above] {$S, L, \cert(G, W_G), R_G$};

\draw (O2) node[right] {Walk to AC};
\draw (O3A1) node[above] {$S, L, \cert(G, W_G), R_G$};
\draw (A2) node[draw,text width=4.1cm,align=left,below left] 
 { Generate $(R_A, r_A)$\\
  $s_A := r_A + H(R_A, R_G, A, G)\, w_A$\\
  $S_G := R_G + H(R_G, R_A, A, G) W_G$\\
  $K_S := s_A S_G$\\
  $K' := \KDF_1(K_S)$ \\
  $K := \KDF_2(K_S)$\\
  $\mac_A := \mac_{K'}(2, A, G,$\\ \hfill $R_A, R_G,
     S, L, \ssid_A)$};

\draw (O4A3) node[above] {$\cert(A, W_A), R_A, \ssid_A, \mac_A$};
\draw (O5) node[left] {Walk to GU};
\draw (G3O6) node [above] {$\cert(A, W_A), R_A, \ssid_A, \mac_A$};
\draw (G4) node [draw, text width=4.1cm,align=left,below right]
 { $s_G := r_G + H(R_G, R_A, A, G)\, w_G$\\
  $S_A := R_A + H(R_A, R_G, A, G) W_A$\\
  $K_S := s_G S_A$\\
  $K' := \KDF_1(K_S)$ \\
  $K := \KDF_2(K_S)$\\
  $\mac_G := \mac_{K'}(3, G, A,$\\ \hfill $R_G, R_A,
     S, L, \ssid_A)$};

\draw (G5A5) node[above] {Establish WLAN Channel};
\draw (G6A6) node[above] {$\mac_G$};

}
\end{tikzpicture}
\end{scaletikzpicturetowidth}}
\caption{\label{prot:fmqv} TAGA pairing based on the FHMQV\CLS\ protocol}
\end{figure}
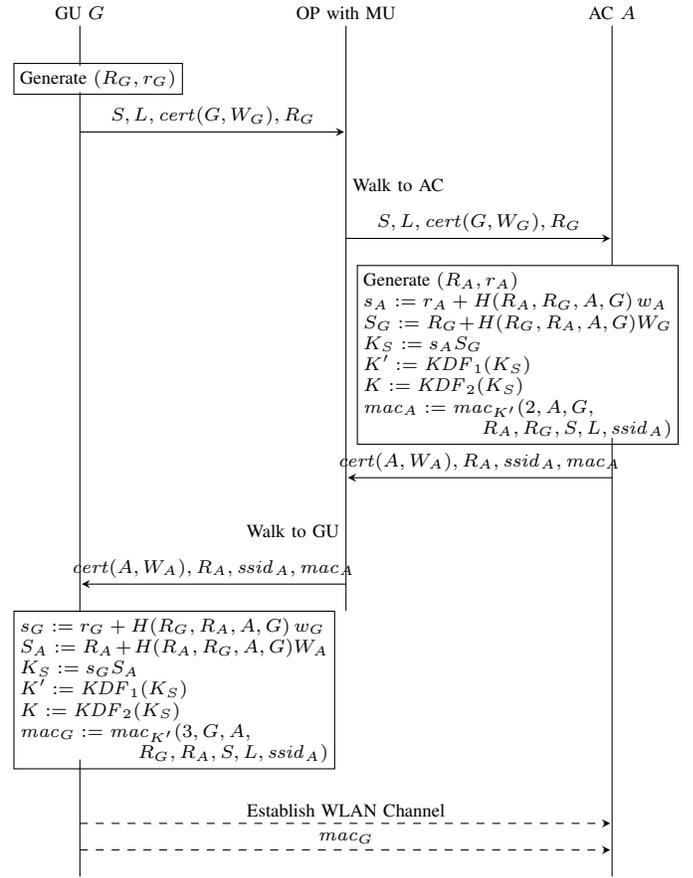

Fig.~\ref{prot:fmqv} shows TAGA based on the \emph{Fully Hashed 
Menezes-Qu-Vanstone protocol (FHMQV)} 
\cite{sarrEtAl:EuroPKI09,sarrEtAl:africacrypt16}. For TAGA we 
include service and location into the key confirmation step, yielding
\emph{FHMQV\CLS}. FHMQV is one of the strongest protocols 
regarding security, resilience and efficiency, 
and comes with a security proof. 

\begin{corollary}
Given any $N$ and $M$, $($FHMQV\CLS$, N, M)$ guarantees secure TAGA II. 
\end{corollary}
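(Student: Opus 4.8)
The plan is to obtain the statement as a direct instance of Theorem~\ref{thm:tagaAuth}: since that theorem yields Secure TAGA~II for \emph{any} $N$ and $M$ whenever the protocol $P$ is secure for authenticated TAGA, it suffices to verify that the concrete protocol FHMQV\CLS\ of Fig.~\ref{prot:fmqv} is secure for authenticated TAGA, i.e.\ that it satisfies secrecy, key freshness, opposite type, and agreement on peer, service, and location against active adversaries (c.f.\ App.~\ref{app:prot:props}). The whole argument is thus a property-checking argument for FHMQV\CLS, and I would organise it around one observation: FHMQV\CLS\ differs from plain FHMQV only in the \emph{contents} of the key-confirmation MACs $\mac_A$ and $\mac_G$, which are extended to include the service $S$, the location $L$, and $\ssid_A$; the ephemeral generation and the key computation (first $K_S$, then $K' := \KDF_1(K_S)$ and $K := \KDF_2(K_S)$, as in Fig.~\ref{prot:fmqv}) are left unchanged.

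First I would import secrecy, key freshness, and agreement on peer directly from the published security analysis of FHMQV \cite{sarrEtAl:EuroPKI09,sarrEtAl:africacrypt16}. Because $K$ and the confirmation key $K'$ are derived exactly as in FHMQV, the secrecy and freshness of $K$ are literally the FHMQV guarantees; in particular $K'$ stays known only to the two intended endpoints. Agreement on peer is the mutual-authentication guarantee of FHMQV: a party accepts only after verifying a MAC under $K'$ that binds the peer identities $A$ and $G$ and the ephemeral values $R_A, R_G$, so by unforgeability of the MAC under the jointly secret key $K'$ the two parties agree on those fields.

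Next I would establish the two genuinely new properties. For agreement on service and location, the point is that both $\mac_A$ and $\mac_G$ are now computed under $K'$ over inputs that include $S$ and $L$; hence a successful verification forces the verifier and the originator of the MAC to have used the same $S$ and $L$, as otherwise one could forge a MAC under a key secret to the two parties, contradicting the secrecy of $K'$ established above. Combined with the local setup checks (each side feeds its own parking slot as $L$, and the AC additionally checks that the airport named in the GU certificate matches its own location, as stipulated in Sec.~\ref{s:auth:prel}), agreement on $L$ is exactly agreement on the same parking slot required by Secure TAGA~II. For opposite type, I would appeal to the PKI structure of Sec.~\ref{s:auth:prel}: an AC installs and accepts only airport-issued certificates, which belong to GUs, while a GU installs and accepts only airline-issued certificates, which belong to ACs; together with the fixed initiator/responder roles this guarantees that a GU always pairs with an AC and vice versa.

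The main obstacle I anticipate is the reduction claim underlying the first step, namely that ``tagging'' the FHMQV confirmation messages with the extra public fields $S, L, \ssid_A$ preserves the original secrecy and authentication guarantees rather than invalidating the existing proof. I would discharge this either by invoking a standard message-tagging lemma (appending fresh data that is authenticated inside the MAC cannot help an adversary who already could not forge the shorter MAC) or, more conservatively, by re-running the FHMQV game sequence with the enlarged MAC inputs and checking that each hop goes through verbatim, since the added fields are public and enter only as further MAC arguments. Once this is in place, all four properties hold and the corollary follows from Theorem~\ref{thm:tagaAuth}.
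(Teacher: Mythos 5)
Your proposal matches the paper's (implicit) argument exactly: the paper states the corollary as a direct instance of Theorem~\ref{thm:tagaAuth}, justifying that FHMQV\CLS\ is secure for authenticated TAGA by appeal to the published FHMQV security proof together with the inclusion of service and location in the key-confirmation MACs. Your write-up is in fact more careful than the paper, which offers no explicit proof; in particular your flagged concern about whether tagging the confirmation MACs with $S$, $L$, $\ssid_A$ preserves the original FHMQV guarantees is a real gap the paper leaves undischarged.
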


\subsubsection{The Challenge of LTKC}

Let $A$ be an AC. We say the attacker has obtained a \emph{long-term 
key compromise (LTKC)} of $A$ if he has managed to get hold of 
credentials that authenticate $A$: A public/private key pair 
$(W_A, w_A)$ and a valid certificate $\cert(A, W_A)$, which asserts 
that $W_A$ belongs to $A$. The definition for a GU $G$ is analogous.
 
There are many different ways of how an attacker might obtain
a LTKC of a party $X$. One way is to obtain the private key of $X$
and use the existing certificate belonging to $X$. Another way is that the 
attacker tricks the certification authority (CA) into issueing a certificate 
for a key pair he has generated himself, perhaps for an entity that
does not even exist. This is more indirect but often easier to obtain; e.g.\  
when the private keys are generated within hardware security modules
and never exported
from there. Finally, if the attacker manages to get hold of the private key 
of the respective CA then he can generate as many valid certificates
for self-generated key pairs as he likes. All these cases have
been shown to be possible for real key management applications, e.g.\ by 
exploiting vulnerabilities in the APIs of the employed hardware
security modules \cite{AndersonBCS:survey,bfs:acm}.

Given a LTKC of an entity $X$ it is clear that the attacker can 
impersonate $X$ to other entities. In classical AKE settings this will only
impact on $X$'s resources or on peers that communicate specifically
with $X$ (based on $X$'s identity). However, in our setting of local
key establishment, 
a LTKC of an AC, say $A_I$, has wider consequences: 
Given any GU $G$ that is to
service the AC at some parking slot $L$, the attacker can impersonate
$\AC(L)$ to $G$ using the credentials of $A_I$ 
(c.f.\ App.~\ref{app:auth}, Fig.~\ref{fig:att:ltkc}).
%
Given a LTKC of a GU, say $G_I$, the impact is locally contained:
The attacker can exploit the compromised certificate of $G_I$ only within 
the realm of $G_I$'s airport.
This is because the AC will check that the GU certificate is issued for
the airport that it is currently parked at 
(c.f.\ Section~\ref{s:auth:prel}).
 
\removed{
\begin{figure}
\begin{center}
\begin{scaletikzpicturetowidth}{\columnwidth}
\begin{tikzpicture}[scale=\tikzscale,
yscale=0.75, 
nfc/.style={->,>=stealth,shorten >=0.025cm},
wlan/.style={->,>=stealth,shorten >=0.025cm,dashed},
dot/.style={circle,draw=black,fill=black, radius=0.025}]
{\scriptsize


\draw (-4,1) node[above] {GU $G$};
\draw (0,1) node[above] {OP with $U_I$};
\draw (4,1) node[above] {AC $A$};

\draw (0,1) -- (0,-7.5);

\draw (0,-10) -- (0,-12.5);

\draw (-0.1,-10) node[above right] {$\AP_I$};

\draw (4,1) -- (4,-3.75);
\draw (4,-4.25) -- (4,-10);

\draw (-4,1) -- (-4,0.25); 
\draw (-4,-0.25) -- (-4,-7.5);
\draw (-4,-10.25) -- (-4,-13);

\coordinate (G1) at (-4.6,0);
\coordinate (G2) at (-4,-1);
\coordinate (G3) at (-4,-7);
\coordinate (G4) at (-4.6,-7.5);
\coordinate (G5) at (-4,-11.5);
\coordinate (G6) at (-4,-12);

\coordinate (O1) at (0,-1);
\coordinate (O2) at (0,-2);
\coordinate (O3) at (0,-3);
\coordinate (O4) at (0,-5); 
\coordinate (O5) at (0,-6);
\coordinate (O6) at (0,-7);

\coordinate (A1) at (4,-3);
\coordinate (A2) at (4.5,-4);
\coordinate (Ospecial) at (-0.1,-4);
\coordinate (A3) at (4,-5);
\coordinate (A4) at (4,-6);
\coordinate (A5) at (4,-14);
\coordinate (A6) at (4,-14.5);

\draw[nfc] (G2) -- (O1);
  \coordinate (G2O1) at (-2,-1);
\draw[nfc] (O3) -- (A1);
  \coordinate (O3A1) at (2,-3);
\draw[nfc] (A3) -- (O4);
  \coordinate (O4A3) at (2,-5);
\draw[nfc] (O6) -- (G3);
  \coordinate (G3O6) at (-2,-7);
\draw[wlan] (G5) -- (0,-11.5);
  \coordinate (G5A5) at (-2,-11.5);
\draw[wlan] (G6) -- (0,-12);
  \coordinate (G6A6) at (-2,-12);


\draw (G1) node[draw,right] {Generate $(R_G, r_G)$};
\draw (G2O1) node[above] {$S, L, \cert(G, W_G), R_G$};

\draw (O2) node[right] {Walk to AC};
\draw (O3A1) node[above] {$S, L, \cert(G, W_G), R_G$};
\draw (A2) node[draw,text width=2cm,align=left,left]
 {Compute as usual}; 
\draw (4.5,-13)  node[draw,text width=6cm,align=left,below left] 
 {Algorithm $\ICrypto(G, W_G, R_G)$: \\
  \quad $s_I := r_I + H(R_I, R_G, A_I, G)\, w_I$\\
  \quad $S_G := R_G + H(R_G, R_I, A_I, G) W_G$\\
  \quad $K_S := s_I S_G$\\
  \quad $K' := \KDF_1(K_S)$; $K := \KDF_2(K_S)$\\
  \quad $\mac_I := \mac_{K'}(2, A_I, G, R_I, R_G,
      S, L, \ssid_I)$};

\draw (Ospecial) node[draw, minimum height = 2.5cm, text width=2.75cm, 
align=left,left]
{On card $U_I$: \\[3ex]
 Ignore $A$'s response \\
 Compute $A_I$'s ``response''\\  
 To obtain $\mac_I$ do\\ 
 \ \ $\ICrypto(G, W_G, R_G)$};

\draw (0.1,-11) node[draw,text width=3.5cm, right, align=left]
   {Receive $W_G$, $R_G$\\
      \quad (from Eavesdropper)\\
  To obtain $K$ do:\\
   \ \ $\ICrypto(G, W_G, R_G)$};

\draw (-1.5,0.3) node[above, rotate=45] {Eavesdrop};

\draw (G3O6) node[above] {$\cert(A_I, W_I), R_I, \ssid_I, \mac_I$};
\draw (O5) node[right] {Walk to GU};
\draw (O4A3) node [above] {$\cert(A, W_A), R_A, \ssid_A, \mac_A$};
\draw (G4) node [draw, text width=4.1cm,align=left,below right]
 {$s_G := r_G + H(R_G, R_I, A_I, G)\, w_G$\\
  $S_I := R_I + H(R_I, R_G, A_I, G) W_I$\\
  $K_S := s_G S_I$\\
  $K' := \KDF_1(K_S)$ \\
  $K := \KDF_2(K_S)$\\
  $\mac_G := \mac_{K'}(3, G, A_I,$ \\ \hfill $R_G, R_I,
     S, L, \ssid_I)$};

\draw (G5A5) node[above] {Establish WLAN Channel};
\draw (G6A6) node[above] {$\mac_G$};

}
\end{tikzpicture}
\end{scaletikzpicturetowidth}
\end{center}
\caption{\label{fig:att:ltkc} 
Impersonation to GU with LTKC of any AC}
\end{figure}

\begin{attack}[Impersonation to GU with LTKC of any AC]
\label{att:ltkc}
Let $A_I$ be a real or non-existent AC of airline $\mcA_I$, and assume
the attacker has achieved a LTKC of $A_I$. Further, let $A$ be an AC of 
airline $\mcA$, and $G$ be a GU at airport $\mcH$ such that 
$G$ provides service $S$ to $A$ during turnaround at parking slot $L$. 
In preparation, the attacker swaps his own counterfeit card
$U_I$ for the card of $G$'s operator.
 Moreover, the attacker sets up NFC eavesdropping capability,
and his own WLAN access point $\AP_I$ within range of $L$. Both 
$\AP_I$ and $U_I$ are prepped with $A_I$'s long-term 
credentials $w_I$ and $\cert(A_I, W_I)$, 
a fixed ephemeral key pair $(r_I, R_I)$, and an ssid $\ssid_I$ for
the attacker's WLAN.  

Then the attacker can proceed as shown in Fig.~\ref{fig:att:ltkc}:
he simply establishes a key with $G$ using $A_I$'s credentials rather
than those of $A$. Since $A_I$'s  ephemeral key pair can be fixed beforehand,
the resulting session key can be computed independently on the card $U_I$,
and the attacker's WLAN point $\AP_I$ respectively. The latter only 
needs to receive 
$G$'s public keys by relay from the eavesdropping device. Altogether, 
the attacker can now mount a potentially safety-critical M2M Imp2GU
attack.
\end{attack}
}

Say entity $X$ has a LTKC.
While, on the level of KE protocols, it cannot be prevented that 
the attacker can impersonate $X$ to any other participant, it is
important to realize that, in general, impersonation is not the only 
attack he can mount based on the LTKC of $X$. 
He might also be able to exploit the credentials of $X$ to impersonate 
any other party to $X$, or to mount an attack that violates authentication 
rather than secrecy. In the full version we show an example where,
based on a LTKC of a GU $G_I$, the attacker can stage a \emph{MitM} attack
against \emph{any} AC serviced by $G_I$. Another example shows
how, given a compromise of \emph{any} airline's root key pair, the attacker 
can reach a `Imp2GU with mismatch diversion attack' against \emph{any} two
TAGA instances that happen about the same time at some airport.
Fortunately, advanced AKE protocols such as the FHMQV are resilient
against LTKCs in a way that restores the expected correspondence
between LTKC and impersonation. 

\begin{definition}
Let $P$ be a protocol that is secure for a-TAGA. 
$P$ guarantees to an AC $A\,$ \emph{LTKC resilience for a-TAGA}
if $P$ guarantees secrecy, key freshness, opposite type and
agreement on peer, service and location to $A$ even when $A$ has a LTKC.  
The definition for a GU $G$ is analogous. $P$ is \emph{resilient
for a-TAGA} when $P$ guarantees this to both parties, ACs and GUs.
\end{definition}

\begin{lemma}
\label{lem:prot:resil}
Let $P$ be a protocol that is secure and LTKC resilient for a-TAGA.
Assume we lift the correctness assumption that long-term key
pairs of ACs and GUs are secure. 
\begin{enumerate}
\item If a GU has a session $(S, A, K)$ and the attacker knows $K$
      then $A$ must have a LTKC. 
\item If an AC has a session $(S, G, K, \ldots)$ and the attacker knows
 $K$ then $G$ must have a LTKC. 
\end{enumerate}
\end{lemma}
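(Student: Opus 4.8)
The plan is to prove the contrapositive of each part by a simple case analysis on whether the \emph{owner} of the session has a LTKC, reading off the needed guarantee from the two properties $P$ is assumed to enjoy: plain security, which applies when both endpoints have secure long-term keys, and LTKC resilience, which applies when only the owner's long-term key is compromised. The crucial observation I would lean on is that both ``secure for a-TAGA'' and ``LTKC resilience for a-TAGA'' contain, as their secrecy clause, the statement that $K$ is a fresh key known only to the two endpoints; it is exactly this clause that I invoke, while the authentication, opposite-type and agreement clauses play no role in this particular argument.

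For part~(1), I would start from a GU $G$ with a session $(S, A, K)$ and assume, for contradiction, that $A$ does \emph{not} have a LTKC, i.e.\ $A$'s long-term key pair is secure. I then split on $G$. If $G$ also has no LTKC, both endpoints are honest, so the bare ``secure for a-TAGA'' guarantee applies to $G$'s session and yields that $K$ is known only to $G$ and $A$; in particular the attacker does not know $K$. If instead $G$ does have a LTKC, I invoke LTKC resilience to $G$, available because $P$ is resilient for a-TAGA, which grants $G$ the very same secrecy guarantee \emph{despite} $G$'s own compromised key, provided its peer $A$ is honest --- precisely the standing assumption. Either branch contradicts the hypothesis that the attacker knows $K$, so $A$ must have a LTKC after all. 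Part~(2) I would then obtain by the symmetric argument: take an AC $A$ with a session $(S, G, K, \ldots)$, assume $G$ has no LTKC, split on whether $A$ has a LTKC, and apply plain security in the honest/honest case and LTKC resilience to $A$ in the case where $A$'s key is compromised but $G$'s is not.

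The only genuine content is verifying that the two invoked properties jointly cover every situation in which the \emph{peer} is honest: the honest/honest case is handled by plain security, the compromised-owner/honest-peer case by LTKC resilience, and the remaining cases (peer compromised) are exactly the conclusion we are aiming for and so require no separate treatment. Accordingly, the hard part will not be any calculation but the definitional bookkeeping --- aligning the ``to a GU''/``to an AC'' phrasing of the secrecy clauses with the owner/peer split above, and making explicit that secrecy \emph{to the owner} is sensitive only to compromise of the \emph{peer's} long-term key (the key-compromise-impersonation resistance that LTKC resilience encodes) and not to compromises of unrelated third parties once the correctness assumption has been globally lifted. I would state this last point carefully, since it is what legitimises applying the secrecy clause even in a world where arbitrary other ACs and GUs may be compromised.
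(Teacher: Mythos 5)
Your proof is correct and is exactly the argument the paper leaves implicit: the lemma is stated without proof as a direct consequence of the definitions, and your contrapositive case split (honest owner covered by plain security, compromised owner covered by LTKC resilience, in both cases the secrecy clause contradicting the attacker's knowledge of $K$) is the intended reading. The one point you rightly flag — that the secrecy guarantees must be understood as robust to compromises of third parties once the global correctness assumption is lifted, which is standard in AKE adversary models but not spelled out in the paper's definitions — is indeed the only place where care is needed, and the paper's own uses of the lemma in Theorems~\ref{thm:resil:basic} and~\ref{thm:resil:detect} rely on the same reading.
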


\removed{
In contrast, a LTKC of a GU $G$ does not have an equally wide impact:
Since a GU is specific to its airport, and an AC knows at which
airport it is currently parked (available from the electronic flight
system of the AC) an AC can easily validate that the airport within 
a presented GU certificate
agrees with its own location. This means a compromised GU certificate can only
be abused in the realm of its airport. 

While it is not possible to c. by. }

\removed{
As shown a single remotely 
obtained LTKC at airline $\mcA$ can impact on the safety of any AC of
$\mcA$.  

Note that this violates Principle~\ref{p:localAL} `Trust in Airline of
Local AC and no Other': a LTKC at airline $\mcA_I$ can impact on the
safety of an AC of airline $\mcA$. And even when $\mcA = \mcA_I$
this violates Principle~\ref{p:localACResil} 
`Resilience against Remote AC Security Incidence': 

Hence, it is essential
to harden authenticated TAGA against LTKC. 
We now
discuss other threats posed by LTKCs in this setting, and how 
we can still achieve resilient authenticated TAGA by a combination of advanced 
properties of AKE protocols and additional process measures that
exploit the local setting.  
}

\removed{
In a classical AKE setting the credentials of a user $A$ are typically
used as follows: either they govern access to resources bound to $A$ 
such as $A$'s money, data, or account on a computing system; or 
they establish $A$'s identity to peers that wish to communicate with
$A$ (and $A$ specifically). Given a LTKC of $A$ it is clear that the 
attacker can impersonate $A$. However, this will only impact on $A$'s 
resources or on peers that communicate with $A$ based on its identity.

In our setting the credentials of an AC $A$ are used in a different
way: they give access to establishing a secure channel with any GU 
at any airport that takes part in TAGA. This is so because
a priori the GU, say $G$, does not know the identity of $A/AC(L)$ but 
establishes the channel with any AC that has valid credentials 
and says they are at the same location as $G$. Hence, a LTKC of $A$
has wider consequences here: given any GU $G$ that is dispatched to service
an AC at location $L$, the intruder can impersonate $\AC(L)$ to $G$. 
Moreover, the attacker has the power to affect the safety of
this AC. as discusse din Section~\ref{} can be safety-critical. 
}

\removed{
\subsection{Concept}

We say a TAGA process $(P, N, M)$ is \emph{authenticated} 
when the KE protocol $P$ is authenticated. can do this fully
covertl 

while there are many AKE protocols can make sure 
securely establish a key between two parties in the presence
of adversaries there are two challenges:
 - the channel will be established/bound between cryptographic identities
    rather than physical entities. 
 - TAGA will be dependent on a PKI infrastrcutre with secure
    managmemnt and timely revocation of certificates.  what set out
    to avoid in the first place.

Revocation is the problem.

The first challenge is solved by relaxing the definition
of secure TAGA process while still making sure that Secure Setup
is guaranteed.

The second challenge is solved by ensureing ..
resilient to compromises of parties in other realms.

\subsection{Resi for ...}
assume ...

PKI: rather than makin
to send only valid certificates.
to use only private keys with good certificates.
and valid certificates.
rather than the other's responsibility to only use non-revoked
certificates of the other/any other party.

\begin{enumerate}
\item Private or secret keys are generated and only stored on the owner.
    never exported: only local parties can try to obtain them by
     side-channel/hardware/NFC attacks.
\item .. 
\end{enumerate}
}

\subsubsection{Resilience against LTKCs}
\label{s:resil}


While advanced AKE protocols tame the effects of LTKCs, on the
protocol level, it is not possible to prevent an attacker 
from impersonating the entity with the LTKC to others. 
Neither is our setting amenable to fast and trustworthy 
certificate revocation and certificate status verification.
The reason is twofold:
First, this would require that all airlines and airports mutually trust
each other and apply equivalently strong security management to detect and
report LTKCs. Second, fast verification of the certificate status 
would require online connectivity of at least the airport ---
a disadvantage compared to the current paper- and human-based processes. 
Instead, in our setting of \emph{local} key establishment
we can harden TAGA against LTKCs by adding physical and 
local process measures. 

We require that TAGA at least guarantees 
\emph{basic LTKC resilience}: while when an AC $A$ and GU $G$
carry out TAGA together they must mutually trust each other 
and the key management processes of $A$'s airline and $G$'s airport
they will never be affected by a security incidence 
of another airline or airport.
\removed{
This ensures that
an airline $\mcA$ must only trust the key management processes of
airports where its ACs land (rather than all airports and airlines
participating in TAGA), and analogously an airport $\mcH$
must only trust the key management processes of those airlines whose
aircraft it harbours (rather than all airlines participating in 
This is important since another airport might participate in TAGA
but $A$'s airline might never have ACs land there, and similarly, another 
airline might participate in TAGA but $G$'s airport might never
harbour any of its planes. }
\emph{Airport-reliant resilience} is stronger in that an airport
and the local AC $A$ do not require to trust the key management processes
of $A$'s airline at all. This also answers to the situation when 
the internal certificate revocation process of an airline might not
reach its aircraft in time. 
\emph{Full resilience} ensures that TAGA has a second line of
defense and does not rely on long-term credentials as long as the local
measures are not compromised.  We now make precise the three
notions of resilience and provide examples of how they can be
met. 


\begin{definition}
Let $T$ be a a-TAGA process that guarantees secure TAGA II.
We say $T$ guarantees \emph{$x$ LTKC resilience}, where 
\emph{$x \in \{$basic, airport-reliant, full$\}$} 
if, whenever an instance of $T$ is carried out between
a GU $G$ and an AC $A$ at a parking slot $L$ then
$T$ guarantees secure TAGA II to $G$ and $A$ even when, 
\begin{enumerate}
\item for \emph{x = basic,} all ACs and GUs other than
those in the domain of $A$ and $G$ have a LTKC.
\item for \emph{x = airport-reliant,} all ACs and all GUs other
 than those in the domain of $G$ have a LTKC.
 \item for \emph{x = full,} all ACs and GUs have a LTKC. 
\end{enumerate}
\end{definition}


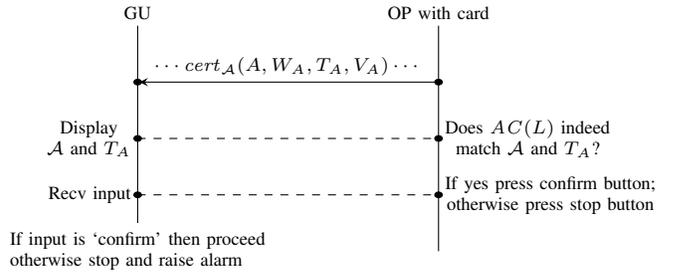
\begin{figure}
\centering{
\begin{scaletikzpicturetowidth}{\columnwidth}
\begin{tikzpicture}[
yscale=0.75,
nfc/.style={->,>=stealth,shorten >=0.025cm},
phy/.style={>=stealth,shorten >=0.025cm,dashed},
wlan/.style={->,>=stealth,shorten >=0.025cm,dashed},
point/.style={radius=0.05},
dot/.style={circle,draw=black,fill=black, radius=0.025}]
{\scriptsize

\draw (-2,0) node[above] {GU};
\draw (2,0) node[above] {OP with card};

\draw (-2,0) -- (-2,-3.5);
\draw (2,0) -- (2,-4);

\coordinate (G1) at (-2,-1); 
\coordinate (O1) at (2,-1);
\coordinate (O1G1) at (0,-1);

\coordinate (G2) at (-2,-2); 
\coordinate (O2) at (2,-2);

\coordinate (G3) at (-2,-3); 
\coordinate (O3) at (2,-3);

\coordinate (G4) at (-2,-4);

\draw[nfc] (O1) -- (G1);
\draw (O1G1) node[above] {$\cdots \cert_\mcA(A, W_A, T_A, V_A) \cdots$};

\draw[phy] (G2) -- (O2);
\draw (G2) node[align=center,left] {Display \\$\mcA$ and $T_A$};
\draw (O2) node[align=center,right] {Does $\AC(L)$ indeed \\match
         $\mcA$ and $T_A$?};

\draw[phy] (G3) --  (O3);
\draw (O3) node[align=center,right] {If yes press confirm button;\\
  otherwise press stop button};
\draw (G3) node[left] {Recv input};
\draw (G4) node[align=left] {If input is `confirm' then proceed\\
    otherwise stop and raise alarm};

\draw[fill] (G1) circle [point];
\draw[fill] (G2) circle [point];
\draw[fill] (G3) circle [point];
\draw[fill] (O1) circle [point];
\draw[fill] (O2) circle [point];
\draw[fill] (O3) circle [point];
}
\end{tikzpicture}
\end{scaletikzpicturetowidth}}
\caption{\label{fig:meas:human} Last NFC tap extended 
by `Two Eyes' verification of AC}
\end{figure}

\paragraph{Basic LTKC Resilience}
To reach basic LTKC resilience we only need to ensure that the attacker cannot
make the GU accept a certificate that does not agree with the domain
of the AC actually on site.

\begin{definition}
A measure $M$ guarantees \emph{agreement of AC domain} if, 
when a GU has a session $(S, A, K)$ at a parking slot $L$ then $A$ 
is of the same airline and type as $\AC(L)$.
\end{definition}

\begin{theorem}
\label{thm:resil:basic}
Let $T = (P, N, M)$ be any TAGA process such that $P$ is secure and
LTKC resilient for a-TAGA and $M$ guarantees agreement of AC domain. 
Then $T$ guarantees basic LTKC resilience.
\end{theorem}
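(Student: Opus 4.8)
The plan is to unfold the definition of basic LTKC resilience and check that, in the stipulated compromise scenario, secure TAGA~II still holds to the two parties $G$ and $A$ that actually run the instance at the parking slot $L$. Note first that the presupposition of the resilience definition is met: since $P$ is secure for a-TAGA, Theorem~\ref{thm:tagaAuth} already gives that $T$ guarantees secure TAGA~II in the uncompromised case. The basic scenario leaves exactly the parties in the domains of $A$ and $G$ uncompromised --- no AC of $A$'s airline and type, and no GU of $G$'s airport, has a LTKC. The whole argument then rests on localizing the \emph{peer} of each relevant session into one of these uncompromised domains and feeding this into Lemma~\ref{lem:prot:resil}.

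For the GU side, suppose $G$ has a session $(S, \Gini, \Gfin, B, K)$. Because $M$ guarantees agreement of AC domain, the peer $B$ is of the same airline and type as $\AC(L) = A$, i.e.\ $B$ lies in the domain of $A$, and hence by the basic scenario $B$ has no LTKC. Reading Lemma~\ref{lem:prot:resil}(1) contrapositively then shows the attacker does not know $K$, which together with the secrecy guarantee of $P$ yields clause~(2): $K$ is a fresh (by key freshness) key known only to $G$ and $B$. Clause~(1), the existence of a matching AC session $(\Amid, S, G, K, \ldots)$, follows from agreement on peer and service, and clause~(3), that $B$ sits at the same parking slot as $G$, follows from agreement on location. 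This is secure TAGA~II to $G$.

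The AC side is symmetric but localizes the peer by a different route. Suppose $A$ has a session $(\Amid, S, G', K, c)$. Here I would invoke the certificate check built into every a-TAGA protocol (Section~\ref{s:auth:prel}): an AC accepts a GU certificate only if its airport field matches the AC's own location, so $G'$ must be a GU of the airport $\mcH$ at which $A$, and hence $L$, is situated --- that is, $G'$ lies in the domain of $G$. Since no GU of $\mcH$ has a LTKC in the basic scenario, $G'$ is uncompromised, and Lemma~\ref{lem:prot:resil}(2) again gives that the attacker does not know $K$; the three clauses then follow from secrecy, key freshness, and agreement on peer and location exactly as before. Combining the two sides gives basic LTKC resilience.

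I expect the main obstacle to be the bookkeeping of \emph{why} the two peers land in uncompromised domains, since the two directions use different mechanisms: the GU side relies on the external measure `agreement of AC domain', whereas the AC side must exploit the protocol-internal airport--location check together with the observation that a LTKC of a GU from another airport furnishes only a certificate for \emph{that} airport and is therefore rejected at $\mcH$. One must be explicit that the basic scenario compromises individual party credentials only --- not the airlines' or airports' root keys --- so that the attacker can neither forge a valid GU certificate for $\mcH$ nor fabricate an AC in $A$'s domain; this is precisely what keeps a compromised party out of the two relevant domains and makes Lemma~\ref{lem:prot:resil} applicable.
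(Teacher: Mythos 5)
Your proof is correct and follows essentially the same route as the paper's: both hinge on Lemma~\ref{lem:prot:resil} to tie attacker knowledge of $K$ to a LTKC of the apparent peer, and on `agreement of AC domain' (GU side) resp.\ the certificate airport check (AC side) to place that peer in an uncompromised domain. The only difference is presentational --- the paper argues by a case split on whether the attacker knows $K$, while you take the contrapositive of the lemma directly.
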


Agreement of AC domain can be implemented by a simple measure that
makes use of the awareness of the GU operator: 
while the GU has no means to verify that the received AC certificate 
(and information therein) belongs to the AC present at the parking 
slot, clearly, the operator has sight of the AC. Hence, 
she is able to verify that visually observable features of the AC 
such as its type and airline agree with the information 
received by the GU. 

\begin{measure}[`Two Eyes' Verification of AC (2EV)]
\label{m:human:two}
Assume the TAGA controller of the GU is equipped with a display and
two input buttons: one to confirm, and the second to stop the 
process and raise an alarm. Then the last NFC tap can be extended by
human verification as illustrated in Fig.~\ref{fig:meas:human}. First, 
the operator transfers the second message by NFC tap to the 
GU's controller as usual. Recall that this message contains an AC certificate
$\cert_\mcA(A, W_A, T_A, V_A)$, where $\mcA$ is the airline of $A$ and
$T_A$ the type of $A$, supposedly matching $\AC(L)$. 
Second, the GU shows $\mcA$ and $T_A$ on its display, and
the GU operator verifies whether the aircraft $\AC(L)$ she sees on 
the parking slot is indeed of airline $\mcA$ and type $T_A$. 
If yes, then she will confirm the process; otherwise she will
stop the process and raise an alarm. 
\end{measure}

Unintended errors of the GU operator can be kept small: 
they can be trained to keep awareness by injection of false alarms 
(similary to security screening at airports). It is also possible
to implement this with dual-control.

\begin{measure}[`Four Eyes' Verification of AC (4EV)]
\label{m:human:four}
For increased security a member of the AC crew can accompany the GU 
operator and perform the visual verification as well.  
\end{measure}


\begin{fact}
2EV and 4EV guarantee agreement of AC domain.
\end{fact}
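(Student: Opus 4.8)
The plan is to unfold the relevant definitions and observe that the measure simply inserts a human gate into the session-forming step of the GU, after which the conclusion is almost immediate. First I would recall what a GU session $(S, A, K)$ requires: by the definition of a TAGA session, such a session is closed by the final NFC tap $\Gfin$, which carries the second message, and that message contains the AC certificate $\cert_\mcA(A, W_A, T_A, V_A)$, where $\mcA$ is the airline of $A$ and $T_A$ its type. Crucially, by the 2EV measure (Measure~\ref{m:human:two}) the GU control does \emph{not} proceed to form the session unless the operator presses the confirm button; otherwise it stops and raises an alarm. Hence the mere existence of a session $(S, A, K)$ is a witness that the operator pressed ``confirm'' during $\Gfin$.

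Second, I would connect the confirmation to the physical aircraft. By 2EV the GU displays exactly the airline $\mcA$ and type $T_A$ read from the certificate bound into that session, and — appealing to system integrity of the GU controller — the displayed pair faithfully reflects the $(\mcA, T_A)$ of the session identity $A$. By the well-definedness of operator actions (Def.~\ref{def:op:welld}), the operator presses ``confirm'' only when the aircraft $\AC(L)$ she physically observes at the parking slot $L$ matches the displayed airline and type. Chaining these, a session $(S,A,K)$ forces $\AC(L)$ to be of airline $\mcA$ and type $T_A$, which are precisely the airline and type of the claimed $A$. This is exactly agreement of AC domain.

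For 4EV (Measure~\ref{m:human:four}) the same argument applies, now to a conjunction: the GU forms a session only if both the operator and the accompanying crew member confirm. Since each confirmation already requires that verifier to have seen $\AC(L)$ match $(\mcA, T_A)$, the guarantee follows from either confirmation alone; the redundancy buys robustness, in that a session cannot form unless the honest verifier among the two has confirmed a genuine match, so 4EV tolerates one unreliable verifier.

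The main obstacle is not the logical deduction, which is short, but making explicit and defensible the correctness assumption on the human verifier: the entire guarantee rests on the premise that the operator (and, for 4EV, the crew member) reliably distinguishes a matching from a non-matching aircraft and acts accordingly, i.e.\ the well-definedness of operator actions. I would state this premise up front and flag that it is a modeling assumption — supported operationally by training and injected false alarms, as noted after the measure — rather than a cryptographic fact, and that system integrity of the GU display is likewise assumed so that the displayed $(\mcA, T_A)$ cannot be decoupled from the certificate actually bound into the session. Under these assumptions the Fact reduces to the chaining above.
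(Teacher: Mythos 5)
Your proposal is correct and follows exactly the reasoning the paper intends: the paper states this Fact without an explicit proof, treating it as immediate from the description of Measures~\ref{m:human:two} and~\ref{m:human:four}, and your chaining (session exists $\Rightarrow$ operator confirmed $\Rightarrow$ displayed $(\mcA, T_A)$ from the session's certificate matches $\AC(L)$) is precisely that argument made explicit, including the honest-verifier and display-integrity assumptions the paper leaves implicit. The only nitpick is that the operator's confirm-only-on-match behaviour is specified in Measure~\ref{m:human:two} itself rather than in Def.~\ref{def:op:welld} (which concerns tap ordering and aborting on unexpected reader feedback), but this does not affect the substance.
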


\begin{figure}
\centering{
\begin{scaletikzpicturetowidth}{\columnwidth}
\begin{tikzpicture}[
yscale=0.75,
nfc/.style={->,>=stealth,shorten >=0.025cm},
phy/.style={>=stealth,shorten >=0.025cm,dashed},
wlan/.style={->,>=stealth,shorten >=0.025cm},
point/.style={radius=0.05},
dot/.style={circle,draw=black,fill=black, radius=0.025}]
{\scriptsize

\draw (-2.5,0) node[above] {GU $G$};
\draw (2.5,0) node[above] {AC $A$};

\draw (-2.5,0) -- (-2.5,-4);
\draw (2.5,0) -- (2.5,-4.5);

\draw (-2.5,-1) node[right] {Generate $N_G$};

\draw[phy] (-2.5,-2) -- (2.5,-2);
\draw (-2.5,-2) node[above right] {Pump Encoded $N_G$}; 
\draw (2.5,-2) node[below left,align=right] 
  {Sense Pattern and Decode\\
   Call result $N_\nread$}; 

\draw (2.5,-3) node {};

\draw[wlan] (2.5,-3.5) -- (-2.5,-3.5);
\draw (0,-3.5) node[above] {$N_\nread$};

\draw (-2.5,-4.5) node[align=left] 
  {If $N_\nread = N_G$ then proceed\\
   otherwise stop and raise alarm};

\removed{
\coordinate (G1) at (-2,-1); 
\coordinate (O1) at (2,-1);
\coordinate (O1G1) at (0,-1);

\coordinate (G2) at (-2,-2); 
\coordinate (O2) at (2,-2);

\coordinate (G3) at (-2,-3); 
\coordinate (O3) at (2,-3);

\coordinate (G4) at (-2,-4);

\draw[nfc] (O1) -- (G1);
\draw (O1G1) node[above] {$\cdots \cert_\mcA(A, K_A, T_A) \cdots$};

\draw[phy] (G2) -- (O2);
\draw (G2) node[align=center,left] {Display \\$\mcA$ and $T_A$};
\draw (O2) node[align=center,right] {Does $\AC(L)$ indeed \\match
         $\mcA$ and $T_A$?};

\draw[phy] (G3) --  (O3);
\draw (O3) node[align=center,right] {If yes press confirm button;\\
  otherwise press stop button};
\draw (G3) node[left] {Recv input};
\draw (G4) node[align=left] {If input is confirm then proceed\\
    otherwise stop and raise alarm};

\draw[fill] (G1) circle [point];
\draw[fill] (G2) circle [point];
\draw[fill] (G3) circle [point];
\draw[fill] (O1) circle [point];
\draw[fill] (O2) circle [point];
\draw[fill] (O3) circle [point];
}
}
\end{tikzpicture}
\end{scaletikzpicturetowidth}
}
\caption{\label{fig:meas:phyCR} Physical Challenge/Cyber Response (PC/CR)}
\end{figure}
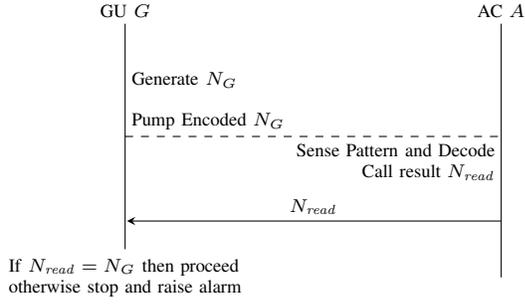

\paragraph{Airport-reliant LTKC Resilience}
Airport-reliant LTKC resilience will be reached if we make sure
that the GU can detect pure Imp2GU attacks. This is so because
by Lemma~\ref{lem:prot:resil} another
attack will always require a LTKC of the GU domain.

\begin{definition}
A measure $M$ guarantees \emph{detection against pure Imp2GU} if, whenever 
a GU $G$ has a session $(S, A, K)$ at a parking slot $L$ and the 
attacker knows $K$ then $G$ detects his and aborts the session (before
any safety-critical process settings are started)
as long as $\AC(L)$ does not have any session 
$(S, G', K', c)$ for some $G'$ and $K'$ such that the attacker 
knows $K'$. 
\end{definition}

\begin{theorem}
\label{thm:resil:detect}
Let $T = (P, N, M)$ be any TAGA process such that $P$ is secure and
LTKC resilient for a-TAGA and $M$ guarantees detection against 
pure Imp2GU. Then $T$ guarantees airport-reliant LTKC resilience.
\end{theorem}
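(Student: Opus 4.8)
The plan is to establish the two halves of Secure TAGA~II separately --- first to the AC $A$, then to the GU $G$ --- working throughout in the airport-reliant adversary model, in which every AC and every GU outside $G$'s airport $\mcH$ may have a LTKC while the GUs of $\mcH$, and in particular $G$ itself, do not. Since the instance is carried out between $G$ and $A$ at the slot $L$, I identify $A$ with the AC present at $L$, i.e.\ $A = \AC(L)$, so that the reference to $\AC(L)$ in ``detection against pure Imp2GU'' is exactly a reference to $A$. The AC side will be carried by the protocol alone; the measure $M$ is needed only for the GU side.

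First I would prove Secure TAGA~II to $A$. Take any confirmed session $(\Amid, S, G', K', c)$ of $A$. Because an AC checks that the airport named in the presented GU certificate agrees with its own location (Section~\ref{s:auth:prel}), and because producing a valid $\mcH$-certificate would require compromising $\mcH$'s CA and hence entail a LTKC of an $\mcH$-GU --- excluded in this model --- the peer $G'$ must be a genuine, uncompromised GU of $\mcH$. Lemma~\ref{lem:prot:resil}(2) then forces $K'$ to be secret: if the attacker knew $K'$, then $G'$ would have a LTKC, a contradiction. With the peer honest and $A$'s own possible LTKC absorbed by the LTKC resilience of $P$ to $A$, the agreement and freshness guarantees of $P$ yield a matching session $(S, \Gini, \Gfin, A, K')$ at $G'$, agreement on service, and agreement on location, so $G'$ sits at $L$. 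This is Secure TAGA~II to $A$, and as a by-product it shows that $\AC(L) = A$ has \emph{no} confirmed session whose key the attacker knows.

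Next I would prove Secure TAGA~II to $G$. Consider a session $(S, \Gini, \Gfin, A', K)$ that $G$ carries through, i.e.\ does not abort under $M$. The crucial step is to rule out that the attacker knows $K$: if he did, then $G$ would hold a session $(S, A', K)$ with a compromised key, so detection against pure Imp2GU forces $G$ to abort \emph{unless} $\AC(L)$ has a confirmed session with an attacker-known key; but the previous paragraph showed $\AC(L) = A$ has no such session, so the escape clause is vacuous, $G$ would abort, and the session could not have been carried through. Hence $K$ is secret, which gives clause~(2). Secrecy of $K$ in turn means the attacker never completed the protocol in the role of $A'$, so the key-confirmation value accepted by the uncompromised $G$ must originate from a genuine run at $A'$; the agreement guarantees of $P$ to $G$ then yield a matching session at $A'$ with agreement on service and location, so $A'$ is at $L$ and clauses~(1) and~(3) follow. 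Together the two halves are exactly Secure TAGA~II to $G$ and to $A$ under the airport-reliant adversary, which is the claim.

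The hard part is the coupling between the two directions in the GU case: the measure only detects a \emph{pure} Imp2GU, namely one in which $\AC(L)$ is not itself caught in a compromised session, so the GU argument closes only once I have independently certified that $\AC(L)$ has no compromised confirmed session. Pinning this down is precisely where the AC's airport/location check does the work, since it is what confines $A$'s peers to the honest GUs of $\mcH$. I expect the two delicate points to be (a) justifying that a valid $\mcH$-certificate with matching private key cannot be wielded without an $\mcH$-GU LTKC, and (b) making the abort semantics precise, so that a session aborted by $M$ before any safety-critical setting does not count as a violation of Secure TAGA~II.
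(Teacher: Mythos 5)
Your proposal is correct and follows essentially the same route as the paper's proof: both reduce the AC side to Lemma~\ref{lem:prot:resil} plus the AC's check that the GU certificate names its current airport (impossible without a local-GU LTKC, which the airport-reliant model excludes), and both close the GU side by showing that the escape clause of ``detection against pure Imp2GU'' would itself require a compromised confirmed session at $\AC(L)$ and hence a local-GU LTKC. The only difference is organizational --- you prove the AC half first and feed it explicitly into the GU half, whereas the paper inlines that case analysis --- which is arguably a cleaner presentation of the same argument.
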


There are several ways to implement detection against pure Imp2GU.
The following measure translates the standard scheme of challenge/response
authentication into the concept of \emph{physical challenge/cyber response:}
The GU sends a challenge via the physical connection, e.g.\ encoded 
in a pattern of pulsating flow, which the AC must answer via the 
cyber channel. Thereby the physical
connection is directly bound into the KE process. 

\begin{measure}[Physical Challenge/Cyber Response (PC/CR)]
\label{m:phyCR}
Assume that the airpacks of the AC are 
equipped with mass airflow sensors that can detect a pattern of
airflow changes and report it to the AC controller.
Then a phase of physical challenge response can be included 
between the setup phase and the M2M phase as illustrated in 
Fig.~\ref{fig:meas:phyCR}.
$G$ generates a random number of a fixed size, say $N_G$, and encodes 
this into a pattern of pulsating airflow.
$A$ reads the physical signal by the airflow sensors and decodes it
back into a number, say $N_\nread$. $A$ then responds by sending 
$N_\nread$
back to $G$ via the cyber channel. $G$ checks whether $N_\nread = N_G$.
If this is true then $G$ concludes that it speaks to 
$\AC(L)$: only the AC that is physically connected
to the GU could have known $N_G$. If the numbers don't agree $G$ stops
and raises an alarm.
\end{measure}

\begin{lemma}
\label{lem:resil:pccr}
PC/CR guarantees detection against pure Imp2GU.
\end{lemma}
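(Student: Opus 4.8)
The plan is to argue that a successful PC/CR check certifies that $G$'s challenge $N_G$ has travelled out over the physical hose and come back over the cyber channel, and that under the stated side condition only the genuinely connected aircraft $\AC(L)$ could have closed this loop --- so if instead the attacker knows $K$, the loop cannot close and $G$ raises the alarm. Concretely, suppose $G$ has a session $(S, A, K)$ at slot $L$ with the attacker knowing $K$, and suppose $\AC(L)$ has no session $(S, G', K', c)$ whose key $K'$ is known to the attacker. I first record three facts that PC/CR and the physical setup supply: (i) by the physical setup at $L$, $G$'s supply hose feeds exactly the one aircraft $\AC(L)$, so the pulsating-airflow encoding of $N_G$ is sensed only by $\AC(L)$ and by no remote party; (ii) $N_G$ is a fresh random value of fixed length that $G$ never transmits except through this physical channel, so a priori the attacker's chance of guessing it is negligible; and (iii) the PC/CR phase is inserted between the setup and M2M phases, so any abort here happens before safety-critical settings are started.

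The core step is then to show the attacker cannot make $G$ accept. For the check $N_\nread = N_G$ to pass, $G$ must receive $N_G$ on its cyber channel, which is secured under $K$; since the attacker knows $K$ and, this being a pure Imp2GU situation, sits cryptographically opposite $G$, the accepting value must be produced or relayed by the attacker, so the attacker must learn $N_G$. By (i) the only party besides $G$ holding $N_G$ is $\AC(L)$ (which decodes it as $N_\nread = N_G$), and $\AC(L)$ only ever emits it over its own cyber channel. Here the side condition does the work: $\AC(L)$ has no confirmed session for $S$ whose key the attacker knows, so either $\AC(L)$ has no channel on which to answer (no response, $G$ times out) or every channel it uses is keyed by a secret the attacker lacks (the response is confidential, so the attacker cannot read $N_\nread$ and cannot relay it). Note this also rules out the degenerate case of $\AC(L)$ answering on $G$'s own channel, since that would mean $\AC(L)$ holds $(S, G, K, c)$ with the attacker knowing $K$, contradicting the hypothesis. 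In all cases the attacker cannot supply $N_G$, so by (ii) the check fails except with negligible probability, and $G$ stops and raises an alarm; by (iii) this is in time. This is exactly detection against pure Imp2GU.

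The main obstacle is the secrecy-of-$N_G$ argument, i.e.\ step (i) together with the confidentiality of $\AC(L)$'s response. It rests on two physical/cryptographic premises that should be stated explicitly: that the airflow pattern is a confidential, authentic point-to-point link to the connected aircraft (the attacker can neither sense it remotely nor inject into it), and that $\AC(L)$'s cyber response travels on a channel whose key the attacker does not hold --- which is precisely what the ``as long as'' clause guarantees. A secondary subtlety worth spelling out is why the clause targets confirmed ($\cst = c$) sessions: an unconfirmed AC session would not yet furnish a usable return channel for the attacker, whereas a confirmed AC session with a compromised key would turn the scenario into a relay/MitM in which the response could legitimately be forwarded, and detection could no longer be guaranteed. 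I would also flag the standard caveat that ``only $\AC(L)$ knows $N_G$'' holds up to the negligible guessing probability of a fresh fixed-length nonce and up to correct decoding of the airflow pattern.
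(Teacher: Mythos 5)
Your proof is correct and is essentially the contrapositive rendering of the paper's own argument: both rest on the facts that only the physically connected $\AC(L)$ can learn $N_G$, that $\AC(L)$ only emits it over its cyber sessions, and that the side condition (no compromised confirmed AC session) forces either no usable return path for the attacker or the degenerate case $K = K'$, which contradicts the hypothesis. Your explicit caveats about nonce-guessing probability and the point-to-point nature of the airflow channel are assumptions the paper also relies on (and discusses in the surrounding text), so no gap there.
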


The space of nonces must be sufficiently large to reduce the risk of
guessing attacks: Even when the attacker cannot receive the physical
signal he can always guess the nonce $N_G$ and send it back via a cyber 
channel he has established with the GU by an impersonation attack.
This brings about a trade-off between security and efficiency. For
example:
Say the physical channel allows a binary encoding of numbers in terms of
high and low airflow (e.g.\ using stuffing to synchronize). Say an  
encoded bit requires 2 seconds to be
transmitted, and a challenge shall maximally take 10 (or 20) seconds to be
transmitted. Then one can use a space of 32 (or 1024) nonces, and the
attacker has a 1/32 (or 1/1024) chance to guess correctly.

\paragraph{Full LTKC Resilience}
Full LTKC resilience can be reached if the TAGA process guarantees 
channel authenticity as a second line of defense without relying on
any central key management. 
We only neeed to make sure that the underlying AKE protocol is suitable 
for this in the following sense:


\begin{definition}
Let $P$ be a protocol that is secure for a-TAGA.
$P$ guarantees \emph{LTKC resilience against local eavesdropping
(eav-LTKC resilience) for a-TAGA} if, whenever an instance of $P$
is run between a GU $G$ and an AC $A$ then, even when both
$G$ and $A$ have a LTKC, $P$ guarantees secrecy, 
key freshness, opposite type, and agreement on peer, service 
and location to $G$ and $A$ 
against an attacker that can only
eavesdrop on their current session of $P$ (but actively intervene 
in all other sessions as usual).
\end{definition}

\begin{corollary}
Let $T = (P, N, M)$ be a TAGA process such that $P$ is secure 
and eav-LTKC resilient for a-TAGA, and $N$ under $M$ 
guarantees channel authenticity without relying on any central
key management.  Then $T$ guarantees full LTKC resilience.
\end{corollary}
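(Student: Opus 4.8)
The plan is to use channel authenticity as a \emph{second line of defense} that survives the total collapse of the PKI: it reduces the fully active attacker to the eavesdropping-only attacker contemplated by eav-LTKC resilience, after which the protocol-level guarantees yield Secure TAGA~II exactly as in the proof of Theorem~\ref{thm:tagaAuth}. So first I would fix an arbitrary instance of $T$ carried out between a GU $G$ and an AC $A$ at a parking slot $L$, and assume the worst case permitted by full LTKC resilience: every AC and GU, including $G$ and $A$ themselves, has a long-term key compromise. The goal is to show that $T$ still guarantees Secure TAGA~II to both $G$ and $A$. I would argue the GU direction in detail and note that the AC direction is symmetric. So suppose $G$ has a session $(S, \Gini, \Gfin, A, K)$.

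The crucial observation is that, by hypothesis, $N$ under $M$ guarantees channel authenticity \emph{without relying on any central key management}. Hence this guarantee is independent of the long-term credentials and is not undermined by the LTKCs: the physical and local measures underpinning it (card binding, distance bounding, and the secure proximate zone) remain fully in force. Applying channel authenticity to $G$'s session therefore yields an OP session $(\Wini,\Wmid,\Wfin)$ and an AC session $(\Amid,\ldots)$ with $\tapl{\Wini}{\Gini}$, $\tapl{\Wmid}{\Amid}$, $\tapl{\Wfin}{\Gfin}$, and, most importantly, $\msgs(\Gini)=\msgr(\Amid)$ and $\msgs(\Amid)=\msgr(\Gfin)$. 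In words: the key-establishing NFC messages that $G$ emits are delivered verbatim into a genuine AC session and back, all tied together by a single operator walk. The attacker has thus been prevented from inserting, deleting, or modifying any message carrying the ephemeral keys and certificates of the current session; the only residual capability he retains on this session is to eavesdrop (which, as the threat analysis stresses, we cannot exclude), while on every other session he remains fully active.

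This is precisely the attacker model of eav-LTKC resilience. Invoking that property for the pair $(G,A)$ --- valid even though both have a LTKC --- I obtain that $P$ guarantees secrecy, key freshness, opposite type, and agreement on peer, service, and location to $G$ and $A$. From these protocol-level guarantees Secure TAGA~II to $G$ follows by exactly the argument underlying Theorem~\ref{thm:tagaAuth}: agreement on peer and key yields a matching AC session $(\Amid,S,G,K,\ldots)$, secrecy and key freshness give that $K$ is fresh and known only to $G$ and $A$, and agreement on location places $A$ at $G$'s parking slot. The symmetric argument, using channel authenticity to the AC together with the analogous half of eav-LTKC resilience, delivers Secure TAGA~II to $A$. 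Taken together these two halves establish full LTKC resilience.

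I expect the delicate step to be the reduction in the second paragraph: arguing rigorously that the matching-message guarantee of channel authenticity really does confine the attacker to the eavesdropping-only role on the \emph{whole} current run of $P$. Channel authenticity is stated for the NFC transport, whereas a run of $P$ also contains the WLAN key-confirmation pass. I would need to check that, once the NFC-borne ephemeral material is authentic, the session key is already pinned down, so that any active interference the attacker might still attempt on the WLAN pass is covered by the ``active on all other sessions'' clause and cannot forge the confirmation MAC for the current, key-secret session. Making this alignment between the channel-level and protocol-level attacker models watertight is the real content of the proof; the remaining derivations are routine given Theorem~\ref{thm:tagaAuth}.
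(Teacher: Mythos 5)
Your argument is correct and is exactly the reduction the paper intends: the paper states this as a corollary with no written proof, and the evident justification is precisely your chain ``channel authenticity (independent of the PKI) confines the attacker to eavesdropping on the current session $\Rightarrow$ eav-LTKC resilience applies despite all parties having a LTKC $\Rightarrow$ Secure TAGA~II follows as in Theorem~\ref{thm:tagaAuth}.'' The subtlety you flag about the WLAN key-confirmation pass is real and is not addressed by the paper either; your sketch for closing it (the key is already pinned down by the authentic NFC material, so active interference on the Finish message cannot forge the MAC) is the right way to make it watertight.
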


\begin{fact}
$N_\GU$ under $M_T$ guarantees channel authenticity without relying
on any central key management.
\end{fact}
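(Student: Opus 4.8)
The plan is to prove the \textbf{Fact} in two parts, matching its two clauses: first that $\NGU$ under $M_T$ satisfies channel authenticity, and second that the implementation achieving this invokes no central key management. For the first part I would not reprove channel authenticity from scratch but instead instantiate Theorem~\ref{thm:plain:N}: taking $M_T$ to be the concrete set of process measures that includes \emph{secure proximate zone} (together with the distance-bounding bindings described in Section~\ref{s:plain:sols}), that theorem already states that $\NGU$ so equipped guarantees channel authenticity. The only work here is to confirm that the particular $M_T$ used in the surrounding corollary is genuinely an instance of the measures quantified over in the theorem.

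First I would recall the three ingredients that force the channel-authenticity correspondence, since I will reuse them for the second part. (i) The GU--card discipline: the card accepts exactly the sequence \emph{authenticated GU exchange}, \emph{one unauthenticated exchange}, \emph{authenticated GU exchange}, with the first and third exchanges keyed by the symmetric key shared only between the card and its own GU. This ties $\Gini$ and $\Gfin$ to genuine NFC ssns of that GU and rules out AC masquerading, since there can be only one successful unauthenticated exchange. (ii) Distance bounding, bound on the GU side to the card's authentication and on the AC side to the card's ephemeral key, so that each reader communicates only with a device within distance $d$. (iii) \emph{Secure proximate zone}, which guarantees that the only NFC device within distance $d$ of either reader is the operator's card. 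Combining (ii) and (iii) yields $\tapl{\Wini}{\Gini}$, $\tapl{\Wmid}{\Amid}$, and $\tapl{\Wfin}{\Gfin}$, while Assumption~\ref{ass:nfc} upgrades the unauthenticated card--AC exchange to message agreement, giving $\msgs(\Gini)=\msgr(\Amid)$ and $\msgs(\Amid)=\msgr(\Gfin)$; the AC direction is symmetric.

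For the second clause I would argue by inspection of precisely these ingredients. The only long-term credential used for \emph{authentication} is the symmetric key pre-installed pairwise on each GU reader and its own card, provisioned once in a secure environment; no certificate is ever issued, checked, or revoked during operation, and no trusted third party participates. The card--AC link uses an unauthenticated DH exchange, which carries no credentials at all, and distance bounding is keyed by the same local symmetric key on the GU side and by the fresh ephemeral key on the AC side. The secure proximate zone is a purely physical and procedural measure. Hence every mechanism contributing to channel authenticity is local and pairwise, and none depends on a PKI, a certification authority, or any online key-management service.

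The main obstacle is making the phrase \emph{without relying on any central key management} precise enough to discharge, since this is exactly the property separating $\NGU$ from $\NOP$, whose authentication rests on an airport-local PKI. I would fix a working definition --- central key management being any operative dependence on a trusted third party for credential issuance, status, or revocation --- and then observe that the GU-bound symmetric key, being a one-time binding between exactly two devices, introduces no such party. The remaining subtlety is to verify that neither the AC-masquerading discipline nor the distance-bounding step smuggles in a central component; both are discharged by noting that they reuse only the already-local symmetric key and the session-local ephemeral key respectively.
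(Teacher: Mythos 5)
Your proposal is correct and follows essentially the same route the paper intends: the paper gives no explicit proof of this Fact, but its justification is exactly your decomposition --- channel authenticity comes from Theorem~\ref{thm:plain:N} (via Lemmas~\ref{lem:plain:auth} and~\ref{lem:plain:N}, whose $\NGU$ case rests on the pairwise key $K_{\GC}$, distance bounding, and the secure proximate zone), and the absence of central key management is read off from the GU-bound design in Section~\ref{s:plain:sols}, where the only long-term credential is a symmetric key pre-installed on exactly one GU and its own card, in contrast to $\NOP$'s airport-local PKI. Your added care in pinning down what ``central key management'' excludes is a reasonable sharpening of what the paper leaves informal.
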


\removed{
We now show how such resiliance can be achieved by a combination
of advanced properties of the underlying AKE protocol and
local process measures. 

the airport this
 only "locally touching
parties" need mutually trust each other. The premise is that
then there is already trust.
And similarly the other way around.
But it will
be unaffected by any other airport's security even when they
also participate in TAGA. 
}


\removed{
\subsection{Advanced Security Properties of AKE Protocols}
\removed{
We motive that resilient .. should guarantee several
advanced properties : Resilience against ..,
..., and The latter is but closely related to .
Definitions and proofs in the appendix.}

\subsubsection{The Significance of KCI Resilience}

Given that a participant $X$ has a LTKC, it is clear that 
the attacker can impersonate $X$ to any other participant. However,
another question to ask is whether this enables the attacker to 
impersonate any other participant to $X$. We then say the attacker 
can carry out a \emph{Key-Compromise Impersonation (KCI)} attack 
\cite{first:kci}. In our setting this has the following
consequence: if the TAGA protocol 
$P$ is not resilient against KCI attacks then, 
given a LTKC of GU $G_I$, the attacker will be able to stage an Imp2GU 
attack against any
AC serviced by $G_I$. Moreover, the attacker can combine each such KCI
attack with a standard impersonation attack to obtain M2M MitM capability. 
We illustrate this by a concrete example based 
on the \emph{Unified Model (UM)} protocol \cite{prot:um}. The UM,
shown in Table~\ref{prot:um}, is another DH protocol 
with implicit authentication. Moreover, it is well-known to be vulnerable 
to KCI attacks \cite{prot:umcki}. 

\begin{table}
\begin{enumerate}
\item $G$ generates $(R_G, r_G)$\\
      $G$ sends $S, L, \cert(G, W_G), R_G$

\item $A$ receives and validates the message\\ 
      $A$ generates $(R_A, r_A)$\\
      $A$ computes $K := H(w_A W_G\, ||\, r_A R_G)$\\
      $A$ computes $\mac_A := \mac_K(2, A, G, R_A, R_G, S, L, \ssid_A)$\\
      $A$ sends $\cert(A, W_A), R_A, \ssid_A, \mac_A$

\item $G$ receives and validates the message \\
      $G$ computes $K := H(w_G W_A\, ||\, r_G R_A)$\\
      $G$ validates $\mac_A$\\
      $G$ computes $\mac_G := \mac_K(3, G, A, R_G, R_A, S, L, \ssid_A)$\\
      $G$ establishes the WLAN connection and sends $\mac_G$.
\end{enumerate}
\caption{\label{prot:um} The UM Protocol}
\end{table}

\begin{figure}
\begin{center}
\begin{scaletikzpicturetowidth}{\columnwidth}
\begin{tikzpicture}[scale=\tikzscale,
nfc/.style={->,>=stealth,shorten >=0.025cm},
wlan/.style={->,>=stealth,shorten >=0.025cm,dashed},
dot/.style={circle,draw=black,fill=black, radius=0.025}]
{\scriptsize

\draw (0,-1.25) -- (0,-2.75);
\draw (0,-3.25) -- (0,-6.25);
\draw (0,-7.75) -- (0,-8.5);

\draw (0,-9) -- (0,-13.25);

\draw (-4,-1.25) -- (-4,-13.25);
\draw (4,-1.25) -- (4,-13.25);

\draw (0,-1.25) node[right] {OP with $U_I$};
\draw (-4,-1.25) node[right] {$G_I$};
\draw (4,-1.25) node[left] {$A$};
\draw (0,-9) node[right] {$\AP_I$};

\coordinate (G1) at (-4,-1.75);
\coordinate (G2) at (-4,-2.5);  
\coordinate (G3) at (-4,-8.25);  
\coordinate (G4) at (-4,-9.25);  

\coordinate (I1) at (0,-2.5); 
\coordinate (I2) at (0,-3);
\coordinate (I3) at (0,-3.75); 
\coordinate (I4) at (0,-6); 
\coordinate (I5) at (0,-7); 
\coordinate (I6) at (0,-8.25); 

\coordinate (A1) at (4,-3.75);
\coordinate (A2) at (4,-4.75);
\coordinate (A3) at (4,-6); 

\coordinate (G2I1) at (-2,-2.5);
\coordinate (I3A1) at (2,-3.75);
\coordinate (I4A3) at (2,-6);
\coordinate (I6G3) at (-2,-8.25);

\draw[nfc] (G2) -- (I1);
\draw[nfc] (I3) -- (A1);
\draw[nfc] (A3) -- (I4);
\draw[nfc] (I6) -- (G3); 

\draw (G2I1) node[above] {$S, L, \cert(G_I,W_I), R_G$};
\draw (I3A1) node[above] {$S, L, \cert(G_I,W_I), R_I$};
\draw (I4A3) node[above] {$\cert(A,W_A), R_A, \ssid_A, \mac_A$};
\draw (I6G3) node[above] {$\cert(A,W_A), R_I, \ssid_I, \mac_\GI$};

\draw (G1) node[right,draw] {Generate $(R_G, r_G)$};

\draw (I2) node[draw, minimum height=0.5cm] {Swap $R_G$ for $R_I$};

\draw (A2) node[left,align=left,draw] { 
 Generate $(R_A, r_A)$\\
 $K_\IA := H(w_A W_I\, ||\, r_A R_I)$\\
 $\mac_A := \mac_{K_\IA}(2,A,G_I,$\\
       \hfill $R_A,R_I,S,L,\ssid_A)$};

\draw (I5) node[draw, align=left, minimum height=1.5cm] {
 $K_\GI := H(w_I W_A\, ||\, r_I R_G)$\\
 $\mac_\GI := \mac_{K_\GI}(2,A,G_I$\\
      \hfill $R_I,R_G,S,L,\ssid_I)$\\
 Swap $R_A$ for $R_I$, $\mac_A$ for $\mac_\GI$\\
  \hfill $\ssid_A$ for $\ssid_I$} ;

\draw (G4) node[right,align=left, draw] {
  $K_\GI := H(w_I W_A\, ||\, r_G R_I)$\\
  Verify $\mac_\GI$\\
  $\mac_G := \mac_{K_\GI}(3,G_I,A,$\\
     \hfill $R_G,R_I,S,L,\ssid_I)$
  };

\draw (0,-9.5) node[below right,align=left,draw] {
  $K_\GI := H(w_I W_A\, ||\, r_I R_G)$};

\draw[wlan] (-4,-10.35) -- (0,-10.35);
\draw[wlan] (-4,-10.75) -- (0,-10.75);
\draw (-2,-10.35) node[above] {Connect to WLAN with $\ssid_I$};
\draw (-2,-10.75) node[above] {$\mac_G$};

\draw (0,-11) node[below right,align=left,draw] {
   Receive $R_A$ (from Eavesdropper)\\
   $K_\IA := H(w_I W_A\, ||\, r_I R_A)$\\
   $\mac_\IA := \mac_{K_\IA}(3,G_I,A,$\\
   \hfil $R_I,R_A,S,L,\ssid_A)$
   };

\draw[wlan] (0,-12.6) -- (4,-12.6);
\draw (2,-12.6) node[above] {Connect to WLAN with $\ssid_A$};
\draw[wlan] (0,-13) -- (4,-13);
\draw (2,-13) node[above] {$\mac_\IA$};

\draw (2.5,-6) node[below right, rotate=-45] {Eavesdrop};


\removed{
\draw[fill] (G1) circle[dot];
\draw[fill] (G2) circle[dot];
\draw[fill] (I1) circle[dot];
\draw[fill] (I2) circle[dot];
\draw[fill] (I3) circle[dot];
\draw[fill] (I4) circle[dot];
\draw[fill] (A1) circle[dot];
\draw[fill] (A2) circle[dot];}
}
\end{tikzpicture}
\end{scaletikzpicturetowidth}
\end{center}
\caption{\label{fig:att:kci} Attack~\ref{att:kci}:
KCI attack against the UM protocol}
\end{figure}

\begin{attack}[KCI attack against UM]
\label{att:kci}
Let $G_I$ be a GU for service $S$ at airport $\mcH$, for which
the attacker has achieved a LTKC. Further, let $A$ be any AC that is serviced
by $G_I$, say at parking slot $L$. 
In preparation, the attacker swaps the NFC card of 
$G_I$'s operator with his own prepared card $U_I$. Moreover, he sets
up NFC eavesdropping capability, and his own WLAN access point $\AP_I$ 
within range of $L$. Both $\AP_I$
and the card $U_I$ are prepped with a fixed ephemeral
key pair $(r_I, R_I)$ and an ssid $\ssid_I$ for the WLAN with $G_I$. 
In addition,
$\AP_I$ is prepped with $G_I$'s credentials $w_I$ and $\cert(G_I, W_I)$,
and $A$'s long-term public key $W_A$ (c.f.\ Ass.~\ref{ass:publicKeys}).
Alternatively, $W_A$ can be obtained by eavesdropping.
  
Then the attacker can proceed as shown in Fig.~\ref{fig:att:kci}.
The interaction with $G_I$ consitutes a KCI attack, where the attacker 
impersonates $A$: he can compute the same key $K_\GI$ as $G_I$ by using
his knowledge of $w_I$ rather than $w_A$ (and his own ephemeral key pair).
The interaction with $A$ constitutes a standard impersonation attack, 
where the attacker can impersonate $G_I$ due to his knowledge of $w_I$,
and establishes a key $K_\IA$ with $A$.  
Altogether, the attacker can now fully
control the M2M communication of $A$ and $G$ as the MitM.
\end{attack}

This illustrates that as long as KCI attacks are a threat it is
not possible to protect against one-sided LTKCs by detection measures
against impersonation attacks. For this we require that the
protocol guarantees \emph{KCI Resilience}. Provided that the ephemeral
private keys are secure, this will guarantee that the
natural correspondence between LTKCs and secrecy attacks one
would expect is indeed in place: 
(1)~To impersonate as the local AC to $G$ the attacker requires to 
obtain an LTKC of some AC.
(2)~To impersonate as the local  GU for service $S$ to $A$ the attacker requires
to obtain a LTKC of some GU of that domain.
And
(3)~to act as a MitM between $G$ and $A$ the attacker requires to
obtain a LTKC of some AC and a LTKC of some GU of the corresponding domain.



\subsubsection{The Significance of Key Confirmation}

Given a LTKC of a party $X$, another threat to consider is whether 
this enables the attacker to launch an attack that violates
the authentication goals (rather than Secrecy). In our setting, this 
will enable the attacker to mount one of the M2M mismatch attacks.

\begin{table}

\begin{enumerate}
\item $G$ generates $(R_G, r_G)$\\
      $G$ sends $S, L, \cert(G, W_G), R_G$
\item $A$ receives and validates the message\\
      $A$ generates $(R_A, r_A)$\\
      $A$ computes $s_A := r_A + \mu(R_A) w_A$\\
      $A$ computes $S_G := R_G + \mu(R_G) W_G$\\
      $A$ computes $K := h s_A S_G$\\
      $A$ computes $\sig_A := \sign_{w_A}(R_A, S, L, \ssid_A)$ \\
      $A$ sends $\cert(A, W_A), R_A, \ssid_A, \sig_A$  
\item $G$ receives and validates the message\\
      $G$ computes $s_G := r_G + \mu(R_G) w_G$\\
      $G$ computes $S_A := R_A + \mu(R_A) W_A$\\
      $G$ computes $K := h s_G S_A$\\
      $G$ verifies $A$'s signature with $W_A$\\
      $G$ establishes the WLAN connection with $\ssid_A$ and $K$
\end{enumerate}

\caption{\label{t:prot:mqvv} MQV-S: MQV with Signature Confirmation}
\end{table}

We give a concrete example based on the \emph{Menzes-Qu-Vanstone (MQV)}
protocol \cite{mqv}, a predecessor of the FHMQV. 
Table~\ref{t:prot:mqvv} 
shows the MQV extended by a signature to authenticate service and
location rather than by a key confirmation step. We call it MQV-S.
MQV-S guarantees Secrecy, KCI Resilience, and the authentication goals, 
but the latter no longer hold under the assumption of an AC-side 
LTKC. Our example shows that the following is possible: given a root
certificate compromise of some airline $\mcA_I$,
the attacker can realize the `Imp2GU with Mismatch 
Diversion' (c.f.~Fig.~\ref{fig:astates}(d)) against 
\emph{any} two TAGA instances that happen about the same time at any 
airport $\mcH$ that handles ACs of $\mcA_I$. 
The fact that the attack scales to \emph{any} two parallel TAGA instances
is made possible by integrating an identity mismatch. This in turn is
made possible by the \emph{unknown key share attacks}, which the MQV 
(without key confirmation) is known to be vulnerable against \cite{kaliski}. 
For this attack we assume that TAGA uses smartphones rather than
smartcards as mobile units.

\begin{figure*}
\begin{center}
\begin{tikzpicture}[
nfc/.style={->,>=stealth,shorten >=0.025cm},
wlan/.style={->,>=stealth,shorten >=0.025cm,dashed},
dot/.style={circle,draw=black,fill=black, radius=0.025}]
{\scriptsize


\draw (-2,0) -- (-2,-5.075);
\draw (-2,-6.925) -- (-2,-14.5);
\draw (-2,0) node[below right] {$A$};

\draw (-6,0) -- (-6,-3.175); 
\draw (-6,-3.825) -- (-6,-8.5); \draw (-6,-9.5) -- (-6,-11);
\draw (-6,0) node[below right] {$U_I$};

\draw (-6,-11.5) -- (-6,-14.5);
\draw (-6,-14.5) node[above right] {$\AP_I$};

\draw (-9.5,0) -- (-9.5,-14.5);
\draw (-9.5,0) node[below right] {$G$};
\draw (2,0) -- (2,-10.725);
\draw (2,-12.28) -- (2,-14.5);
\draw (2,0) node[below left] {$G'$};

\draw (6,0) node[below left] {$U_I'$};
\draw (6,0) -- (6,-8.3); 
\draw (6,-9.7) -- (6,-11);
\draw (6,-11.5) -- (6,-14.5);
\draw (6,-14.5) node[above left] {$\AP_I'$};

\draw (8.5,0) -- (8.5,-14.5);
\draw (8.5,0) node[below left] {$A'$};

\draw[dotted,thick] (0,0) -- (0,-14.5);
\draw (0,0) node[below left] {$L$};
\draw (0,0) node[below right] {$L'$};

\coordinate (G1) at (-9,-1);
\coordinate (G2) at (-9.5,-2);

\coordinate (I1) at (-6,-2);
\coordinate (I2) at (-6,-2.5);
\coordinate (I3) at (-6,-3.5);
\coordinate (I4) at (-6,-4.5);
\coordinate (I5) at (-6,-7.5);
\coordinate (I6) at (-7,-9);
\coordinate (I7) at (-6,-8);

\coordinate (A1) at (-2,-4.5);
\coordinate (A2) at (-1,-6);
\coordinate (A3) at (-2,-7.5);

\coordinate (G'1) at (2,-1);
\coordinate (G'2) at (2,-2);
\coordinate (G'3) at (2,-10.5);

\coordinate (I'1) at (6,-2);
\coordinate (I'2) at (6,-2.5);
\coordinate (I'3) at (6,-8);
\coordinate (I'4) at (6,-10.5);

\draw[nfc] (G2) -- (I1);
\coordinate (G2I1) at (-7.75,-2);
\draw[nfc] (G'2) -- (I'1);
\coordinate (G'2I'1) at (4,-2);
\draw[wlan] (I'2) -- (I2);
\coordinate (I2I'2) at (0,-2.5);
\draw[nfc] (I4) -- (A1);
\coordinate (I4A1) at (-4,-4.5);
\draw[nfc] (A3) -- (I5);
\coordinate (I5A3) at (-4,-7.5);
\draw[wlan] (I7) -- (I'3);
\coordinate (I7I'3) at (0,-8);
\draw[nfc] (I'4) -- (G'3);
\coordinate (G'3I'4) at (4,-10.5);

\draw (-9.5,-1) node[right,draw] {Generate $(R_G, r_G)$};

\draw (G2I1) node[above] {$S, L, \cert(G, W_G), R_G$};

\draw (G'1) node[right,draw] {Generate $(R_G', r_G')$};
\draw (G'2I'1) node[above] {$S', L', \cert(G', W_G'), R_G'$};
\draw (I2I'2) node[above] {$\cert(G', W_G'), R_G'$}; 

\draw (I3) node[draw, align=left, text width=3.5cm] 
  {For NFC tap with $A$:\\
   Swap $G'$ keys for $G$ keys};

\draw (I4A1) node[above]  {$S, L, \cert(G', W_G'), R'_G$};

\draw (A2) node[left,align=left,draw,text width=3.5cm] {
  Generate $(R_A, r_A)$\\
  $s_A := r_A + \mu(R_A)w_A$\\
  $S_G' := R_G' + \mu(R_G')W_G'$\\
  $K_\GpA := h s_A S_G'$\\ 
  $\sig_A := \sign_{w_A}(R_A, R_G',$\\ \hfill $S, L, \ssid_A)$};

\draw (I5A3) node[above] {$\cert(A,W_A), R_A, \ssid_A, \sig_A$};

\draw (8,-9) node[draw,left,align=left] {
$R_I' := R_A + \mu(R_A)W_A - uP$, for some $u$\\
$w_I' := \mu(R_I')^{-1} u$; $W_I' := w_I' P$ \\
Generate $\cert(I', W_I')$ with private key of $\mcA_I$\\
$\sig_I' := \sign_{w_I'}(R_I',R_G',S',L',\ssid_I')$};

\draw (I7I'3) node[above] 
{$W_A$, $R_A$};

\draw (G'3I'4) node[above] 
{$\cert(I', W_I'), R_I', \ssid_I', \sig_I'$};

\draw (0.25,-11.5) node[draw,right,align=left] 
{$s_G' := r'_G + \mu(R_G')w_G'$ \\
 $S_I' := 
       R_A+\mu(R_A)W_A - uP$\\
  \hfill $+ \mu(R_I')(\mu(R_I')^{-1}uP$\\
\hfill $= R_A + \mu(R_A)W_A = S_A$\\
 $K_\GpA := hs_G'S_A$};


\draw (-6,-9) node[draw,align=left,text width = 3.5cm] {
  Relay $W_G$, $R_G$, $\ssid_A$ to $\AP_I$\\
  For NFC tap with $G$: $\sig_I :=$\\
    $\sign_{w_I}(R_I,R_G,S,L,\ssid_I)$
};

\draw[nfc] (-6,-10.5) -- (-9.5,-10.5);
\draw (-7.75,-10.5) node[above] {$\cert(I, W_I), R_I, \ssid_I, \sig_I$};

\draw(-9.5,-11.5) node[draw,right,align=left] {
$s_G := r_G + \mu(R_G)w_G$\\
$S_I := R_I + \mu(R_I)W_I$\\
$K_{GI} := h s_G S_I$
};

\draw(-6,-12.5) node[draw,right,align=left] {
Receive $W_G$, $R_G$, $\ssid_A$\\
$s_I := r_I + \mu(R_I)w_I$\\
$S_G := R_G + \mu(R_G)W_G$\\
$K_{GI} := h s_I S_G$
};

\draw[wlan] (-9.5,-12.5) -- (-6,-12.5);
\draw (-7.75,-12.5) node[above] {$\{\msg\}_{K_\GI}$};

\draw[wlan] (2,-13) -- (6,-13);
\draw (4,-13) node[above] {$\{\msg\}_{K_\GpA}$};

\draw[wlan] (6,-13.5) -- (-6,-13.5);
\draw (0,-13.5) node[above] {$\{\msg\}_{K_\GpA}$};

\draw[wlan] (-6,-14) -- (-2,-14);
\draw (-4,-14) node[above] {$\{\msg\}_{K_\GpA}$};


\draw[nfc] (6,-5) -- (8.5,-5);
\draw (7.25,-5) node[above] {Read Status};

\draw[nfc] (8.5,-5.5) -- (6,-5.5);
\draw (7.25,-5.5) node[above] {Return Status};

}
\end{tikzpicture}
\end{center}
\caption{\label{fig:attack:mqv} Attack~\ref{att:agreem} Tripartite attack
 against the MQV-S protocol}
\end{figure*}

\begin{attack}[Tripartite attack against MQV-S]
\label{att:agreem}
Let $\mcA_I$ be an airline for which the attacker has gained the 
ability to obtain AC certificates for his own key pairs, 
e.g.\ by a root certificate
compromise. Let $\mcH$ be an airport that handles ACs of $\mcA$,
and hence, the GUs of $\mcH$ have the $\mcA_I$ root certificate 
installed. Let $A$ and $A'$ be ACs of airlines \emph{other than} $\mcA_I$. 
Let $G$ and $G'$ be  GUs at airport $\mcH$ for service $S$, and 
$S'$ respectively. Further, let these entitites be such that 
TAGA pairing takes place between $G$ and $A$ at parking position $L$,
and, at about the same time, between $G'$ and $A'$ at parking position 
$L'$. In preparation,
the attacker compromises the smartphone of the operator of $G$,
and that of the operator of $G'$ respectively.

The attacker can now proceed as shown in Fig.~\ref{fig:attack:mqv}.
The attack consists of three parts. 
(1)~On the left hand side, the interaction between $U_I$ and $G$ 
constitutes a standard impersonation attack to $G$ analogously to 
Attack~\ref{att:ltkc}. The attacker simply generates his own key pair 
$(W_I, w_I)$, for which he can obtain a valid AC certificate associated
with a fictive AC $I$, $\cert(I, W_I)$, signed by $\mcA_I$.
(2)~On the right hand side, the interaction between $U_I'$ and
$A'$ constitutes a `suppress attack': $U_I'$ does not
forward the TAGA request of $G'$ to $A'$ but substitutes another command
for it, e.g.\ a request to read the status of the AC controller.
Hence, $A'$ will never know that a TAGA session for $S$ is underway. 
(3)~In the middle, across the parking positions $L$ and $L'$
and mainly involving $U_I$, $A$, $U'_I$, and $G'$ the attacker mounts a
location and service mismatch attack. As a result, $A$ shares the 
session key $K_\GpA$ with $G'$, and assumes $G'$ agrees that $K_\GpA$
is shared with the AC of identity $A$ for service $S$ at $L$. It is
true that $A$ only shares the key with $G'$. However, $G'$'s view is that it
shares $K_\GpA$ with the AC of identity $I'$ for service $S'$ at $L'$ 
(while $I'$ does not even exist).

We now explain the middle part in more detail.
At $L'$, $U_I'$ waits for the first NFC tap with $G'$, and then relays the 
received certificate and ephemeral public key of $G'$ to $U_I$. 
At about the same time at $L$, $U_I$ waits for the first NFC tap with 
$G$, and sets aside the received location $L$ and service $S$ of $G$. 
$U_I$ then prepares a forged TAGA request to $A$ based on the 
relayed certificate and ephemeral public key of $G'$ but using the 
location and service of $G$. With the second NFC tap $A$ computes 
the session key $K_\GpA$, confirmation signature $\sig_A$, and the 
response message according to the usual procedure. 
The challenge for the
attacker is that he cannot simply relay $A$'s message back
to $G'$: $\sig_A$ confirms $S$ and $L$ rather than $S'$ and $L'$, which is
necessary to make $G'$ accept the message. 

Instead, the attacker relays $A$'s public keys $R_A$ and $W_A$ to 
$U'_{I}$ and there
integrates a step analogous to Kaliski's unknown key share attack 
\cite{kaliski:mqv}:
he computes an ephemeral public key $R_I'$ (without knowing $r_I'$), 
and a long-term private key $w_I'$ based on the values of $R_A$ and $W_A$.
He also computes the matching long-term public key $W_I'$ from
$w_I'$ as usual.
$R_I'$ and $W_I'$ are chosen such that when $G'$ computes the session 
key the result is equal (by the algebraic laws valid in the underlying 
field and group) to the session key that $G'$ would have computed when
presented with $R_A$ and $W_A$ instead: i.e.\ $K_\GpA$. Moreover, 
the attacker can obtain an AC certificate for $W_I'$ associated with
a fictive AC $I'$, $\cert(I', W_I')$,
signed by $\mcA_I$. This allows him to fabricate his own confirmation
signature $\sig_I'$, where he includes $S'$ and $L'$ as required for
the message to be accepted by $G'$.  
The attacker can now relay all messages secured by $K_\GpA$ 
between $G'$ and $A$, who will think they originate from an
authenticated peer of matching service and location. 
\end{attack}

While the practical realization of mismatch attacks is made difficult by
requiring two parallel TAGA processes, due to the birthday paradox it is 
not as unlikely to identify \emph{any two} parallel TAGA processes at 
an airport as one would intuitively expect. Moreover, the TAGA smartphones
of an airport could be widely compromised by means of a targeted malware, e.g.\ 
by a Trojan app aimed at airport staff.
When, as usual, a smartcard is employed as mobile unit the attack
is mainly theoretical. However, it highlights how subtle attacks
against authentication in this setting can be. 
Note that this attack is possible even when a detection measure against
Imp2GU attacks is in place based on the AC noticing a delay 
(c.f.\ Section~\ref{s:auth:measure:time}).

Altogether, we require the TAGA protocol to guarantee \emph{Resilience of
Authentication against LTKC}. 
Attack~\ref{att:agreem} demonstrates that this does not, in general, 
follow from KCI Resilience. However, when the protocol binds all 
authentication data to the session key in a key confirmation step 
then KCI Resilience
does imply Resilience of Authentication without necessitating further
proof \cite{basin:kcir}.

\removed{
the included in a key confirmation step
implication does hold when the protocol
includes a key confirmation step (including all relevant parameters) \cite{}.
This also brings the advantage wrt provable resilience: 
that the space of attack states to
explore gets smaller/merge and proving resilience narrows down to
considering fewer attack states.}

}

\subsubsection{Resilient Designs}

\begin{table}
\centering{
\begin{tabular}{|c|c|c||c|c|}
\hline
& N & M &  LTKC R. & Multi-Instance R.  \\
\hline \hline 
1 & any & 4/2EV & basic   &  ? \\
\hline
2 & any & PC/CR & airport-reliant  & ?  \\
\hline
3 & $\NGU$ & $M_T$ & full  &  yes \\ 
\hline
4 & $\NGU$ & $M_T$, 4EV & full &  yes\\
\hline 
\end{tabular} }
\vspace*{2ex}
\caption{\label{tab:authDesigns} Resilient Authenticated TAGA}
\end{table}

\removed{
Table~\ref{tab:authDesigns} shows how, when using the 
FHMQV\CLS\ as underlying protocol, we obtain secure and resilient
designs for a-TAGA. The table also records the notion of multi-instance 
resiliance inherited 
by the choice of $N$, and whether the crew of the aircraft have
means of checking whether it is indeed a certificate of their
aircraft's domain that will be used by the GU to establish the channel.
We call this `AL Control'.}


\begin{theorem}
\label{thm:resil}
For $i \in [1,4]$ let $T_i = ($FHMQV\CLS, $N_i, M_i)$ where $N_i$ and $M_i$
are as in Table~\ref{tab:authDesigns}.
For each $i \in [1,5]$, $T_i$ is secure and guarantees the
properties as shown in row $i$.
\end{theorem}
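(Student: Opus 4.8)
The plan is to read Theorem~\ref{thm:resil} as an assembly result: almost all of the hard composition work has already been carried out in the modular theorems and facts above, so once the protocol-level properties of FHMQV\CLS are pinned down a single time, each row of Table~\ref{tab:authDesigns} follows by invoking the appropriate lemma or theorem. Concretely, I would first establish three facts about the protocol $P = {}$FHMQV\CLS in isolation: (i)~$P$ is secure for authenticated TAGA; (ii)~$P$ is LTKC resilient for a-TAGA; and (iii)~$P$ is eav-LTKC resilient for a-TAGA. Having (i), the corollary following Theorem~\ref{thm:tagaAuth} (that $($FHMQV\CLS$, N, M)$ guarantees Secure TAGA~II for \emph{any} $N,M$) immediately discharges the ``is secure'' clause for all four designs $T_i$, independently of their $N_i, M_i$.

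For~(i) I would lean on the published FHMQV security proof \cite{sarrEtAl:EuroPKI09,sarrEtAl:africacrypt16}, which already yields secrecy, key freshness, and injective agreement on the peer against active adversaries. The only delta is the CLS extension: service $S$ and location $L$ are carried as protocol inputs and bound into the key-confirmation MACs $\mac_A$ and $\mac_G$. Since these fields sit under a MAC keyed by $K' = \KDF_1(K_S)$ and $K_S$ is unforgeable to an active adversary, agreement on $S$ and $L$ inherits directly from the existing agreement-on-transcript argument; opposite type follows because the role (AC vs.\ GU) is fixed by the distinct certificate structures $\cert_\mcA(A, W_A, T_A, V_A)$ and $\cert_\mcH(G, W_G, S_G, V_G)$ together with the issuer/type checks each party performs. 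For (ii) and (iii) I would use that FHMQV is KCI resilient and that FHMQV\CLS binds \emph{all} authentication data (peer, service, location, $\ssid$) into the confirmation step: a standard result says that KCI resilience together with such a full key-confirmation step implies resilience of authentication without further proof, and combined with the secrecy argument under a one-sided LTKC this gives (ii). For (iii), $K_S$ is a static-ephemeral MQV combination in which the ephemeral contribution is never exposed to a purely passive attacker, so even a two-sided LTKC leaves an eavesdropper unable to recover $K_S$ on the observed session, the active-in-other-sessions caveat being absorbed by the session separation already established for (i).

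With (i)--(iii) in hand, the rows assemble mechanically. Row~1: the Fact that 2EV/4EV guarantee agreement of AC domain, plus Theorem~\ref{thm:resil:basic}, gives basic LTKC resilience. Row~2: Lemma~\ref{lem:resil:pccr} (PC/CR gives detection against pure Imp2GU) plus Theorem~\ref{thm:resil:detect} gives airport-reliant resilience, where Lemma~\ref{lem:prot:resil} supplies the step that any attack other than pure Imp2GU requires a GU-domain LTKC. Rows~3 and~4: the Fact that $\NGU$ under $M_T$ guarantees channel authenticity without relying on central key management, together with (iii) and the full-resilience corollary, gives full LTKC resilience; row~4 merely carries 4EV in addition, which only strengthens the guarantee (it re-certifies agreement of AC domain with crew participation) and so cannot weaken any property already shown for row~3. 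The multi-instance column for rows~3 and~4 follows from the GU-bound structure of $\NGU$ exactly as in Theorem~\ref{thm:plain:resil}: each parallel instance needs its own breach of card or GU integrity, so attacker effort scales linearly.

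The main obstacle is the protocol-level proof of FHMQV\CLS, specifically item~(i)'s extension of the original FHMQV analysis to agreement on the newly added $S$ and $L$ fields, and item~(iii)'s eav-LTKC resilience, since the latter asks for a security guarantee under \emph{simultaneous} compromise of both long-term keys restricted to a passive adversary on the target session --- a regime outside the standard CK/eCK statements. I would need to argue carefully that no active cross-session interference can be leveraged to break the passively observed session once both long-term secrets are known, which is where the ephemeral secrecy and session separation must be used most delicately. I also note an off-by-one in the statement: the table has four rows, so the quantifier should read $i \in [1,4]$ throughout rather than $i \in [1,5]$.
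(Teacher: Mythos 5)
Your proposal is correct and follows exactly the modular assembly that the paper intends: the paper gives no separate proof of Theorem~\ref{thm:resil}, since each row is meant to follow by combining the corollary to Theorem~\ref{thm:tagaAuth} (for ``secure''), the Fact on 2EV/4EV with Theorem~\ref{thm:resil:basic} (row~1), Lemma~\ref{lem:resil:pccr} with Theorem~\ref{thm:resil:detect} (row~2), the channel-authenticity Fact for $\NGU$ with the full-resilience corollary (rows~3--4), and Theorem~\ref{thm:plain:resil} for the multi-instance column, with the protocol-level properties of FHMQV\CLS\ delegated to the cited security proofs just as you do. Your observations that the quantifier should read $i \in [1,4]$ and that the eav-LTKC property falls outside the standard CK/eCK statements are both correct and worth recording.
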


For TAGA designs (3) or (4) are most suitable. Our modular
approach makes them amenable to be proved formally within a symbolic
or cryptographic proof framework \cite{Abrial10,maurer:comp}.



\removed{
However, for other M2M settings 
other combinations might be better. For example, it might be
preferable to establish the key `over the air' and 
secure some safety-critical services by PC/CR/KCM. Based on 
the attack categories and their pre-conditions it is straightforward
to check whether all safety-critical attack states are sufficiently covered.}


\appendices

\section{Related Work}
\label{a:related}

\paragraph*{AKE Protocols and Resilience}
Resilience against LTKC has mainly been studied in the context
of foundational research on authenticated DH protocols. 
The threat of a key compromise 
impersonation (KCI) attack where an attacker impersonates another
party to the actor with the LTKC has  been considered early 
on in \cite{menzes:kcir}, and KCI resilience (KCIR) has  been 
identified as a desirable attribute for AKE protocols to guarantee
\cite{BWM:aDH}. It has also become standard to consider KCIR and
other advanced properties such as forward secrecy and unknown
key-share by state-of-the-art protocol checkers such as the
TAMARIN prover \cite{cas:aDHTamarin}.  

However, automatic symbolic analysis is still not amenable to
protocols that use multiplication in the DH group and addition
of exponents such as the  AKE protocols with implicit
authentication.  The MQV \cite{MQV:protocol}, and its
development to FHMQV  
\cite{sarrEtAl:EuroPKI09,sarrEtAl:africacrypt16} 
have arguably the best combination of resilience properties
while in particular the FHMQV also developed for efficiency.
The FHMQV comes with a security proof in a 
cryptographic security model \cite{sarrEtAl:EuroPKI09,
sarrEtAl:africacrypt16}. 

The theory of \emph{post-compromise} security has been advanced in
\cite{cas:postCom} and \cite{basin:kcir}. 
Following \cite{basin:kcir} our definition of LTKC resilience
considers all security goals of the protocol. Moreover, we
build on the result therein that LTKC resilience of agreement
properties follows from KCIR when the protocol includes
a key confirmation step. 
Our property eav-LTKC resilience seems to be new.
It is close to the well-known property of \emph{forward secrecy}
(e.g.\ \cite{BWM:aDH}. Although for practical protocols the two properties will
typically coincide, in general, they seem incomparable: it seems possible to
construct an artificial protocol for which the two notions will differ.

\paragraph*{NFC Related Attacks and Distance Bounding}
We have already discussed in Section~\ref{s:plain:attacks} how research 
has shown that the nominal guarantees of NFC can be 
overcome by an attacker. Distance bounding protocols have been
introduced to verify the physical proximity or location of
a device \cite{Brands:1994,mh13:db,Hancke:2005}. While there are 
now many designs out there and the EMV Contactless Specifications
for Payment Systems (Version 2.6, 2016) has included relay 
resistance protocols new attacks against them have also been
devised (e.g.\ \cite{Cremers:2012,slides}). 
To counter this the formal verification community has
also provided techniques to verify such protocols \cite{Cremers:2012}.
While this work has focused on the protocols themselves our work
considers how their guarantees can be used in a overall security
concept that employs them. 

\removed{
care has to it remains challenging to design a
protocol that  
. have been included standard  
 
While so far
they have to be designed with care to meet ensure that the
proximity cannot be forged after all. (e.g.\   

}

\removed{
\paragraph*{NFC}
\cite{HasBr:RFID06}

\cite{Kfir:2005} 
predict

skimming.
leech device can comm. with an ISO-14443 RFIG tag from a distance 
40-50 cm. low power

\cite{KirWool:UsenixSec06}
confirm the claim about the practicality of the leech device.

\cite{han05} fast digital communication between the leech and
the ghost. 50 m

HK05 distsnce bounding 

Eavesdropping: 
FK05 Finke, Kelter
ISo14443 reader and tag
read from 1-2m by large loop antenna on the same plane as
the reader ant the tag.

nominal range << skimming range << eavesdropping range

\paragraph{Distance bounding}

distance bounding formal: 
}

\section{Key Establisment Protocols}
\label{app:prot:props}

\paragraph*{Diffie-Hellman Protocols}

(Authenticated) Diffie-Hellman (DH) protocols assume a cyclic 
group $G$ of prime order $n$, and a 
generator $P$ of $G$ such that the decisional Diffie-Hellman 
problem is hard in $G$. The domain parameters $G, n,$ and $P$ can be fixed
or sent as part of the first message.
We use small letters to denote elements of the field
$\mathbb{Z}_n^\ast$, and capital letters for elements of $G$.
A key pair in the protocols consists of a public key $T$, which is a group
element, and a private key $t$, which is an element of
the field $\mathbb{Z}_n^\ast$ such that $T = tP$. Given an entity $X$,
we denote their ephemeral key pair by $(R_X, r_X)$, and 
their long-term key pair by $(W_X, w_X)$ respectively. 
Group operations are written additively ($A + B$, or $cA$) consistent
with notation for elliptic curve cryptography. 
Let $H$ be a $l$-bit hash function, where $l = (\log_2 n)/2$.

Moreover, let $\KDF_1$ and $\KDF_2$ be key derivation functions, 
$\mac$ be a message authentication code, and $\sign$ 
a signature scheme. We write $m_1 || m_2$ for the concatenation
of two messages $m_1$ and $m_2$.




\paragraph*{Security Properties}
(captured as injective agreement between
the runs of the two parties \cite{lowe:auth97}),

We say a KE protocol guarantees to an AC $A$  

\noindent
\ldots \emph{Secrecy} if, whenever $A$ completes a run of the protocol,
apparently with GU $G$, then no party other than $A$ and $G$ can 
compute the session key. 

\noindent
\ldots \emph{Key Freshness} if, whenever $A$ completes a run of the
protocol, and computes $K$ as the session key, then there is at most 
one other run of the protocol in which $K$ is the resulting session key.

\noindent
\ldots \emph{Opposite Type} if, whenever $A$ completes a run of
the protocol, apparently with GU $G$, then $G$ is indeed a GU.

\noindent
\ldots \emph{Agreement on Peer} if, whenever $A$ 
completes a run of the protocol, apparently with GU $G$,
and computes $K$ as the session key, then $G$ has a unique run in which
$K$ is the resulting session key, and in this run $A$ is the apparent peer.

\noindent
\ldots \emph{Agreement on Service}
if, whenever $A$ completes a run of the protocol, apparently with GU $G$
and for carrying out service $S$,
and computes $K$ as the session key, then $G$ has a unique run in which
$K$ is the resulting session key, and in this run $S$ is the apparent 
service to be carried out.

\noindent
\emph{Agreement on Location}
if, whenever $A$ completes a run of the protocol, apparently with GU
$G$, then $G$ is at the same location as $A$.

\noindent
The guarantees to GU $G$ are defined analogously. We say a protocol
guarantees security property $X$ when it guarantees $X$ to both
parties, ACs and GUs.

\section{Ground Processes: Fuelling}
\label{app:fuel}
\removed{
We now give examples of the process flows for 
air conditioning and fuelling with secure M2M communication, and TAGA 
pairing integrated into the setup phase.
}
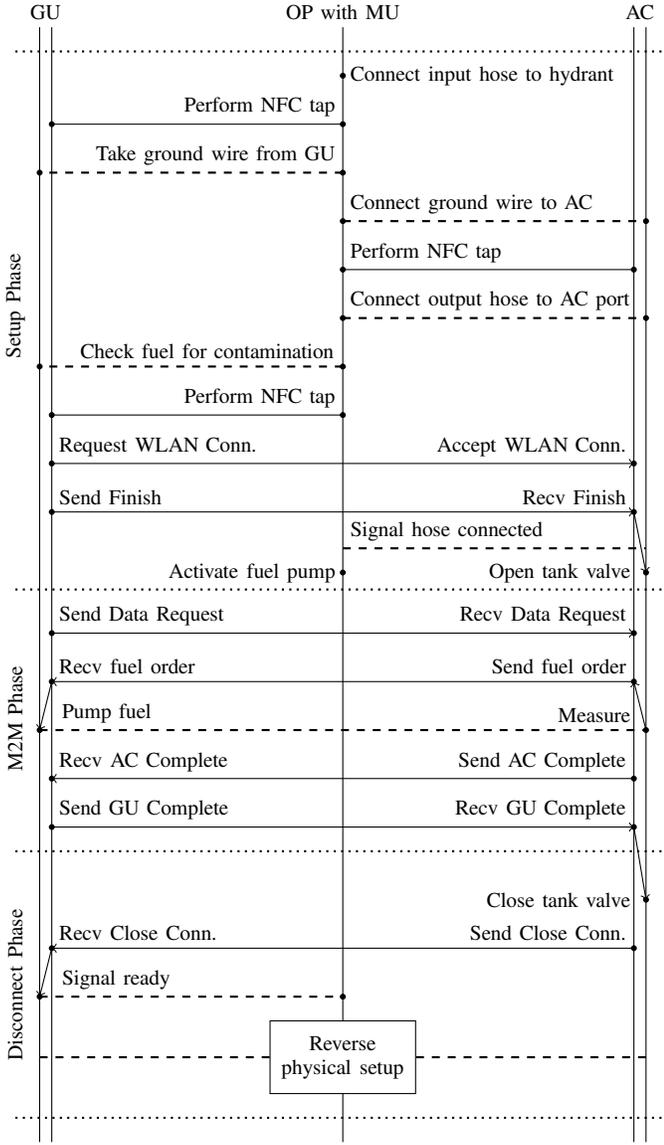
\begin{figure}
\centering{
\begin{scaletikzpicturetowidth}{\columnwidth}
\begin{tikzpicture}[scale=\tikzscale,
phys/.style={dashed, thick}, nfc/.style={}, wlan/.style={},
wland/.style={->}, phase/.style={dotted, thick},
point/.style={radius=0.05}] 
{\footnotesize

\node[above] at (-6.125,0) {GU}; 
\node[above] at (0,0) {OP with MU};
\node[above] at (6.125,0) {AC}; 

\draw (0,0) -- (0,-20.5);
  \draw (0,-22) -- (0,-23);
\draw (-6,0) -- (-6,-23);
\draw (-6.25,0) -- (-6.25,-23);
\draw (6,0) -- (6,-23);
\draw (6.25,0) -- (6.25,-23);

\coordinate (F1) at (0,-1);
\coordinate (F2) at (0,-2); 
\coordinate (F3a) at (0,-3);
  \coordinate (GF3a) at (-6.25,-3);
\coordinate (F3) at (0,-4);
  \coordinate (AF3) at (6.25,-4);
\coordinate (F4) at (0,-5); 
\coordinate (F6) at (0,-6);
  \coordinate (AF6) at (6.25,-6);
\coordinate (F7) at (0,-7);
  \coordinate (GF7) at (-6.25,-7);
\coordinate (F8) at (0,-8); 
\coordinate (F9) at (0,-11.25);  

\coordinate (G1) at (-6,-2); 
\coordinate (G2) at (-6,-8); 
\coordinate (G3) at (-6,-9); 
\coordinate (G4) at (-6,-10); 

\coordinate (A1l) at (6,-5);  
\coordinate (A2l) at (6,-9);  
\coordinate (A3l) at (6,-10); 

\coordinate (A4) at (6.25,-11.25);  

\coordinate (Osynch) at (0,-10.75);
\coordinate (Asynch) at (6.25,-10.75);
\draw[phys] (Osynch) -- (Asynch);

\draw[->] (A3l) -- (A4);

\draw[nfc] (G1) -- (F2);
\draw[nfc] (G2) -- (F8);
\draw[phys] (F3a) -- (GF3a);
\draw[phys] (F3) -- (AF3); 
\draw[nfc] (F4) -- (A1l);
\draw[phys] (F6) -- (AF6);
\draw[phys] (F7) -- (GF7);
\draw[wland] (G3) -- (A2l);
\draw[wland] (G4) -- (A3l);

\draw[phase] (-6.75,-0.5) -- (6.75,-0.5);
\draw[phase] (-6.75,-11.6) -- (6.75,-11.6);
\draw[phase] (-6.75,-17) -- (6.75,-17);
\draw[phase] (-6.75,-22.5) -- (6.75,-22.5);

\draw (-6.75,-5.75) node [rotate=90] {Setup Phase};
\draw (-6.75,-14.25) node [rotate=90] {M2M Phase};
\draw (-6.75,-19.75) node [rotate=90] {Disconnect Phase};

\removed{
\draw [decorate,decoration={brace,amplitude=10pt},xshift=-4pt,yshift=0pt]
(-6.5,-11.5) -- (-6.5,-0.5)  
node [black,midway,xshift=-0.6cm,rotate=90] 
{\footnotesize Setup Phase};

\draw [decorate,decoration={brace,amplitude=10pt},xshift=-4pt,yshift=0pt]
(-6.5,-17) -- (-6.5,-11.5)  
node [black,midway,xshift=-0.6cm,rotate=90] 
{\footnotesize M2M Phase};

\draw [decorate,decoration={brace,amplitude=10pt},xshift=-4pt,yshift=0pt]
(-6.5,-22.5) -- (-6.5,-17)  
node [black,midway,xshift=-0.6cm,rotate=90] 
{\footnotesize Disconnect Phase};
}


\node[right] at (F1) {Connect input hose to hydrant};
\node[above left] at (F2) {Perform NFC tap};
\node[above left] at (F3a) {Take ground wire from GU};
\node[above right] at (F3) {Connect ground wire to AC};
\node[above right] at (F4) {Perform NFC tap};
\node[above right] at (F6) {Connect output hose to AC port};
\node[above left] at (F7) {Check fuel for contamination};
\node[above left] at (F8) {Perform NFC tap};
\node[left] at (F9) {Activate fuel pump};

\node[above right] at (G1) {}; 
\node[above right] at (G2) {}; 
\node[above right] at (G3) {Request WLAN Conn.};
\node[above right] at (G4) {Send Finish};

\node[above left] at (A1l) {}; 
\node[above left] at (A2l) {Accept WLAN Conn.};
\node[above left] at (A3l) {Recv Finish};


\node[above right] at (Osynch) {Signal hose connected};
\node[left] at (A4) {Open tank valve\ \ };


\coordinate (Gm1) at (-6,-12.5); 
\coordinate (Gm2) at (-6,-13.5); 
\coordinate (Gm3) at (-6.25,-14.5); 
\coordinate (Gm4) at (-6,-15.5); 
\coordinate (Gm5) at (-6,-16.5); 

\coordinate (Am1) at (6,-12.5); 
\coordinate (Am2) at (6,-13.5); 
\coordinate (Am3) at (6.25,-14.5); 
\coordinate (Am4) at (6,-15.5); 
\coordinate (Am5) at (6,-16.5); 

\node[above right] at (Gm1) {Send Data Request};
\node[above right] at (Gm2) {Recv fuel order};
\node[above right] at (Gm3) {\ \ Pump fuel};
\node[above right] at (Gm4) {Recv AC Complete};
\node[above right] at (Gm5) {Send GU Complete};

\node[above left] at (Am1) {Recv Data Request};
\node[above left] at (Am2) {Send fuel order};
\node[above left] at (Am3) {Measure\ \ };
\node[above left] at (Am4) {Send AC Complete};
\node[above left] at (Am5) {Recv GU Complete};

\draw[->] (Gm2) -- (Gm3);
\draw[->] (Am3) -- (Am2);


\draw[wland] (Gm1) -- (Am1);
\draw[wland] (Am2) -- (Gm2);
\draw[phys] (Gm3) -- (Am3);
\draw[wland] (Am4) -- (Gm4);
\draw[wland] (Gm5) -- (Am5);


\coordinate (Af1) at (6.25,-18); 
\draw[fill] (Af1) circle [point];
\node[left] at (Af1) {Close tank valve\ \ };
\coordinate (Af2) at (6,-19); 
\coordinate (Gf1) at (-6,-19); 
\coordinate (Gf2) at (-6.25,-20); 
\coordinate (Ff1) at (0,-20); 

\node[above left] at (Af2) {Send Close Conn.};

\node[above right] at (Gf1) {Recv Close Conn.};
\node[above right] at (Gf2) {\ \ Signal ready};

\draw[wland] (Af2) -- (Gf1);
\draw[phys] (Gf2) -- (Ff1);

\draw[->] (Am5) -- (Af1);
\draw[->] (Gf1) -- (Gf2);  

\draw (-1.5,-20.5) rectangle (1.5,-22);
\node[align=center] at (0,-21.25) {Reverse \\ physical setup}; 
\draw[phys] (-6.25, -21.25) -- (-1.5,-21.25);
\draw[phys] (6.25, -21.25) -- (1.5,-21.25);


\draw[fill] (F1) circle [point];
\draw[fill] (F2) circle [point];
\draw[fill] (F3) circle [point];
\draw[fill] (F3a) circle [point];
\draw[fill] (AF3) circle [point];
\draw[fill] (F4) circle [point];
\draw[fill] (F6) circle [point];
\draw[fill] (AF6) circle [point];
\draw[fill] (F7) circle [point];
\draw[fill] (F8) circle [point];
\draw[fill] (F9) circle [point];

\draw[fill] (G1) circle [point];
\draw[fill] (G2) circle [point];
\draw[fill] (G3) circle [point];
\draw[fill] (G4) circle [point];
\draw[fill] (GF3a) circle [point];
\draw[fill] (GF7) circle [point];

\draw[fill] (A1l) circle [point];
\draw[fill] (A2l) circle [point];
\draw[fill] (A3l) circle [point];
\draw[fill] (A4) circle [point];

\draw[fill] (Am1) circle [point];
\draw[fill] (Am2) circle [point];
\draw[fill] (Am3) circle [point];
\draw[fill] (Am4) circle [point];
\draw[fill] (Am5) circle [point];

\draw[fill] (Gm1) circle [point];
\draw[fill] (Gm2) circle [point];
\draw[fill] (Gm3) circle [point];
\draw[fill] (Gm4) circle [point];
\draw[fill] (Gm5) circle [point];

\draw[fill] (Af2) circle [point];
\draw[fill] (Gf1) circle [point];
\draw[fill] (Gf2) circle [point];
\draw[fill] (Ff1) circle [point];

}
\end{tikzpicture}
\end{scaletikzpicturetowidth}}
\caption{\label{fig:process:fuelling} Process flow for fuelling}
\end{figure}

\paragraph*{Fuelling}
Fig.~\ref{fig:process:fuelling} shows a process for fuelling in a
setting where the fuel is obtained from an underground pipeline 
via a fuel truck. The fueller's first step is to connect the input 
fuel hose of the truck to a hydrant in the ground. Moreover, 
the fueller needs to connect the output fuel hose of the truck 
to the fuelling port of the AC. The fuelling ports are typically
positioned under the wings of the plane, and this step often involves 
that the fueller uses a lift integrated in the truck to take him up 
with the hose to one of the wings. In addition, safety measures are
carried out: before the AC is hooked up to the fuel hose 
a ground wire is laid from the 
fuel truck to the AC, and before the fuelling starts the fuel is
checked for contamination. 
TAGA pairing is integrated as follows: The first and second 
NFC taps are aligned with connecting up the ground wire, and the third 
tap is carried out after the fuel hose is connected up. 
Finally, the fueller activates the fuel
pump. At the AC, the pilot (or automatic control) waits until the 
secure channel is established and he has the okay from the fueller
(or the fuel truck via the secure channel) that the fuel hose is connected up. 

At the AC, the first action of the M2M phase is that the pilot 
(or automatic control) opens up the valves of the tanks, and activates the 
automatic fuel system. 
The M2M process makes use of the fact that most ACs already have an 
automated fuelling system: given a specified amount of fuel, the 
fuelling system distributes incoming fuel automatically into the various 
sections of the tank, monitors the amount of fuel already
received, and automatically shuts the valves of the tank when the 
specified amount has been reached. As usual backflow will stop 
the fuel pump of the truck. During the M2M phase the AC can 
communicate several fuel orders to the fuel truck so that the fuel
can automatically and precisely be topped up according to an increase in
the weight of the plane. When the final weight is known and the final 
fuel amount reached the AC notifies the fuel truck that the
service is complete. After an analogous reply by the truck the service
enters the disconnect phase. In the latter the communication session
is closed and the physical setup is reversed.

\paragraph*{Fuelling with Two Trucks}
Large ACs such as the A380 usually employ two fuel trucks to fuel
from the left and right wing in parallel. Then two parallel sessions
of the above process must be run. The 
following synchronization point between the two fuelling sessions
is required: the pilot (or automatic control) only opens the valves of the 
tank system after two secure channels are established and
both fuellers (or fuel trucks) have confirmed that the physical 
connection of the fuel hose on their side is ready. 


\section{Relating to Section~\ref{s:plain}}

\begin{definition}
\label{def:op:welld}
We assume that each reader displays the status of the tap it
received (i.e.\ initial, mid, final), and that
the operator's actions are well-defined in the following sense:
\begin{enumerate}
\item The operator taps an AC only as part of the mid tap of a TAGA session.
\item If during a TAGA walk the operator observes that the display
     of the GU or AC reader does not confirm the action she expected
     to carry out (e.g.\ the tap is (repeatedly) unsuccessful
     or the GU reader
     returns `first tap' while the operator expected
     this to be the final tap)
     then she will reset the TAGA controller
     at the GU before carrying out any further taps.
\end{enumerate}
\end{definition}

\begin{proof}[Proof of Theorem~\ref{thm:plain:secureSetup}]
This is straightforward from the definitions.
\removed{
Guarantee for GU: When setup phase is complete in the view of $A$
then by (G-C) $G$ has
a connection $(Gini, Gfin, K, S)$. Then by Security by TAGA Channel
there has been a corresponding TAGA walk to $\AC(L)$, and $G$ shares
a secure channel with $\AC(L)$. By (PT1) $G$ is also physically
connected to $A$. By (Ph) $G$ must be at one of the positions.

Guarantee for AC: When setup phase is complete in the view of $A$
then by (A-C) $A$ has $n$ (confirmed) connections for service $S$. Then
by Secure TAGA there has been a corresponding TAGA walk to $A$ from some $G$
and $G$ has corresponding connection and match with walk,
$A$ shares a secure channel with that $G$. Hence, $G$ must be of
type $S$. Only that $G$ could have sent confirmation. Hence, by
(PT1) $A$ is physically connected to $G$. By (Ph) $G$ must be at one of
the positions. By (F3) and (F4) these are the only channels and connections.
}
\end{proof}

\begin{proof}[Proof of Theorem~\ref{thm:plain:secureTAGA}]
By Channel Authenticity to $G$, there is some OP with a session
$(\Wini, \Wmid, \Wfin)$ and some AC with a 
partial session $(\Amid, S', K', u)$ such that
$\Wini = \Gini$, $\Wmid = \Amid$, and $\Wfin = \Gfin$, and
$\msg(\Gini) = \msgr(\Amid)$ and $\msgs(\Amid) = \msg(\Gfin)$.
Hence, $S = S'$, and $G$ and $A$ will compute $K$, and $K'$ respectively,
based on the same DH public keys. Hence,
we also have $K = K'$, $K$ is fresh and
only known to $A$ and $G$.

By Channel Authenticity to $A$, there is some OP with a partial
session $(\Wini, \Wmid, \ldots)$ and some GU $G$ has
a partial session $(\Gini, \ldots)$ such that
$\Wini = \Gini$, $\Wmid = \Amid$, and $\msg(\Gini) = \msgr(\Amid)$.
Since $A$ has obtained the Finish message someone other than $A$
knows the key $K$. By secrecy under passive adversaries only
$G$ can know the key. 
Hence, $G$ has sent the Finish message and must
have a complete session $(x, y, z, K)$. By freshness $G$ does not use the
same key twice, and  
hence $x = S$, and $y = \Wini$.
By Authenticity to $G$ there must be a TAGA walk $(\Wini, \Wmid',
\Wfin')$ such that $z = \Wfin$. Then clearly also $\Wmid\ = \Wmid'$.
Hence, matching OP session also exists.
\end{proof}

\subsubsection*{Proof of Theorem~\ref{thm:plain:N}}

Theorem~\ref{thm:plain:N} follows from Lemma~\ref{lem:plain:auth}
and Lemma~\ref{lem:plain:N} below. The first proves that when
a NFC system guarantees certain properties to be defined below
then under the measure `secure proximate zone' it will guarantee
channel authenticity. The second lemma then argues that 
$\NGU$ and $\NOP$ satisfy these properties. Before proving the
lemmas we need more definitions and intermediary facts. 


\paragraph*{More definitions}
If a reader $R$ has a NFC ssn $\Rx$, and a card $C$ has a NFC ssn $\Cy$,
and $R$ and $C$ communicate with each other during these NFC ssns 
then we write this by $\nfcl{\Rx}{\Cy}$.

A \emph{(TAGA) session of a card $C$} is a tuple $(\Cini, \Cmid, \Cfin)$
where $\Cini$, $\Cmid$, $\Cfin$ are successful NFC ssns of $C$
such that (1)~$\Cini < \Cmid < \Cfin$, and 
(2)~$\msgr(\Cini) = \msgs(\Cmid)$, and
$\msgr(\Cmid) = \msgs(\Cfin)$. 

\paragraph*{Between OP taps and reader NFC ssns and sessions}
Due to `secure proximate zone' and distance bounding we
can infer operator taps from NFC ssns of the GU and AC reader, 
and vice versa:

\begin{proposition}
\label{prop:nfc}
Let $N$ be a NFC system with distance bounding, and assume
`secure proximate zone' holds.
\begin{enumerate}
\item
If a GU or AC reader $R$ has a NFC ssn $\Rx$ and $\partner(\Rx)$ is
proximate during $\Rx$ then there is an operator tap $\Wx$ such that
\begin{enumerate}
\item $\card(\Wx) = \partner(\Rx)$, and
\item $\tapl{\Wx}{\Rx}$.
\end{enumerate}
\item
If there is an operator tap $\Wx$ and $\reader(\Wx)$ is a GU or
AC reader $R$ then $R$ has a NFC ssn $\Rx$ such that
\begin{enumerate}
\item $\card(\Wx) = \partner(\Rx)$, and
\item $\tapl{\Wx}{\Rx}$.
\item If $R$ is a GU reader then $C$ is authentic.
\end{enumerate}
\end{enumerate}
\end{proposition}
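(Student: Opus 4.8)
The plan is to read off both directions from the two guarantees in force. Distance bounding ensures a reader completes a successful NFC ssn only with a partner lying within the precision distance $d$, and \emph{secure proximate zone} ensures that the only NFC device ever brought within distance $d$ of a GU or AC reader is a TAGA card held by an operator who presumes it to be such. Combined, these pin the reader's communication partner down to exactly the operator's card, closing the gap flagged by the remark after the definition of coincidence --- namely that $\tapl{\tau}{\delta}$ does not, a priori, entail that $R$ communicated with $\card(\tau)$.

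For part~(1) I would start from the ssn $\Rx$, which is successful because $\partner(\Rx)$ is named, and whose partner is proximate by hypothesis. Applying \emph{secure proximate zone} to $R$, the proximate device $\partner(\Rx)$ must be a card $C$ held by an operator $O$ who presumes $C$ to be a TAGA card, so $\partner(\Rx) = C$. Since $O$ holds $C$ up to $R$ as part of what she takes to be a TAGA interaction, and $R$ signals the outcome of $\Rx$ on its display upon completion, her action is by definition a tap $\Wx$ with $\card(\Wx) = C = \partner(\Rx)$ and $\reader(\Wx) = R$; and because the display she reads is exactly the one $R$ produces on completing $\Rx$, it is perceived as the reply to $\Wx$, i.e.\ $\tapl{\Wx}{\Rx}$. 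This gives (a) and (b).

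For part~(2) I would run the argument in reverse. During the tap $\Wx$ the operator brings $C = \card(\Wx)$ close to $R$ and reads its display; that display is produced by $R$ on completing some NFC ssn $\Rx$ and is perceived as the reply to $\Wx$, so $\tapl{\Wx}{\Rx}$, which is (b). Success of $\Wx$ forces $\Rx$ to be successful, so $\partner(\Rx)$ is defined; by distance bounding $\partner(\Rx)$ lies within $d$ of $R$, and by \emph{secure proximate zone} the only device within $d$ of $R$ is the card $C$ the operator has just brought close. Hence $\partner(\Rx) = C = \card(\Wx)$, which is (a). For (c), when $R$ is a GU reader the distance-bounding challenge/response is bound to the mutual authentication of the card (the measure added precisely to defeat distance hijacking), so a successful distance-bounded exchange certifies the proximate partner as the authentic card; thus $C$ is authentic.

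The one real obstacle is the uniqueness step in part~(2): showing that the reader's partner during $\Rx$ is the operator's card and not some other device that merely produced a confirming display --- exactly the situation the definition of coincidence warns against. It is resolved only by invoking both guarantees together (distance bounding to place the partner within $d$, secure proximate zone to rule out everything within $d$ other than the operator's card), so I would state the two invocations separately rather than conflate them. A minor companion point is the success bookkeeping: since $\partner$ is defined only for successful ssns, I must first note that a successful tap entails a successful reader ssn before invoking $\partner(\Rx)$.
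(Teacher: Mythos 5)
Your proposal is correct and follows essentially the same route as the paper's own (much terser) proof: part (1) from `secure proximate zone' alone, part (2) from the operator's visual verification forcing the ssn to exist, then distance bounding plus `secure proximate zone' to identify the partner with the operator's card, and the GU's in-exchange authentication check for (c). Your version merely spells out the bookkeeping (successful tap implies successful ssn, hence $\partner(\Rx)$ is defined) that the paper leaves implicit.
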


\begin{proof}
(1)~This follows from `secure proximate zone'.

(2)~Since the operator visually verifies that the reader has
successfully completed an NFC exchange (e.g.~green light) $\Rx$ exists.
By distance bounding and 
`secure proximate zone' the exchange must have indeed been carried
out with the operator's card. If the reader is a GU the GU will
check authenticity of the card during the exchange, and the
NFC exchange will fail if it is not (e.g.~red light).
\end{proof}

\begin{proposition}[Taps and reader sessions]
\label{fact:tapsWalk}
\begin{enumerate}
\item If there are operator taps $\Wx$ and $\Wy$ and a GU
      session $(S, \Gini, \Gfin, \ldots)$ such that
      $\tapl{\Wx}{\Gini}$ and $\tapl{\Wy}{\Gfin}$ then
      there must be an operator session $(\Wx, \Wmid, \Wy)$ for
      some $\Wmid$.

\item If there is an operator tap $\Wac$ and an AC session
      $(\Amid, \ldots)$ such that $\tapl{\Wac}{\Amid}$ then
      there must be an operator session
      $(\Wini, \Wac, \ldots)$ for some $\Wini$.
\end{enumerate}
\end{proposition}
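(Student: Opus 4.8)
The plan is to read both parts directly off the definition of an operator session together with the well-definedness of operator behaviour in Def.~\ref{def:op:welld}; no property of the NFC system itself (distance bounding, secure proximate zone) is needed, so I would not invoke Prop.~\ref{prop:nfc}. The skeleton is: transport the static data (reader identities) and the time order through the coincide relation, then use Def.~\ref{def:op:welld} to reconstruct the walk pattern.

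For part~(1) I would first pin down the endpoints. Since $\tapl{\Wx}{\Gini}$ and $\tapl{\Wy}{\Gfin}$, the coincide relation gives $\reader(\Wx)=\reader(\Gini)$ and $\reader(\Wy)=\reader(\Gfin)$; as $\Gini,\Gfin$ are NFC ssns of $G$'s reader, both taps are at the single GU reader $R_G$, which is condition~(3) of an operator session with $\Wini:=\Wx$ and $\Wfin:=\Wy$. Because a GU is involved in at most one TAGA process at a time, $\Gini$ and $\Gfin$ belong to the same process, so $\Wx$ and $\Wy$ are taps of one operator $\Op$. The order $\Gini<\Gfin$ from the GU-session definition, transported through the coincide relation (a tap completes when $\Op$ observes the reader's confirmation at the close of the coinciding ssn), yields $\Wx<\Wy$.

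The heart of the argument is to exhibit the mid tap $\Wmid$ and to forbid intervening taps. Here I would argue that, because $\Wy$ is confirmed as coinciding with the final ssn $\Gfin$ of the \emph{same} GU session that $\Wx$ (via $\Gini$) initiated, $\Op$ never observed a non-confirming display between $\Wx$ and $\Wy$: otherwise, by Def.~\ref{def:op:welld}(2), she would have reset the GU controller and $\Gfin$ could not confirm as the final tap of that session. Hence $\Op$ ran the standard walk to completion, so the first tap she performs after $\Wx$ is a mid tap $\Wmid$, which by Def.~\ref{def:op:welld}(1) is at an AC reader (condition~(4)), with no tap strictly between $\Wx$ and $\Wmid$ (condition~(5)); the same card is carried throughout since she is engaged in only one TAGA process (condition~(2)). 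Combined with $\Wx<\Wmid<\Wy$ (condition~(1)), this certifies that $(\Wx,\Wmid,\Wy)$ is an operator session.

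For part~(2) the reasoning is shorter. From $\tapl{\Wac}{\Amid}$ and $\Amid$ being an NFC ssn of $A$'s reader, $\Wac$ is a successful tap at an AC reader. By Def.~\ref{def:op:welld}(1) an operator taps an AC only as the mid tap of a TAGA session, so $\Wac$ is such a mid tap; that session, by the definition of an operator session, carries an initial GU tap $\Wini<\Wac$ with $\card(\Wini)=\card(\Wac)$ and no tap in between, which is exactly the partial session $(\Wini,\Wac,\ldots)$ claimed. The main obstacle is entirely in part~(1): reconstructing the full walk from its two GU endpoints, which rests on reading Def.~\ref{def:op:welld}(2) as the guarantee that a confirmed final tap of a given GU session witnesses that no reset — and hence no deviation from the initial$\to$mid$\to$final pattern — occurred in between.
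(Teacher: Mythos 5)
Your proposal is correct and follows essentially the same route as the paper's proof: both parts rest on Def.~\ref{def:op:welld} alone, with part~(1) using the completion of the GU session ($G$ receiving $\Gfin$) to rule out any abort/reset and thereby force the standard initial--mid--final walk, and part~(2) reading the mid-tap structure off Def.~\ref{def:op:welld}(1). Your version is somewhat more explicit about verifying the individual clauses of the operator-session definition (same reader, same card, ordering), but the key idea is identical.
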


\begin{proof}
(1) By Def.~\ref{def:op:welld}(2) the operator will immediately
 abort the GU session if $\Wx$
wasn't the first tap of a TAGA session $(\Wx, \ldots)$. Since $G$
receives the last NFC ssn the latter must be true.
By definition of TAGA session of OP, clause (5), the next
tap the operator will carry out is with  the AC. By
 Def.~\ref{def:op:welld}(2) the tap is
either successful or the operator will immediately abort
$G$'s session. Since $G$ receives the last NFC ssn the former must
be true. But then $\Wmid$ exists as required.

(2) This follows from Def.~\ref{def:op:welld}(1) .
\end{proof}

\paragraph*{Guarantees of the NFC System}
The following are guarantees that the NFC system must provide to
the GU, card, and AC.

\begin{definition}[Guarantees to GU]
\label{def:N:GU}
Let $N$ be a NFC system.
\item
We say $N$ guarantees \emph{`authenticity of initial tap to a GU $G$'} if,
whenever $G$ has a session $(S, \Gini, \ldots)$ 
then there is an \emph{authentic} card $C$ with a session 
$(\Cini, \ldots)$ such that 
\begin{enumerate}
\item $\nfcl{\Gini}{\Cini}$ and $\msg(\Gini) = \msg(\Cini)$, and
\item $C$ is proximate to $G$ during $\Cini$.
\end{enumerate}
\item
We say $N$ guarantees \emph{`authenticity of final tap to 
a GU $G$'} if, whenever $G$ has a session $(S, \Gini, \Gfin, \ldots)$ 
then there is an authentic card 
$C$ with a session $(\Cini, \Cmid, \Cfin)$ such that 
\begin{enumerate}
\item $\nfcl{\Gini}{\Cini}$ and $\msg(\Gini) = \msg(\Cini)$,
\item $\nfcl{\Gfin}{\Cfin}$ and $\msg(\Gfin) = \msg(\Cfin)$, and
\item $C$ is proximate to $G$ during $\Cini$ and $\Cfin$.   
\end{enumerate}
\end{definition}

\begin{definition}[Guarantees to Card]
\label{def:N:Card}
Let $N$ be a NFC system.
\removed{
\item
We say $N$ guarantees
\emph{`integrity of a $N$ card $C$'} if, whenever $C$ has a session
$(\Cini, \Cmid, \Cfin)$ then 
\begin{enumerate}
\item $\msgr(\Cini) = \msgs(\Cmid)$, and
\item $\msgr(\Cmid) = \msgs(\Cfin)$. 
\end{enumerate}
And similarly for partial sessions.}
\item
We say $N$ guarantees
\emph{`authenticity of initial tap to a card $C$'} if, whenever $C$
has a session $(\Cini, \ldots)$ then there is some GU $G$ with
a session $(S, \Gini, \ldots)$ such hat 
\begin{enumerate}
\item $\nfcl{\Gini}{\Cini}$, and $\msg(\Gini) = \msg(\Cini)$, and
\item $C$ is proximate to $G$ during $\Cini$.
\end{enumerate}
\item
We say $N$ guarantees
\emph{`authenticity of mid tap with AC to a card $C$} if, whenever
$C$ has a session $(\Cini, \Cmid, \ldots)$ and $\partner(\Cmid)$ is
an AC $A$ then $A$ has a session $(\Amid, \ldots)$ such that
$\nfcl{\Amid}{\Cmid}$ and $\msg(\Amid) = \msg(\Cmid)$. 
\removed{
\item
We say $N$ guarantees
\emph{`authenticity of final tap to a $N$ card $C$'} if, whenever
$C$ has a session $(\Cini, \Cmid, \Cfin)$ and `authenticity
of initial tap' is guaranteed to $C$ then there is some GU $G$ 
with a session $(S,\Gini,\Gfin, \ldots)$ such that 
\begin{enumerate}
\item $\nfcl{\Cini}{\Gini}$, 
\item $\nfcl{\Cfin}{\Gfin}$ and $\msgs(\Cfin) = \msgr(Gfin)$
\end{enumerate}
}
\end{definition}

\begin{definition}[Guarantees to AC]
\label{def:N:AC}
Let $N$ be an NFC system.
\item
We say $N$ guarantees \emph{`proximity of NFC peer to an AC $A$'} if,
whenever an AC $A$ has a session $(\Amid, \ldots)$ 
then $\partner(\Amid)$ is proximate to $A$ during $\Amid$.
\item
We say $N$ guarantees \emph{`authenticity of mid tap with an authentic 
card to an AC $A$'}, if
whenever $A$ has a session $(\Amid, \ldots)$ and $\partner(\Amid)$
is an authentic card $C$ then $C$ has a session 
$(\Cini, \Cmid, \ldots)$ such that 
$\nfcl{\Amid}{\Cmid}$ and $\msg(\Amid) = \msg(\Cmid)$.
\end{definition}

\removed{
  i.e.\ presses abort button after display information.
\begin{enumerate}
\item When an operator performs a tap $\Wx$ with the reader of a
GU $G$ and $G$'s reader displays that a new TAGA session is initialized then 
either the operator is on a TAGA walk/has a TAGA session $(\Wx, \ldots)$
or he will abort $G$'s session before carrying out any further taps with $G$.

\item When an operator is on a TAGA walk/session $(\Wini, \ldots)$ 
and carries out the mid tap with the reader of an AC $A$ then either
the tap is successful (e.g.\ green light) or the operator aborts 
the session of $\reader(\Wini)$ before carrying out any further
taps with that GU.

\end{enumerate} 
}

\removed{
\paragraph*{Operator Taps and Reader Sessions}

\begin{definition}
\label{def:op:welld}
We assume that each reader displays the status of the tap it
received (i.e.\ initial, mid, final), and that
the operator's actions are well-defined in the following sense:
\begin{enumerate}
\item The operator taps an AC only as part of the mid tap of a TAGA session.
\item If during a TAGA walk the operator observes that the display
     of the GU or AC reader does not confirm the action she expected
     to carry out (e.g.\ the tap is (repeatedly) unsuccessful  
     or the GU reader
     returns `first tap' while the operator expected
     this to be the final tap)  
     then she will reset the TAGA controller
     at the GU before carrying out any further taps.
\end{enumerate}
\end{definition}

\begin{proposition}[Taps and reader sessions]
\label{fact:tapsWalk}
\begin{enumerate}
\item If there are operator taps $\Wx$ and $\Wy$ and a GU
      session $(S, \Gini, \Gfin, \ldots)$ such that
      $\tapl{\Wx}{\Gini}$ and $\tapl{\Wy}{\Gfin}$ then
      there must be an operator session $(\Wx, \Wmid, \Wy)$ for
      some $\Wmid$.

\item If there is an operator tap $\Wac$ and an AC session 
      $(\Amid, \ldots)$ such that $\tapl{\Wac}{\Amid}$ then 
      there must be an operator session 
      $(\Wini, \Wac, \ldots)$ for some $\Wini$.
\end{enumerate}
\end{propositon}

\begin{proof}
(1) By Def.~\ref{def:op:welld}(2) the operator will immediately
 abort the GU session if $\Wx$
wasn't the first tap of a TAGA session $(\Wx, \ldots)$. Since $G$
receives the last NFC ssn the latter must be true. 
By definition of TAGA session of OP, clause (5), the next
tap the operator will carry out is with  the AC. By
 Def.~\ref{def:op:welld}(2) the tap is
either successful or the operator will immediately abort 
$G$'s session. Since $G$ receives the last NFC ssn the former must
be true. But then $\Wmid$ exists as required. 

(2) This follows from Def.~\ref{def:op:welld}(1) .
\end{proof}
}

\begin{lemma}
\label{lem:plain:auth}
Let $N$ be an NFC system, and $M$ a set of process measures
such that $N$ implements distance bounding and the 
guarantees to GU, card, and AC of 
Def.~\ref{def:N:GU}, \ref{def:N:Card}, \ref{def:N:AC}, 
and $M$ guarantees `secure proximate zone'. 
Then $N$ under $M$ guarantees Channel Authenticity.
\end{lemma}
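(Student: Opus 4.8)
The plan is to prove the two halves of channel authenticity (to a GU and to an AC) separately, in each case stitching together the per-party NFC guarantees of Def.~\ref{def:N:GU}--\ref{def:N:AC} with the two bridging propositions that translate between reader NFC ssns and operator taps (Prop.~\ref{prop:nfc}) and between taps and operator sessions (Prop.~\ref{fact:tapsWalk}). The general shape in both directions is the same: start from the session the hypothesis supplies, use the NFC guarantees to produce a matching \emph{authentic} card session together with the message-trace equalities along the NFC links, lift the relevant reader ssns to operator taps via Prop.~\ref{prop:nfc}(1), assemble those taps into a full operator session via Prop.~\ref{fact:tapsWalk}, and finally push the remaining reader ssn (the AC's mid ssn, resp.\ the GU's initial ssn) back down via Prop.~\ref{prop:nfc}(2).

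For the direction to a GU, suppose $G$ has a session $(S,\Gini,\Gfin,\ldots)$. First I would invoke `authenticity of final tap to a GU' (Def.~\ref{def:N:GU}) to obtain an authentic card $C$ with a session $(\Cini,\Cmid,\Cfin)$ satisfying $\nfcl{\Gini}{\Cini}$, $\nfcl{\Gfin}{\Cfin}$, $\msg(\Gini)=\msg(\Cini)$, $\msg(\Gfin)=\msg(\Cfin)$, and $C$ proximate to $G$ during $\Cini$ and $\Cfin$. Applying Prop.~\ref{prop:nfc}(1) to $\Gini$ and to $\Gfin$ (whose partner $C$ is proximate) yields operator taps $\Wini,\Wfin$ with $\card(\Wini)=\card(\Wfin)=C$, $\tapl{\Wini}{\Gini}$ and $\tapl{\Wfin}{\Gfin}$. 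Prop.~\ref{fact:tapsWalk}(1) then furnishes an operator session $(\Wini,\Wmid,\Wfin)$; by clause~(4) of the operator-session definition $\reader(\Wmid)$ is an AC reader, so Prop.~\ref{prop:nfc}(2) gives an NFC ssn $\Amid$ of that reader with $\partner(\Amid)=\card(\Wmid)=C$ and $\tapl{\Wmid}{\Amid}$, whence the AC $A$ has a session $(\Amid,\ldots)$. Since $\partner(\Amid)=C$ is an authentic card, `authenticity of mid tap with an authentic card to an AC' (Def.~\ref{def:N:AC}) links $\Amid$ to the card's mid ssn with $\msg(\Amid)=\msg(\Cmid)$. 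The required equalities then follow from the trace chains $\msgr(\Amid)=\msgs(\Cmid)=\msgr(\Cini)=\msgs(\Gini)$ and $\msgs(\Amid)=\msgr(\Cmid)=\msgs(\Cfin)=\msgr(\Gfin)$, reading each NFC link so that one party's sent trace is the other's received trace and using the card-session relations $\msgr(\Cini)=\msgs(\Cmid)$ and $\msgr(\Cmid)=\msgs(\Cfin)$.

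The direction to an AC is the mirror image. Given $A$ with a session $(\Amid,\ldots)$, `proximity of NFC peer to an AC' makes $\partner(\Amid)$ proximate, so Prop.~\ref{prop:nfc}(1) yields a tap $\Wmid$ with $\tapl{\Wmid}{\Amid}$; Prop.~\ref{fact:tapsWalk}(2) then produces an operator session $(\Wini,\Wmid,\ldots)$, and since $\reader(\Wini)$ is a GU reader, Prop.~\ref{prop:nfc}(2) delivers a GU ssn $\Gini$ with $\tapl{\Wini}{\Gini}$ and an authentic card $C=\card(\Wini)$, so $G$ has a session $(S,\Gini,\ldots)$. Combining `authenticity of mid tap with an authentic card to an AC' (to link $\Amid$ to $C$'s mid ssn) with `authenticity of initial tap to a card' (Def.~\ref{def:N:Card}, to link $\Cini$ to $\Gini$) gives the chain $\msgr(\Amid)=\msgs(\Cmid)=\msgr(\Cini)=\msgs(\Gini)$, which is exactly $\msgs(\Gini)=\msgr(\Amid)$.

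The step I expect to be the main obstacle is not any single guarantee but the bookkeeping that identifies the sessions produced by different guarantees as one and the same: e.g.\ that the card session produced by `authenticity of final tap' is the card session whose mid ssn is linked to $\Amid$, and that the GU and card ssns reached along the two routes coincide. I would discharge this using the standing assumption that each card, operator, and GU is engaged in at most one TAGA process at a time, together with the temporal orderings $\Cini<\Cmid<\Cfin$ and $\Wini<\Wmid<\Wfin$ forced by the session definitions and the coincidence relation $\approx$; this pins down $\partner(\Amid)$ as the same card $C$ and its mid ssn as $\Cmid$. A secondary point of care is consistently reading each $\nfcl{\cdot}{\cdot}$ so that $\msgs$ on one side equals $\msgr$ on the other, since the definitions state only the symmetric full-trace equality $\msg(\cdot)=\msg(\cdot)$.
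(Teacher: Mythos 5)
Your overall architecture matches the paper's proof almost exactly: the same split into the two directions, the same use of the final‑tap guarantee to produce the authentic card session, the same lifting of reader ssns to taps via Prop.~\ref{prop:nfc}(1), assembly into an operator session via Prop.~\ref{fact:tapsWalk}, and descent back to the other reader via Prop.~\ref{prop:nfc}(2); the identification issues you flag at the end ($\Cac=\Cmid$, resp.\ $\Wini=\Winiac$) are exactly the ones the paper resolves by temporal‑ordering arguments (the card ssn linked to $\Aac$ lies strictly between $\Cini$ and $\Cfin$, hence equals $\Cmid$; and clause~(5) of the operator‑session definition plus the fact that $\Cini$ immediately precedes $\Cmid$ on $C$ forces $\Wini=\Winiac$).

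There is, however, one genuine gap in your GU direction. After obtaining the AC reader's NFC ssn $\Amid$ from Prop.~\ref{prop:nfc}(2) you write ``whence the AC $A$ has a session $(\Amid,\ldots)$''. That inference is not available: a successful NFC ssn of the AC's reader is not the same thing as a TAGA session of the AC (a tuple $(\Amid,S,K,\cst)$), and channel authenticity to a GU explicitly demands the latter. You then invoke `authenticity of mid tap with an authentic card to an AC' (Def.~\ref{def:N:AC}), but the \emph{hypothesis} of that guarantee is that $A$ already has a session --- so you are assuming the very fact you need. The only guarantee among Def.~\ref{def:N:GU}--\ref{def:N:AC} whose \emph{conclusion} supplies an AC session is the card‑side guarantee `authenticity of mid tap with AC to a card' (Def.~\ref{def:N:Card}), and that is what the paper uses: first identify the card ssn paired with $\Aac$ as $\Cmid$ by the ordering argument, so that $\partner(\Cmid)$ is the AC $A$, then apply the card guarantee to conclude that $A$ has a session $(\Amid,\ldots)$ with $\msg(\Amid)=\msg(\Cmid)$. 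The analogous slip in your AC direction --- claiming $G$'s session from Prop.~\ref{prop:nfc}(2) --- is harmless, since `authenticity of initial tap to a card', which you do cite, is what actually delivers the GU session there.
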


\begin{proof}
\emph{Guarantee for GU:}
Assume $G$ has a session $(S, \Gini, \Gfin, \ldots)$. Then
by `authenticity of final tap to $G$' there is an
\emph{authentic} card $C$ with a session $(\Cini, \Cmid, \Cfin)$
such that the NFC ssns and exchanged messages are matching, i.e.: 
$\nfcl{\Gini}{\Cini}$, $\nfcl{\Gfin}{\Cfin}$, $\msg(\Gini) = \msg(\Cini)$, 
and $\msg(\Gfin) = \msg(\Cfin)$. Moreover, $C$ is proximate to $G$ 
during $\Cini$ and $\Cfin$.

By Prop.~\ref{prop:nfc}(1) there are successful operator taps $\Wini$, $\Wfin$
such that $\card(\Wini) = C$ and $\tapl{\Wini}{\Gini}$, and
$\card(\Wfin) = C$ and $\tapl{\Wfin}{\Gfin}$ respectively. Then by
Prop.~\ref{fact:tapsWalk}(1) there is an operator with a session 
$(\Wini, \Wmid, \Wfin)$ for some $\Wmid$.

Further, by definition of OP session we obtain $\reader(\Wmid) = A$ for
some AC $A$ and $\card(\Wmid) = C$. Hence, by Prop~\ref{prop:nfc}(2) 
$A$ has a successful NFC ssn $\Aac$ such that $\partner(\Aac) = C$ and 
$\tapl{\Wmid}{\Aac}$.
Let $\Cac$ be defined by $\nfcl{\Aac}{\Cac}$. 
$\Cac$ takes place after $\Gini$ (and hence $\Cini$) and before $\Gfin$ 
(and hence $\Cfin$). Hence, we conclude that $\Cac = \Cmid$.
Then by `authenticity of mid tap with AC to $C$' $A$ must have
a session $(\Amid, \ldots)$ and  $\msg(\Amid) = \msg(\Cmid)$. 
Finally, by definition of session of a card it is straightforward to check that
the messages of the $A$ session match those of the $G$ session.

\emph{Guarantee for AC:}
Assume $A$ has a session $(\Amid, \ldots)$. Then by
`proximity of NFC peer to $A$' $\partner(\Amid)$, say $C$, is proximate.
Then by Prop.~\ref{prop:nfc}(1) there is a successful operator tap $\Wac$
such that $\tapl{\Wac}{\Amid}$ and $\card(\Wac) = C$.
Then by Prop.~\ref{fact:tapsWalk}(2) there is an operator with
session $(\Winiac, \Wac, \ldots)$ for some $\Winiac$. By definition of
OP session
$\reader(\Winiac)$ is a GU $\Gac$, and $\card(\Winiac) = C$.

By Prop.~\ref{prop:nfc}(2) $\Gac$ has a successful NFC ssn $\Giniac$ such that
$\partner(\Giniac) = C$, $\tapl{\Winiac}{\Giniac}$, and $C$ is authentic.
Let $\Ciniac$ be defined by $\nfcl{\Giniac}{\Ciniac}$.

Since $C$ is authentic by `authenticity of mid tap with authentic
card to $A$' $C$ has a session $(\Cini, \Cmid, \ldots)$ such 
that $\nfcl{\Amid}{\Cmid}$ and $\msg(\Amid) = \msg(\Cmid)$.
By `authenticity of first tap to $C$' there is some GU $G$ with a session 
$(S, \Gini, \ldots)$ such that $\nfcl{\Gini}{\Cini}$,
$\msg(\Gini) = \msg(\Cini)$, and $C$ is proximate to $G$ during
$\Cini$. By definition of session of a card we also obtain that 
the messages of $\Amid$ and $\Gini$ are matching. 

Moreover, by Prop.~\ref{prop:nfc}(1) there is a successful operator
tap $\Wini$ such that $\card(\Wini) = C$ and $\tapl{\Wini}{\Gini}$.
Hence, it only remains to show that $\Wini = \Winiac$. 
To the contrary assume this not to be the case.
Then by definition of OP session, clause~(5) $\Wini < \Winiac$, and hence 
$\Cini < \Ciniac$. But this leads to a contradication:
since $\tapl{\Wac}{\Cmid}$ and $\Cini$ immediately precedes
$\Cmid$ on $C$, we must have $\Ciniac < \Cini$.
\end{proof}

\begin{lemma}
\label{lem:plain:N}
$\NGU$ and $\NOP$ implement distance bounding and meet the guarantees of 
Def.~\ref{def:N:GU}, \ref{def:N:Card}, \ref{def:N:AC}.
\end{lemma}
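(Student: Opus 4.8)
The plan is to verify each of the three groups of guarantees for both $\NGU$ and $\NOP$, exploiting that by construction both systems implement distance bounding, and that both furnish the same two structural features: a \emph{mutually authenticated and integrity-protected} NFC link between the GU reader and the card, and the card's three-step session discipline that admits exactly one unauthenticated NFC exchange (the mid tap) framed by two authenticated exchanges with one and the same GU. The two models differ only in how the GU--card authentication is anchored --- a pre-shared symmetric key for $\NGU$, a certified public/private key pair and the airport PKI for $\NOP$ --- so I would treat them uniformly, noting in each case that the chosen mechanism yields mutual authentication together with a session key used to encrypt and MAC the data exchanged during the initial and final taps. Throughout, distance bounding at the GU reader is taken to be bound to the card's authentication, and distance bounding at the AC reader to the card's ephemeral DH key, so that proximity transfers to the correct entity rather than being hijacked. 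Note that every proximity claim in Def.~\ref{def:N:GU}, \ref{def:N:Card}, \ref{def:N:AC} is a property of $N$ and will be discharged by distance bounding alone; the measure `secure proximate zone' belongs to $M$ and is deferred to Lemma~\ref{lem:plain:auth}.

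First I would dispatch the guarantees on the authenticated leg, i.e.\ the guarantees to the GU (Def.~\ref{def:N:GU}) and `authenticity of initial tap' to the card (Def.~\ref{def:N:Card}). For `authenticity of initial tap to $G$', a GU session yields a successful ssn $\Gini$; mutual authentication forces the partner to be an \emph{authentic} card $C$ engaged in a matching ssn $\Cini$, the shared session key gives $\msg(\Gini)=\msg(\Cini)$, and distance bounding bound to authentication gives proximity of $C$ during $\Cini$. The card-side claim is symmetric. For `authenticity of final tap to $G$' I would additionally invoke the card's session discipline: since the card accepts no sequence other than authenticated-GU, unauthenticated, authenticated-same-GU, the two authenticated exchanges $\Cini, \Cfin$ belong to one card session $(\Cini,\Cmid,\Cfin)$ with the \emph{same} $G$, yielding $\nfcl{\Gini}{\Cini}$, $\nfcl{\Gfin}{\Cfin}$, matching messages, and proximity during both taps.

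Next I would handle the unauthenticated leg: the guarantees to the AC (Def.~\ref{def:N:AC}) and `authenticity of mid tap with AC' to the card. Here proximity of the AC's NFC peer is immediate from distance bounding at the AC reader, which upper-bounds the distance of whatever device responds and thereby also defeats relay. For the matching of messages and ssns I would rely on Assumption~\ref{ass:nfc} (no NFC man-in-the-middle in active--passive mode) together with the session key the card and AC establish during the mid tap: this gives data integrity across the leg, so that \emph{if} the partner is genuinely an AC (resp.\ the AC's partner is genuinely an authentic card) then the corresponding ssn exists with $\msg(\Amid)=\msg(\Cmid)$.

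The main obstacle is this last leg, precisely because it carries no origin authenticity: the implementation shares no long-term secret between card and AC, so I cannot claim the AC's partner \emph{is} a card, nor the card's partner \emph{is} the AC. This is exactly why the relevant guarantees are stated conditionally, and the entire force of the lemma on this leg reduces to Assumption~\ref{ass:nfc}, which rules out silent data modification, and distance bounding bound to the ephemeral DH key, which rules out a relayed or distant active-ghost responder and hence the analogue of distance hijacking. The one place demanding genuine care is checking that the card's state machine really enforces the three-step invariant under all resets, so that `at most one unauthenticated exchange per session' holds even against the multi-session manipulations sketched earlier; granting that, the conditional guarantees follow, and the stronger unconditional consequences are left to Lemma~\ref{lem:plain:auth} via the `secure proximate zone' measure.
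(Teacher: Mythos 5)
Your proposal is correct and follows essentially the same route as the paper's proof: anchor the initial and final taps in the mutually authenticated GU--card link (symmetric key for $\NGU$, PKI for $\NOP$, treated analogously), bind the final tap to the initial one via the established session key and the card's session discipline, discharge all proximity claims by distance bounding, and reduce the unauthenticated card--AC leg to the mid-tap DH key together with Assumption~\ref{ass:nfc}, leaving the guarantees on that leg conditional exactly as the definitions state. Your additional remarks --- that `secure proximate zone' is deliberately deferred to Lemma~\ref{lem:plain:auth} and that the card's three-step state machine is the point requiring care --- are consistent with, and slightly more explicit than, the paper's argument.
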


\begin{proof}
We only provide the argument for $\NGU$. The proof for $\NOP$
is analogous just based on an argument based in public key 
authentication instead. Distance bounding is given as explained
in Section~\ref{s:plain:sols}.

Guarantees to GU:
At the first tap the GU reader runs an AKE protocol based on $K_\GC$
shared between the GU and the authentic card. Moreover, it will run a distance
bounding protocol to ensure that the co-owner of $K_\GC$ is close by.
The communication is secured by the established session key, say $K$.
At the third tap the GU reader runs an AKE protocol based on $K$, 
including distance bounding to ensure the co-owner of $K$ is close by.
This binds the third tap to the first as required.

Guarantees to card: At the initial tap the card runs an AKE protocol
based on $K_\GC$ shared only with the GU. Since the GU will
only communicate with a card when it can successfully prove its
proximity, the card knows that if it has a successful first
NFC ssn then it communicates with a proximate GU. Again the
communication itself is secured by the established session key.

The mid tap will be protected by a key established via the basic DH exchange. 
Hence, if the mid tap is indeed carried out with an authentic AC
then by Ass.~\ref{ass:nfc} the communication is secured and the
messages are matching.

Guarantees to AC: 
`Proximity': This is guaranteed by the distance bounding protocol,
and because the communication is secured by DH such that the owner of
the DH public key is proximate.
Authenticity when the tap is carried out with an authentic card
follows because the communication is secured, and the 
card only accepts unauthenticated communication as part of its mid tap.
\end{proof}

\subsubsection*{Proofs of Section~\ref{s:plain:resil}}

\begin{proof}[Proof of Theorem~\ref{thm:plain:resil}]
To undermine a TAGA instance an attacker has to compromise channel 
authenticity. To this end the attacker
either needs to undermine authenticity of the card, compromise      
the local GU operator, or infiltrate in person the secure zone
during turnaround. 
For N-OP: Even if the attacker obtains a public/private key pair 
and can prepare a counterfeit card, they will still need to swap 
it with the operators card. For N-GU: Even if they obtain a master 
key (and update ability) they still need to swap the card within 
the GU. 
\end{proof}

\section{Relating to Section~\ref{s:auth}}
\label{app:auth}

\begin{proof}[Proof of Theorem~\ref{thm:setupAuth}]
This is straightforward to check.
\end{proof}

\begin{proof}[Proof of Theorem~\ref{thm:tagaAuth}]
This is straightforward from the definitions.
\end{proof}

\begin{figure}
\begin{center}
\begin{scaletikzpicturetowidth}{\columnwidth}
\begin{tikzpicture}[scale=\tikzscale,
yscale=0.75, 
nfc/.style={->,>=stealth,shorten >=0.025cm},
wlan/.style={->,>=stealth,shorten >=0.025cm,dashed},
dot/.style={circle,draw=black,fill=black, radius=0.025}]
{\scriptsize


\draw (-4,1) node[above] {GU $G$};
\draw (0,1) node[above] {OP with $U_I$};
\draw (4,1) node[above] {AC $A$};

\draw (0,1) -- (0,-7.5);

\draw (0,-10) -- (0,-12.5);

\draw (-0.1,-10) node[above right] {$\AP_I$};

\draw (4,1) -- (4,-3.75);
\draw (4,-4.25) -- (4,-10);

\draw (-4,1) -- (-4,0.25); 
\draw (-4,-0.25) -- (-4,-7.5);
\draw (-4,-10.25) -- (-4,-13);

\coordinate (G1) at (-4.6,0);
\coordinate (G2) at (-4,-1);
\coordinate (G3) at (-4,-7);
\coordinate (G4) at (-4.6,-7.5);
\coordinate (G5) at (-4,-11.5);
\coordinate (G6) at (-4,-12);

\coordinate (O1) at (0,-1);
\coordinate (O2) at (0,-2);
\coordinate (O3) at (0,-3);
\coordinate (O4) at (0,-5); 
\coordinate (O5) at (0,-6);
\coordinate (O6) at (0,-7);

\coordinate (A1) at (4,-3);
\coordinate (A2) at (4.5,-4);
\coordinate (Ospecial) at (-0.1,-4);
\coordinate (A3) at (4,-5);
\coordinate (A4) at (4,-6);
\coordinate (A5) at (4,-14);
\coordinate (A6) at (4,-14.5);

\draw[nfc] (G2) -- (O1);
  \coordinate (G2O1) at (-2,-1);
\draw[nfc] (O3) -- (A1);
  \coordinate (O3A1) at (2,-3);
\draw[nfc] (A3) -- (O4);
  \coordinate (O4A3) at (2,-5);
\draw[nfc] (O6) -- (G3);
  \coordinate (G3O6) at (-2,-7);
\draw[wlan] (G5) -- (0,-11.5);
  \coordinate (G5A5) at (-2,-11.5);
\draw[wlan] (G6) -- (0,-12);
  \coordinate (G6A6) at (-2,-12);


\draw (G1) node[draw,right] {Generate $(R_G, r_G)$};
\draw (G2O1) node[above] {$S, L, \cert(G, W_G), R_G$};

\draw (O2) node[right] {Walk to AC};
\draw (O3A1) node[above] {$S, L, \cert(G, W_G), R_G$};
\draw (A2) node[draw,text width=2cm,align=left,left]
 {Compute as usual}; 
\draw (4.5,-13)  node[draw,text width=6cm,align=left,below left] 
 {Algorithm $\ICrypto(G, W_G, R_G)$: \\
  \quad $s_I := r_I + H(R_I, R_G, A_I, G)\, w_I$\\
  \quad $S_G := R_G + H(R_G, R_I, A_I, G) W_G$\\
  \quad $K_S := s_I S_G$\\
  \quad $K' := \KDF_1(K_S)$; $K := \KDF_2(K_S)$\\
  \quad $\mac_I := \mac_{K'}(2, A_I, G, R_I, R_G,
      S, L, \ssid_I)$};

\draw (Ospecial) node[draw, minimum height = 2.5cm, text width=2.75cm, 
align=left,left]
{On card $U_I$: \\[3ex]
 Ignore $A$'s response \\
 Compute $A_I$'s ``response''\\  
 To obtain $\mac_I$ do\\ 
 \ \ $\ICrypto(G, W_G, R_G)$};

\draw (0.1,-11) node[draw,text width=3.5cm, right, align=left]
   {Receive $W_G$, $R_G$\\
      \quad (from Eavesdropper)\\
  To obtain $K$ do:\\
   \ \ $\ICrypto(G, W_G, R_G)$};

\draw (-1.5,0.3) node[above, rotate=45] {Eavesdrop};

\draw (G3O6) node[above] {$\cert(A_I, W_I), R_I, \ssid_I, \mac_I$};
\draw (O5) node[right] {Walk to GU};
\draw (O4A3) node [above] {$\cert(A, W_A), R_A, \ssid_A, \mac_A$};
\draw (G4) node [draw, text width=4.1cm,align=left,below right]
 {$s_G := r_G + H(R_G, R_I, A_I, G)\, w_G$\\
  $S_I := R_I + H(R_I, R_G, A_I, G) W_I$\\
  $K_S := s_G S_I$\\
  $K' := \KDF_1(K_S)$ \\
  $K := \KDF_2(K_S)$\\
  $\mac_G := \mac_{K'}(3, G, A_I,$ \\ \hfill $R_G, R_I,
     S, L, \ssid_I)$};

\draw (G5A5) node[above] {Establish WLAN Channel};
\draw (G6A6) node[above] {$\mac_G$};

}
\end{tikzpicture}
\end{scaletikzpicturetowidth}
\end{center}
\caption{\label{fig:att:ltkc}
Impersonation to GU with LTKC of any AC}
\end{figure}
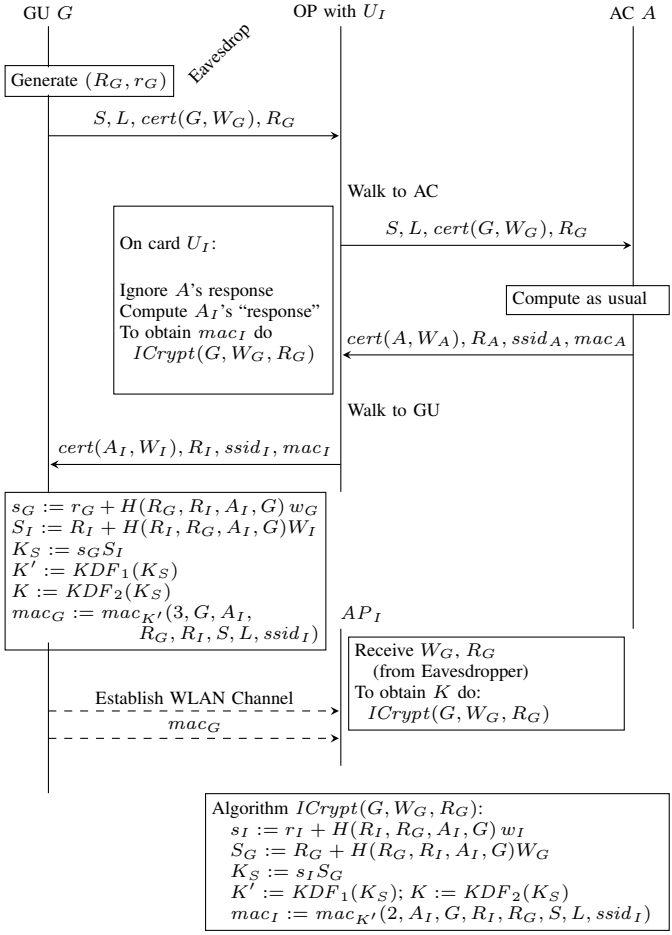


\begin{proof}[Proof of Theorem~\ref{thm:resil:basic}]
Assume a TAGA process between $G_L$ and $A_L$ at $L$. 

If $G_L$
has a session $(S, A, K)$ then by key confirmation someone knows the
key. This must be an honest party or the attacker. 
If the attacker does not know $K$ then agreement is guaranteed, and
hence, $A$ has a matching session at the same location as required. 
If the attacker knows $K$ then by Lemma~\ref{lem:prot:resil} 
$A$ must have a LTKC. 
By `agreement of AC domain' $A$ must be of the same airline and 
type as $A_L$. 

If $A_L$ has a session $(S, G, K, c)$ then by key confirmation
someone knows the key. The first case is as above. If the attacker 
knows $K$ then by Lemma~\ref{lem:prot:resil} $G$ must have a LTKC. By
certificate verification $G$ is a GU of the local airport.
\end{proof}

\begin{proof}[Proof of Theorem~\ref{thm:resil:detect}]
Assume a TAGA process between $G_L$ and $A_L$ at $L$.

If $G_L$ has a session $(S, A, K)$ and the attacker knows
$K$ then the attack will be detected and the session 
aborted, or $A_L$ has a TAGA session $(S, G', K', ?)$ for some 
$G'$, $K'$ such that the attacker knows $K'$. In the latter
case by Lemma~\ref{lem:prot:resil} $G'$ must have a LTKC. 
By certificate verification $G'$ is a 
GU of the local airport. If the attacker does not know $K$ then the 
usual argument applies.

If $A_L$ has a TAGA session $(S, G, K, c)$ and the attacker knows
$K$ then by Lemma~\ref{lem:prot:resil} $G$ must have a LTKC. 
By certificate verification $G$ is a GU of the local airport.
Otherwise the usual argument applies.
\end{proof}

\begin{proof}[Proof of Lemma~\ref{lem:resil:pccr}]
Let $G$ be a GU with a session $(S, A, K)$ at parking slot $L$, and assume
$G$ has sent the challenge. Say $G$ obtains the correct answer.
We must show that the attacker does not know $K$ unless
$\AC(L)$ has a compromised session on $S$. 

Let $A_L = \AC(L)$.
Only $A_L$ could have known the nonce, and sent it on. Hence, $A_L$ 
has a session $(S, G', K', c)$ for some $G'$, $K'$ during which it
has responded to the challenge. 
If the attacker knows $K'$ then nothing has to be proved.
If the attacker does not know $K'$ then $A_L$ has the session with 
an honest party, who would not forward $A_L$'s reply to $G$. 
Hence $K = K'$. And hence, the attacker does not know $K$ as required.
\end{proof}

\removed{
\input{threats/threats}
\input{goals/goals}

}

\removed{
\appendices
\input{goals/app}
\input{plain/plain-app}
\input{auth/app}
\section{Relating to Section~\ref{s:plain}}
\input{auth/app}
}
\section*{Acknowledgment}
This work has been conducted within the ENABLE-S3 project that has received funding from  the ECSEL Joint Undertaking grant agreement no 692455. This joint undertaking receives support from the European Union's Horizon 2020 Research and Innovation Programme and Austria, Denmark, Germany, Finland, Czech Republic, Italy, Spain, Portugal, Poland, Ireland, Belgium, France, Netherlands, United Kingdom, Slovakia, and Norway.



\bibliographystyle{IEEEtran}
\bibliography{taga}
%


\end{document}